\documentclass{article}
\usepackage[T1]{fontenc}
\usepackage{authblk}
\usepackage{geometry}
\geometry{a4paper, text={167mm,250mm}}
\usepackage{amsfonts}
\usepackage{graphicx,amssymb,mathrsfs,amsmath,color,fancyhdr}
\usepackage{subcaption}
\usepackage{amsthm}
\usepackage{manfnt}
\usepackage{cases}
\usepackage{mathbbol}
\usepackage{bm,bbm}
\usepackage{setspace}
\usepackage{hyperref}
\hypersetup{colorlinks = True,linkcolor = blue,anchorcolor =red,citecolor = blue,filecolor = red,urlcolor = red, pdfauthor=author}
\newtheorem{theorem}{Theorem}
\newtheorem{definition}{Definition}

\newtheorem{remark}{Remark}
\newtheorem{lemma}{Lemma}

\title{Mean Field Game of High-Frequency Anticipatory Trading}
\author[1]{Xue Cheng \thanks{chengxue@pku.edu.cn}}
\author[1]{Meng Wang}
\author[1]{Ziyi Xu}
\affil[1]{\footnotesize LMEQF, Department of Financial Mathematics, \authorcr School of Mathematical Sciences, \authorcr
Peking University, Beijing 100871, China.}
\date{}
\begin{document}
\maketitle
\begin{abstract}
The interactions between a large population of high-frequency traders (HFTs) and a large trader (LT) who executes a certain amount of assets at discrete time points are studied. 
HFTs are faster in the sense that they trade continuously and predict the transactions of LT. A jump process is applied to model the transition of HFTs' attitudes towards inventories and the equilibrium is solved through the mean field game approach. When the crowd of HFTs is averse to running (ending) inventories, they first take then supply liquidity at each transaction of LT (throughout the whole execution period). Inventory-averse HFTs lower LT's costs if the market temporary impact is relatively large to the permanent one. What's more, the repeated liquidity consuming-supplying behavior of HFTs makes LT's optimal strategy close to uniform trading.
\end{abstract}

\noindent\textbf{Keywords:}  Mean field game; High-frequency trader; Large trader; Anticipatory Trading; Inventory aversion

\noindent\textbf{JEL Classification: }C70, G17


\newpage

\section{Introduction}
Institutional investors and high-frequency traders (HFTs) are active and important participants in today's financial markets. HFTs trade at a high speed and are equipped with advanced methods to process information; institutional investors split orders to finish the execution task with lower costs, but may leave footprints for HFTs to predict their future transactions. As found in Sa{\u{g}}lam (2020) \cite{sauglam2020order} and Hirschey (2021) \cite{hirschey2021high}, the interactions between institutional investors and HFTs are mainly completed through the channel of anticipatory trading. Actually, HFTs' trading pattern within the execution period of large orders, the optimal execution strategies of large traders and HFTs' influences on large traders' profits are highly concerned in both research and industry.

Motivated by this, we consider a market with a large trader (LT) and a large population of HFTs, which is approximated by a continuum of infinitesimal agents via mean field game approach. LT trades at discrete time points, while HFTs predict LT's orders and trade continuously, which shows the difference in both the trading speed and the population size. LT aims to maximize the revenue, while HFTs maximize the revenue together with the inventory costs. HFTs with different levels of inventory aversion (terminal aversion $\Gamma$ and running aversion $\phi$) are investigated and a jump process is applied to characterize the transition of these levels. That is to say, HFTs may not only be at different levels of inventory aversion, but this level may also change over time.

Two kinds of equilibria are investigated. If the strategy of LT is fixed, the game among HFTs is solved and the reached Nash equilibrium is called ``partial Nash equilibrium''; if LT also participates in the game, the game between LT and HFTs and the game among HFTs are both solved, the reached Nash equilibrium is called ``overall Nash equilibrium''.

In both equilibria, HFTs implement anticipatory trading towards LT's orders. When the crowd of HFTs is averse to ending position, they will play the role of Round-Tripper throughout the whole trading period. Round-Tripper is an anticipatory strategy defined in Xu and Cheng (2023) \cite{xu2023high}: after predicting other investors' future order flow, Round-Tripper trade in the same direction ahead of the coming order; when the order arrives, she trades in the opposite direction to supply liquidity back.
For example, if LT keeps buying, HFTs will buy in the first half of the time and sell in the second half of the time. When the crowd of HFTs is averse to running position, they will play the role of Round-Tripper at each transaction of LT. For example, if LT keeps buying, HFTs will buy before and sell after each purchase of LT. 

Consequently, the behavior of HFTs who hate both ending and running positions exhibits (1) short time-frames of establishing and liquidating positions; (2) ending the trading day in as close to a flat position as possible, which corresponed to the characteristics of HFTs mentioned in SEC (2010) \cite{sec2010}. Otherwise, if the crowd of HFTs is not sensitive to inventory, they will trade in the same direction as LT and accumulate positions.

It is interesting to find that between two transactions of LT, HFTs first trade in the opposite direction as the large order and then trade in the same direction as it, as empirically found in van Kervel and Menkveld (2019) \cite{van2019high}.

LT is benefited by inventory-averse HFTs if the temporary price impact is relatively large to the permanent one. It stands for a resilient market, most of the liquidity consumed by investors will soon recover. As a result, the majority of adverse impact caused by HFTs' same-direction trading will quickly disappear.
The liquidity provided by HFTs' opposite-direction trading dominates the remaining adverse impact and thus LT profits more.

In the overall Nash equilibrium, if the crowd of HFTs is more averse to running positions, LT's optimal strategy will be more close to uniform trading, which actually is her optimal strategy without HFTs. In this case, HFTs act as Round-Tripper and the repeated liquidity consuming-supplying behavior of HFTs smooths out LT's transactions. Therefore, the existence of anticipatory HFTs do not necessarily change the execution mode of LT. If the crowd of HFTs is more averse to terminal positions, LT will delay more shares to trade in the latter half of execution period, waiting for HFTs to offer liquidity.

The mean field game approach is applied to solve the curse of dimensionality. It is proved that both partial and overall Nash equilibrium have the $\varepsilon$-Nash property. In other words, the reached equilibria with infinitely many HFTs are good approximations to the equilibria with finitely many HFTs.

\section{Literature Review}
Mean field games (MFGs) were first introduced simultaneously and independently by Lasry and Lions (2007) \cite{lasry2007mean} and Huang, Malhamé and Caines (2006) \cite{huang2006large}. When the number of agents increases in the system, the problem of solving the Nash equilibrium becomes intractable, and thus we can focus on a representative player interacting with the distribution or average of the population. In the first MFG papers, the analytic approach was employed with two coupled partial differential equations: (1) the backward Hamilton-Jacobi-Bellman (HJB) equation takes care of the optimization part and represents the value function of the representative player; (2) the forward Kolmogorov-Fokker-Planck (KFP) equation represents the evolution of the population distribution. Later on, Carmona and Delarue (2013) \cite{carmona2013probabilistic} introduces the probabilistic approach for solving mean field games, which calls for the solution of systems of forward-backward stochastic differential equations (FBSDE) of Mckean-Vlasov type. 

The assumptions of identical and insignificant agents and symmetric interactions are required to focus on one representative agent, however, in reality there may exist heterogeneous players that affect the system disproportionally. For example, Huang (2010) \cite{huang2010large} studies linear-quadratic-Gaussian (LQG) games with a major player and a large number of minor players. Nourian and Caines (2013) \cite{nourian2013epsilon} studies a dynamic game with a major agent and a population of minor agents which are coupled via both their individual nonlinear stochastic dynamics and their individual finite time horizon nonlinear cost functions, the interactions between LT and HFTs are also achieved in these two ways. 

While MFGs are most often studied in settings with a continuous state space and deterministic or diffusive dynamics, there are also works on MFGs with finite state spaces, including Gomes, Mohr and Souza (2013) \cite{gomes2013continuous} and Belak, Hoffmann and Seifried (2021) \cite{belak2021continuous}. Moreover, the individual players may not only interact via their states but also via their controls, which is referred to as \textit{extended mean field games}, see Chapter I.4.6 in \cite{carmona2018probabilistic} as an example, and this is particularly relevant in finance and economics. This paper considers all the aspects mentioned above, with LT as the major agent and HFTs as the population of minor agents, where the states of HFTs include both the inventory level in the continuous space and the inventory averse level in the finite space, and they interacts through both the states (inventory levels) and the controls (trading speeds). To the best of our knowledge, this is the first paper considering both continuous-time and discrete-time games in the same framework, especially in the application of finance.

This paper contributes to literature on optimal liquidation problem in the mean field game framework. Two most related works are
Cardaliaguet and Lehalle (2018) \cite{cardaliaguet2018mean} and Jaimungal and Nourian (2019) \cite{huang2019mean}. In \cite{cardaliaguet2018mean}, traders execute orders in the presence of other similar market participants. HFTs in this paper are those ``similar participants'', the difference is that (1) there is also a large trader and HFTs will implement corresponding strategies when predicting the large trader's order; (2) the inventory aversion of HFTs in \cite{cardaliaguet2018mean} does not change over time. \cite{huang2019mean} considers the game between a major agent who is liquidating a large number of shares,
and a number of HFTs who detect and trade against the liquidator. The difference is that HFTs in this paper is faster than the large trader in the sense that the large trader can only trade at certain time points but HFTs can lay out continuously. However, in \cite{huang2019mean}, the major agent and HFTs both trade continuously and in a same time period. Besides, Casgrain and Jaimungal \cite{casgrain2020mean} (2020) solves a mean-filed game with heterogeneous investors, who optimize different trading objectives and have different beliefs about price dynamics.

This paper contributes to literature on interactions between HFT and large trader.
 To our knowledge, this paper is the first to model the anticipatory trading of infinitely many fast HFTs towards a slow large trader, with the transition of HFTs' inventory aversion.
Cont, Micheli and Neuman (2023) \cite{cont2022fast} models the Stackelberg game between a slow institutional investor and an HFT. 
HFT exploits price information more frequently and is subject to
periodic inventory constraints. The difference is that HFTs in this paper have higher trading speed than the large trader and  a 
 market with multiple HFTs is studied, solving the game in the mean field game approach. Li (2018) \cite{li2018high}, Yang and Zhu (2020) \cite{yang2020back}, Xu and Cheng (2023) \cite{xu2023high} and Xu and Cheng (2023) \cite{xu2023large} model the anticipatory trading in extended Kyle models and in discrete time. Ro{\c{s}}u (2019) \cite{rocsu2019fast} studies fast traders who use immediate information and slow traders who use lagged one. So the former's order flow predicts the latter's. 
 Empirical works include but is not limited to van Kervel ad Menkveld (2019) \cite{van2019high}, Korajczyk and Murphy (2019) \cite{korajczyk2019high}, Sa{\u{g}}lam (2020) \cite{sauglam2020order} and Hirschey (2021) \cite{hirschey2021high}.

\section{The Model}
We work on the filtered probability space $(\Omega, \mathcal{F},\{\mathcal{F}_t\}_{t\in[0,T]}, P)$. During a finite time horizon $[0,T],$ we consider a market with a large trader (LT) and a large population of high-frequency traders (HFTs) trading the same asset. We formulate the game with both finitely many HFTs and infinitely many HFTs, where in the latter, the mean field game approach is applied. The Nash equilibrium in the mean field game can be solved explicitly, and we demonstrate that it provides an approximate Nash equilibrium to the former one when the number of HFTs $M$ goes to infinity.

\textbf{Investors' trading activities.} LT has to liquidate the position $\xi_0\in\mathbbm{R}$ at some specific time points 
$$0=t_0< t_1<t_2<...<t_K<t_{K+1}=T.$$ 
The trading strategy of LT is denoted by $\{\xi_k\}_{k=1}^K$, where $\xi_k\in\mathbbm{R}$ represents the quantity traded by LT at $t_k.$

HFTs trade continuously. Their initial inventories are independently and identically distributed (i.i.d.) random variables $\{X_j(0)\}_{j\in\mathbbm{N}^+}$.
For the $j$-th HFT in the population, let $\mathcal{F}_t-$adapted processes $X_j(t)$ and $v_j(t)$ denote her position and trading speed at time $t\in[0,T]$,
\begin{equation*}
    dX_j(t)=v_j(t)dt.
\end{equation*}

\textbf{HFTs' inventory aversion.} Different HFTs may have different inventory-averse levels and these levels form a finite set $\mathbb{S}$ with $N$ elements. Each element $i\in\mathbb{S}$ corresponds to a pair of non-negative real numbers $(\Gamma(i), \phi(i))$ representing the terminal and running inventory aversion, respectively. For the $j$-th HFT, let $\mathcal{F}_t-$adapted process $Y_j(t)\in\mathbb{S}$ denote the level at time $t$. We assume that $\{Y_j(t)\}_{t\in[0,T]}^{j\in\mathbbm{N}^+}$ (1) are independent processes; (2) are identically distributed at $t=0$; (3) have same and constant transition rate matrix $\boldsymbol{Q}\in\mathbbm{R}^{N\times N}$. As a result, $\{Y_j(t)\}_{t\in[0,T]}^{j\in\mathbbm{N}^+}$ are identically distributed processes with
$$p_i(t)=P(Y_1(t)=i),\ i\in\mathbb{S},\ t\in[0,T].$$ 
In the following, let $Q^{ij}$ denote the element of $\boldsymbol{Q}$ at the $i$-th row and the $j$-th column. And we further assume that the choice of initial distribution $\{p_i(0)\}_{i\in\mathbb{S}}$ and transition rate matrix $\boldsymbol{Q}$ guarantee that $$p_i(t)>0,i\in\mathbb{S},t\in[0,T].$$

\textbf{Market information.} A standard one-dimensional Brownian motion $\{W(t)\}_{t\in[0,T]}$ drives the price dynamic.

\subsection{The game with finitely many HFTs}\label{finite_HFT_formulation}
Assume that there are $M\in\mathbbm{N}^+$ HFTs in the market and index $j\in\{1,...,M\}$.
Let $P^M(t)$ denote the fair price at time $t$, with permanent price impacts from LT's trading and HFTs' aggregate trading:
\begin{equation*}
dP^M(t)=\sigma dW(t)+\sum_{j=1}^M \gamma^{H,M}v_j(t)dt+\gamma\sum_{k=1}^K\xi_k 1_{[t=t_k]},
\end{equation*}
where $\gamma^{H,M}>0$ and $\gamma>0$ are the permanent price impact coefficients for HFTs and LT, respectively; $\sigma>0$ is the volatility of the asset.

As $M$ increases, the overall number of investors increases, so that each of them will have a milder impact on the market. And in order to get rid of explosive impact, we add a non-explosive condition $\gamma^{H,M}=O(\frac{1}{M})$. It is assumed that
\begin{equation*}
\gamma^{H,M}=\frac{1}{M}\gamma^H.
\end{equation*}
Define
\begin{equation*}
\overline{v}^M(t)=\frac{1}{M}\sum_{j=1}^M v_j(t),
\end{equation*}
then the fair price $P^M(t)$ follows:
\begin{equation*}
dP^M(t)=\sigma dW(t)+\gamma^H\overline{v}^M(t)dt+\gamma\sum_{k=1}^K\xi_k 1_{[t=t_k]}.
\end{equation*}

The transaction price also includes a temporary impact from other participants in the market and the trader's own trading fees. Let $\Hat{P}^M(t)$ and $\Tilde{P}^M_j(t)$ denote the actual transaction price of LT and the $j$-th HFT respectively,
\begin{equation*}
\begin{aligned}
&\Hat{P}^M(t_k) = P^M(t_k) + \lambda^H\overline{v}^M(t_k) + \lambda \xi_k + \eta_0 \xi_k,\ k=1,...,K;\\
&\Tilde{P}_j^M(t)=P^M(t)+\lambda^H\overline{v}^M(t)+\lambda\sum_{k=1}^K\xi_k 1_{[t=t_k]}+\eta v_j(t),
\end{aligned}
\end{equation*}
where $\lambda^H>0$ and $\lambda>0$ are the temporary price impact coefficients for HFTs and LT, $\eta_0>0$ and $\eta>0$ represent the trading fees.

LT maximizes her expected revenue of liquidation, with an objective function
\begin{equation}
\label{LTobj-finite}
\begin{aligned}
J_{LT}^M(\xi)&=\mathbbm{E}\left[\sum_{k=1}^K (-\xi_k)\Hat{P}^M(t_k) \right]\\
&=\mathbbm{E}\left\{\sum_{k=1}^K(-\xi_k)\left[P^M(0)+\sigma W(t_k) + \gamma\sum_{j=1}^k\xi_j+\gamma^H \int_0^{t_k} \overline{v}^M(t) dt +\lambda^H\overline{v}^M(t_k)+(\lambda+\eta_0)\xi_k \right] \right\}.
\end{aligned}
\end{equation}

The LT's trading strategy is supposed to be deterministic, since LT is slow not only in terms of the trading speed but also the frequency of exploiting the market information. The set of admissible controls for LT is
\begin{equation}
\label{LT-admissible-finite}
\mathcal{A}_{LT}^M=\left\{\{\xi_k\}_{k=1}^K: \xi_k\in\mathbbm{R} \text{ is deterministic and } \xi_0+\sum_{k=1}^K\xi_k=0\right\}.
\end{equation} 

HFT is concerned with the revenue and the inventory risk as well. For simplicity, we omit the subscript $j$ below. For each HFT, the objective function is
\begin{equation}
\label{HFTobj-finite}
J_{HFT}^M (v)=\mathbbm{E}\left[\left. X(T)P^M(T)-\int_0^{T} v(t)\Tilde{P}^M(t) dt-\Gamma(Y(T)) X(T)^2-\int_0^T\phi(Y(t)) X(t)^2dt \right\vert \mathcal{F}_0 \right].
\end{equation}
The first term measures the marked-to-market value of HFT's final position; the second term is the total trading revenue; the third and last terms refer to the  penalties for final and running positions, where $\Gamma(\cdot)\geq0$ and $\phi(\cdot)\geq0$ are functions of $Y(T)$ and $Y(t)$, respectively. 

The set of admissible controls for HFT is 
\begin{equation}
\label{HFT-admissible-finite}
\mathcal{A}_{HFT}^M=\left\{v: \mathcal{F}_t\text {-progressively measurable, c\`{a}dl\`{a}g and }  \mathbbm{E} \int_0^T v(t)^2 d t<\infty\right\}.
\end{equation}
The left-limit property is enabled by the condition of square integrability; the right-continuous property reflects HFT's timely reaction to LT's order and the jump of inventory aversion. 

\subsection{The mean field game}
Due to the fact that different HFTs are coupled through the price dynamic, the solution becomes intractable when the number of HFTs becomes large. To address the curse of dimensionality, we adopt the idea from mean field game (MFG) theory, where HFTs' 
``aggregate'' trading activities affect the market through both permanent and temporary price impact, while each individual HFT only affects her own transaction price through trading fees.

We define $\{E_i(t)\}_{t\in[0,T]}^{i\in\mathbb{S}}$ and $\{\mu_i(t)\}_{t\in[0,T]}^{i\in\mathbb{S}}$ as deterministic processes representing the average inventory and the average trading speed of the HFT population at inventory-averse level $i\in\mathbb{S}$. Additionally, we define the mean field of the whole HFT population as
\begin{equation*}
\begin{aligned}
E(t) = \sum_{i\in\mathbb{S}}p_i(t)E_i(t),\\
\mu(t) = \sum_{i\in\mathbb{S}}p_i(t)\mu_i(t).
\end{aligned}
\end{equation*}

Inheriting the definitions of parameters from Section \ref{finite_HFT_formulation}, let $P(t)$ denote the fair price at time $t$, the permanent price impact is modeled as below:
\begin{equation}
\label{MFGprice}
dP(t)=\sigma dW(t)+\gamma^H\mu(t)dt+\gamma\sum_{k=1}^K\xi_k 1_{[t=t_k]},
\end{equation}

By adding the temporary impact and a trader's own trading fees, let $\Hat{P}(t)$ and $\Tilde{P}(t)$ denote the actual transaction price of LT and an HFT respectively. By omitting the subscript $j$, we denote the inventory, trading speed, and inventory-averse level of the HFT by $X(t)$, $v(t)$, $Y(t)$, respectively. Then
\begin{equation*}
\begin{aligned}
&\Hat{P}(t_k) = P(t_k)  + \lambda^H\mu(t_k)+ \lambda \xi_k + \eta_0 \xi_k,\ k=1,...,K;\\
&\Tilde{P}(t)=P(t)+\lambda^H\mu(t)+\lambda\sum_{k=1}^K\xi_k 1_{[t=t_k]}+\eta v(t).
\end{aligned}
\end{equation*}

The LT's objective function is
\begin{equation}
\label{LTobj}
\begin{aligned}
J_{LT}(\xi)&=\mathbbm{E}\left[\sum_{k=1}^K (-\xi_k)\Hat{P}(t_k) \right]\\
&=\mathbbm{E}\left\{\sum_{k=1}^K(-\xi_k)\left[P(0)+\sigma W(t_k) + \gamma\sum_{j=1}^k\xi_j+\gamma^H \int_0^{t_k} \mu(t) dt +\lambda^H\mu(t_k)+(\lambda+\eta_0)\xi_k \right] \right\},
\end{aligned}
\end{equation}
with the set of admissible controls
\begin{equation}
\label{LT-admissible}
\mathcal{A}_{LT}=\left\{\{\xi_k\}_{k=1}^K: \xi_k\in\mathbbm{R} \text{ is deterministic and }\xi_0+\sum_{k=1}^K\xi_k=0\right\}.
\end{equation}

The objective function of HFT is
\begin{equation}
\label{HFTobj}
J_{HFT} (v)=\mathbbm{E}\left[\left. X(T)P(T)-\int_0^{T} v(t)\Tilde{P}(t) dt-\Gamma(Y(T)) X(T)^2-\int_0^T\phi(Y(t)) X(t)^2dt \right\vert \mathcal{F}_0\right],
\end{equation}
with the set of admissible controls
\begin{equation}
\label{HFT-admissible}
\mathcal{A}_{HFT}=\left\{v: \mathcal{F}_t\text {-progressively measurable, c\`{a}dl\`{a}g and }  \mathbbm{E} \int_0^T v(t)^2 d t<\infty\right\}.
\end{equation}
As a result, the average trading speeds
$\{\mu_i(t)\}_{t\in[0,T]}^{i\in\mathbb{S}}$ are also c\`{a}dl\`{a}g.

As we will see in Section \ref{sec_partial_equilibrium}, given $\xi$ and $\mu$, the optimal control $v$ is in a Markovian feedback sense, i.e.,
\begin{equation*}
v(t)=v(t,X(t),Y(t);\xi,\mu),
\end{equation*}
where $v(\cdot,\cdot,\cdot;\xi,\mu)$ is a deterministic function on $t$, $X(t)$ and $Y(t)$.


Two kinds of mean field Nash equilibria are defined.
\begin{definition}[Mean field partial Nash equilibrium] Given LT's strategy $\{\xi_k\}_{k=1}^K$, we call $(v^*,\{E_i^*\}_{i\in\mathbb{S}},\{\mu_i^*\}_{i\in\mathbb{S}})$ a partial Nash equilibrium if the following conditions are satisfied:\\
(i) HFT's optimizing condition: given $\mu^*$, $v^*$ maximizes \eqref{HFTobj} over $\mathcal{A}_{HFT}$,
\begin{equation*}
J_{HFT} (v^*;\xi,\mu^*) =\max_{v\in \mathcal{A}_{HFT}} J_{HFT} (v;\xi,\mu^*);
\end{equation*}
(ii) Fixed-point condition: HFT's optimal inventory process $\{X^*(t)\}_{t\in[0,T]}$ and optimal control $v^*(t)=v^*(t,X^*(t),Y(t);\xi,\mu^*)$ actually form the mean field $\{E_i^*\}_{i\in\mathbb{S}}$ and $\{\mu_i^*\}_{i\in\mathbb{S}}$,
\begin{equation*}
\begin{aligned}
\mathbbm{E}\left[X^*(t) \vert Y(t)=i  \right] &= E_i^*(t),\\
\mathbbm{E}\left[v^*(t,X^*(t),Y(t);\xi,\mu^*)\vert Y(t)=i  \right] &= \mu_i^*(t).
\end{aligned}
\end{equation*}
\end{definition}

\begin{definition}[Mean field overall Nash equilibrium] We call $(\{\xi_k^*\}_{k=1}^K, v^*,\{E_i^*\}_{i\in\mathbb{S}},\{\mu_i^*\}_{i\in\mathbb{S}})$ an overall Nash equilibrium if:\\
(i) given $\{\mu_i^*\}_{i\in\mathbb{S}}$, $\{\xi_k^*\}_{k=1}^K$ maximizes \eqref{LTobj} over $\mathcal{A}_{LT}$,
\begin{equation*}
J_{LT}(\xi^*;\mu^*) = \max_{\xi\in \mathcal{A}_{LT}}J_{LT}(\xi;\mu^*);
\end{equation*}
(ii) given $\{\xi_k^*\}_{k=1}^K$, mean field partial Nash equilibrium is reached.
\end{definition}

\begin{remark}
The assumption that 
$\{E_i^*\}_{i\in\mathbb{S}}$ and 
$\{\mu_i^*\}_{i\in\mathbb{S}}$ are deterministic is common in mean field game literature, as in Cardaliaguet and Lehalle (2018) \cite{cardaliaguet2018mean}, Lacker and Zariphopoulou (2019) \cite{lacker2019mean} and Lacker and Soret (2020) \cite{lacker2020many}.    
\end{remark}

It will be demonstrated that the solution to the mean field game is a good approximation to the game with finitely many HFTs, and we define the approximate Nash equilibria as below:
\begin{definition}[$M$-HFT partial $\varepsilon$-Nash equilibrium] Fix $\varepsilon>0$. Given LT's strategy $\{\xi_k\}_{k=1}^K$, we call $(v_1^*,...,v_M^*)$ a partial $\varepsilon$-Nash equilibrium if for each $j\in\{1,...,M\}$, given $v_{-j}^*=\{v_k^*\}_{k=1,k\neq j}^M$,
\begin{equation*}
J_{HFT}^M (v_j^*;\xi,v_{-j}^*) > J_{HFT}^M (v_j;\xi,v_{-j}^*)-\varepsilon,\ \text{a.s.},\ \forall v_j \in \mathcal{A}_{HFT}^M.
\end{equation*}
\end{definition}

\begin{definition}[$M$-HFT overall $\varepsilon$-Nash equilibrium] Fix $\varepsilon>0$. We call $(\{\xi_k^*\}_{k=1}^K, v_1^*,...,v_M^*)$ an overall $\varepsilon$-Nash equilibrium if:\\
(i) given $(v_1^*,...,v_M^*)$,
\begin{equation*}
J_{LT}^M(\xi^*;v_1^*,...,v_M^*) > J_{LT}^M(\xi;v_1^*,...,v_M^*) - \varepsilon,\, \forall \xi \in \mathcal{A}_{LT}^M;
\end{equation*}
(ii) given $\{\xi_k^*\}_{k=1}^K$, $M$-HFT partial $\varepsilon$-Nash equilibrium is reached.
\end{definition}

It will be shown that for any $\varepsilon>0$, when $M$ is large enough, mean field partial and overall equilibrium satisfy $M$-HFT partial and overall $\varepsilon$-Nash equilibrium, respectively.

\section{Equilibrium}
We solve the mean field game first and show the $\varepsilon$-Nash equilibrium property.

\subsection{Partial Nash equilibrium}\label{sec_partial_equilibrium}
Define the value function $H$ as
\begin{equation*}
 \begin{aligned}
H(t,x,P,i)=\max_{v\in\mathcal{A}_{HFT}}\mathbbm{E}\Bigg[ &X(T)P(T)-\int_t^{T} v(s)\Tilde{P}(s) ds-\Gamma(Y(T)) X(T)^2\\
&-\int_t^T\phi(Y(s))X(s)^2ds\Bigg\vert X(t)=x,P(t)=P,Y(t)=i\Bigg].
 \end{aligned}   
\end{equation*}
By It\^{o}'s lemma and price dynamic \eqref{MFGprice},
\begin{equation*}
\begin{aligned}
H(t,x,P,i)=Px+\max_{v\in\mathcal{A}_{HFT}}\mathbbm{E}\Bigg[&\gamma\sum_{k=1}^K\xi_kX(t_k)1_{[t<t_k]}+\int_t^{T}\gamma^H\mu(s) X(s)ds-\int_t^Tv(s)(\lambda^H\mu(s)+\eta v(s))ds\\
 &-\Gamma(Y(T)) X(T)^2-\int_t^T\phi(Y(s))X(s)^2ds\Bigg\vert X(t)=x,Y(t)=i\Bigg].\\
\end{aligned}
\end{equation*}
Define
\begin{equation}
\label{def-h}
\begin{aligned}
h(t,x,i)=\max_{v\in\mathcal{A}_{HFT}}\mathbbm{E}&\Bigg[\gamma\sum_{k=1}^K\xi_kX(t_k)1_{[t<t_k]}+\int_t^{T}\gamma^H\mu(s) X(s)ds-\int_t^Tv(s)(\lambda^H\mu(s)+\eta v(s))ds\\
&-\Gamma(Y(T)) X(T)^2-\int_t^T\phi(Y(s))X(s)^2ds\Bigg\vert X(t)=x,Y(t)=i\Bigg],
\end{aligned}
\end{equation}
we have $H(t,x,P,i)=Px+h(t,x,i).$

Compared with the traditional continuous-time optimization, the existence of LT introduces ``discontinuities'' in the price dynamics. Special handling is required at times $\{t_k\}_{k=1}^K$ when LT engages in trading.

\textbf{HFT's individual optimization.} Within each time interval $t\in[t_k,t_{k+1}),k=0,1,...,K,$ assume $Y(t)=i\in\mathbb{S},$
\begin{equation*}
\begin{aligned}
h(t,X(t),i)\approx \max_{v}\Big \{&\gamma^H\mu(t)X(t)\Delta t-[\lambda^H\mu(t)+\eta v]v\Delta t-\phi(i)X(t)^2\Delta t \\
& +\mathbbm{E}[h(t+\Delta t,X(t+\Delta t),Y(t+\Delta t))|\mathcal{F}_t] \Big\}\\
\approx \max_{v}\Big\{&\gamma^H\mu(t)X(t)\Delta t-[\lambda^H\mu(t)+\eta v]v\Delta t-\phi(i)X(t)^2\Delta t \\
&+h(t,X(t),i)+\frac{\partial h}{\partial
 t}\Delta t+\frac{\partial h}{\partial x}v\Delta t+\sum_{j}Q^{ij}h(t,X(t),j)\Delta t \Big\}.\\
\end{aligned}
\end{equation*}
It implies that
\begin{equation*}
\frac{\partial h(t,x,i)}{\partial t}+\gamma^H\mu(t) x-\phi(i)x^2+\sum_j Q^{ij}h(t,x,j)+\max_v\left\{-\eta v^2+v\left[\frac{\partial h(t,x,i)}{\partial x}-\lambda^H\mu(t)\right]\right\}=0.
\end{equation*}
According to the definition of $h$ in \eqref{def-h}, we additionally have following boundary conditions:
\begin{equation*}
\begin{cases}
&h(t_{k-},x,i)=h(t_k,x,i)+\gamma\xi_kx,\ k=1,...,K,\\
&h(T,x,i)=-\Gamma(i)x^2.    
\end{cases}
\end{equation*}
Denote $h_i(t,x)=h(t,x,i),$ for each each $i\in\mathbb{S}$, the complete \textbf{HJB equation} follows:
\begin{equation}
\label{HJB}
\begin{cases}
&\frac{\partial h_i}{\partial t}+\gamma^H\mu(t) x-\phi(i)x^2+\sum_j Q^{ij}h_j+\frac{1}{4\eta}\left[\frac{\partial h_i}{\partial x}-\lambda^H\mu(t)\right]^2=0,\ t\in[t_k,t_{k+1}),\ k=1,...,K,\\
&h_i(t_{k-},x)=h_i(t_k,x)+\gamma\xi_kx,\ k=1,...,K,\\
&h_i(T,x)=-\Gamma(i)x^2.    
\end{cases}
\end{equation}
The optimal feedback control follows:
\begin{equation*}
 v(t,x,i)=\frac{1}{2\eta}\left[\frac{\partial h_i}{\partial x}-\lambda^H\mu(t)\right],\ t\in[0,T].   
\end{equation*}

Assume that $h_i(t,x)=h_i^0(t)+h_i^1(t)X+h_i^2(t)x^2,$ substitute it into \eqref{HJB}, the Riccati equations of $\{h_i^0,h_i^1,h_i^2\}_{i\in\mathbb{S}}$ follow
\begin{equation}
\label{hs}
\begin{cases}
&\frac{dh_i^2}{dt}+\frac{(h_i^2)^2}{\eta}-\phi(i)+\sum_jQ^{ij}h_j^2=0,\ \\
&\frac{dh_i^1}{dt}+\frac{h_i^2(h_i^1-\lambda^H\mu(t))}{\eta}+\gamma^H\mu(t)+\sum_jQ^{ij}h_j^1=0,\\
&\frac{dh_i^0}{dt}+\frac{(h_i^1-\lambda^H\mu(t))^2}{4\eta}+\sum_jQ^{ij}h_j^0=0,\
\end{cases}
t\in[t_k,t_{k+1}),\ k=1,...,K,
\end{equation}
with boundary conditions
\begin{equation}
\label{hs-boundary}
\begin{cases}
& h_i^2(t_{k-})=h_i^2(t_k),h_i^1(t_{k-})=h_i^1(t_k)+\gamma\xi_k,h_i^0(t_{k-})=h_i^0(t_k),\ k=1,...,K,\\
&h_i^2(T)=-\Gamma(i),h_i^1(T)=h_i^0(T)=0. \\
\end{cases}
\end{equation}
The optimal feedback control follows:
\begin{equation}
\label{optcontrol2}
 v(t,x,i)=\frac{1}{2\eta}\left[h_i^1+2h_i^2x-\lambda^H\mu(t)\right], t\in[0,T]. 
\end{equation}

In Theorem \ref{thm-verification}, the verification theorem is proved.
\begin{theorem}
\label{thm-verification}
The solution of \eqref{HJB} satisfies
\begin{equation*}
\begin{aligned}
h_i(t,x) = \max_{v\in\mathcal{A}_{HFT}}\mathbbm{E}&\Bigg[\gamma\sum_{k=1}^K\xi_kX(t_k)1_{[t<t_k]}+\int_t^{T}\gamma^H\mu(s) X(s)ds-\int_t^Tv(s)(\lambda^H\mu(s)+\eta v(s))ds\\
&-\Gamma(Y(T)) X(T)^2-\int_t^T\phi(Y(s))X(s)^2ds\Bigg\vert X(t)=x,Y(t)=i\Bigg].
\end{aligned}
\end{equation*}
Consequently, $H(t,x,P,i)=Px+h_i(t,x)$ is actually the value function that satisfies
\begin{equation*}
\begin{aligned}
H(t,x,P,i)=\max_{v\in\mathcal{A}_{HFT}}\mathbbm{E}\Bigg[ &X(T)P(T)-\int_t^{T} v(s)\Tilde{P}(s) ds-\Gamma(Y(T)) X(T)^2\\
&-\int_t^T\phi(Y(s))X(s)^2ds\Bigg\vert X(t)=x,P(t)=P,Y(t)=i\Bigg].
\end{aligned}
\end{equation*}
\end{theorem}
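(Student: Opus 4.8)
The plan is a classical stochastic-control verification, complicated by two distinct sources of discontinuity: the jumps of the finite-state aversion process $Y$, with generator $\boldsymbol{Q}$, and the deterministic trading dates $t_1,\dots,t_K$ of LT, which enter through the jump conditions in \eqref{hs-boundary}. I will take the explicit candidate $h_i(t,x)=h_i^0(t)+h_i^1(t)x+h_i^2(t)x^2$ assembled from the Riccati system \eqref{hs}--\eqref{hs-boundary} and show it equals the conditional expectation on the right of \eqref{def-h}; the identity $H(t,x,P,i)=Px+h_i(t,x)$ then follows by making rigorous the It\^{o} computation for $P(s)X(s)$ sketched just before the theorem, now for an arbitrary admissible $v$ — the only subtle point there being that $\sigma\int_t^T X(s)\,dW(s)$ is a genuine martingale, which holds because $\mathbbm{E}\int_t^T X(s)^2\,ds<\infty$ for every $v\in\mathcal{A}_{HFT}$ (see Step~3).

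\emph{Step 1 (the candidate is well defined).} First I would verify that \eqref{hs}--\eqref{hs-boundary} has, on each $[t_k,t_{k+1})$, a $C^1$ solution that is bounded and c\`{a}dl\`{a}g on $[0,T]$. The only nonlinearity is in the $\{h_i^2\}$ subsystem, which carries no jump at the $t_k$ and has non-positive terminal data $-\Gamma(i)$; an invariant-region argument gives $h_i^2(t)\in[-C,0]$ with $C$ depending only on $\eta$ and $\max_i\phi(i)$: on the face $\{h_{i_0}^2=-C\}$ the vector field points inward because there $\sum_j Q^{i_0 j}h_j^2\ge 0$ (off-diagonal entries of $\boldsymbol{Q}$ are $\ge 0$ and its rows sum to zero) while $(h_{i_0}^2)^2/\eta-\phi(i_0)>0$, so no finite-time blow-up occurs. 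Since a c\`{a}dl\`{a}g function on a compact interval is bounded, $\mu$ is bounded; hence the $\{h_i^1\}$ subsystem is linear with bounded coefficients and jumps $\gamma\xi_k$, and $\{h_i^0\}$ follows by one more quadrature; all three families are bounded.

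\emph{Step 2 (the verification inequality).} Fix $(t,x,i)$ and any $v\in\mathcal{A}_{HFT}$, with $X(s)=x+\int_t^s v(u)\,du$. Applying It\^{o}'s (Dynkin's) formula to $s\mapsto h_{Y(s)}(s,X(s))$ on each subinterval $[t_k\vee t,t_{k+1})$ and summing, one collects: at each $t_k\in(t,T)$ the prescribed jump $h_{Y(t_k)}(t_k,X(t_k))-h_{Y(t_k)}(t_{k-},X(t_k))=-\gamma\xi_k X(t_k)$ from \eqref{hs-boundary} (note $Y$ a.s.\ does not jump at a fixed time, so the two kinds of discontinuity never coincide); and, from the jumps of $Y$, the compensator $\sum_j Q^{Y(s),j}h_j(s,X(s))\,ds$ — which is exactly the HJB coupling term, since the rows of $\boldsymbol{Q}$ sum to zero — plus a local martingale $M$. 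On each interval the remaining drift is $\partial_t h_{Y(s)}+v(s)\,\partial_x h_{Y(s)}+\sum_j Q^{Y(s),j}h_j$, and by \eqref{HJB} together with the elementary bound $\tfrac1{4\eta}a^2\ge -\eta v^2+va$ for all real $a,v$ (equality iff $v=\tfrac{a}{2\eta}$), applied with $a=\partial_x h_{Y(s)}(s,X(s))-\lambda^H\mu(s)$, this drift is $\le -\big[\gamma^H\mu(s)X(s)-\lambda^H\mu(s)v(s)-\eta v(s)^2-\phi(Y(s))X(s)^2\big]$, i.e.\ minus the running reward in \eqref{def-h}, with equality exactly along the feedback $v(s)=\tfrac1{2\eta}\big(h_{Y(s)}^1(s)+2h_{Y(s)}^2(s)X(s)-\lambda^H\mu(s)\big)$ of \eqref{optcontrol2}. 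Stopping at $\tau_n=\inf\{s\ge t:|X(s)|>n\}\wedge T$ to kill $M$, taking $\mathbbm{E}[\,\cdot\mid X(t)=x,Y(t)=i\,]$, using $h_i(T,\cdot)=-\Gamma(i)(\cdot)^2$, and letting $n\to\infty$ gives $h_i(t,x)\ge$ the conditional expectation in \eqref{def-h} evaluated at $v$, with equality for the feedback above; the limit is legitimate because $X$ is pathwise continuous, hence bounded on $[0,T]$, so $\tau_n\uparrow T$ a.s., and because $h$ has quadratic growth with bounded coefficients while $\mathbbm{E}\big[\sup_{s\le T}X(s)^2\big]<\infty$ (from $\sup_s X(s)^2\le 2x^2+2T\int_0^T v^2$ and $v\in\mathcal{A}_{HFT}$), so dominated convergence applies to every term and the drift, jump, and reward integrands are all integrable by Cauchy--Schwarz.

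\emph{Step 3 (attainment) and the main obstacle.} It remains to see that this feedback control is admissible, so that the equality case is realized within $\mathcal{A}_{HFT}$. Along each path of $Y$, the closed loop $\tfrac{d}{ds}X^*(s)=\tfrac1{2\eta}\big(h_{Y(s)}^1(s)+2h_{Y(s)}^2(s)X^*(s)-\lambda^H\mu(s)\big)$, $X^*(t)=x$, is a linear ODE with bounded c\`{a}dl\`{a}g coefficients, so it has a unique continuous solution with $\mathbbm{E}\big[\sup_{s\le T}X^*(s)^2\big]<\infty$ by Gr\"{o}nwall; hence $v^*(s)=\tfrac1{2\eta}\big(h_{Y(s)}^1(s)+2h_{Y(s)}^2(s)X^*(s)-\lambda^H\mu(s)\big)$ is $\mathcal{F}_s$-progressively measurable, c\`{a}dl\`{a}g, and square-integrable, i.e.\ $v^*\in\mathcal{A}_{HFT}$. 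Therefore $h_i(t,x)$ equals the conditional expectation in \eqref{def-h} at $v^*$, hence the supremum, and $H(t,x,P,i)=Px+h_i(t,x)$ then gives the second assertion. I expect the main obstacle to be Step~2: one must keep the two kinds of discontinuity straight simultaneously in the It\^{o}/Dynkin expansion — the random $Y$-jumps, which must reproduce exactly $\sum_j Q^{ij}h_j$ and leave only a local martingale, and the scheduled jumps at $\{t_k\}$, whose boundary conditions in \eqref{hs-boundary} are designed precisely so that $\gamma\sum_k\xi_k X(t_k)1_{[t<t_k]}$ in \eqref{def-h} telescopes correctly — while keeping the integrability honest (only $\mathbbm{E}\int_0^T v^2\,ds<\infty$ is available, not a second-moment bound on $\int_0^T v^2\,ds$), which is what forces the localization-plus-dominated-convergence route rather than a single estimate. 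The one place where genuine analysis, rather than bookkeeping, enters is the non-explosion of the coupled Riccati system in Step~1.
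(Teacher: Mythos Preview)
Your proof is correct and follows the same core verification strategy as the paper: apply an It\^{o}/Dynkin expansion to $h_{Y(s)}(s,X(s))$, use the HJB equation as an inequality to show $h_i(t,x)$ dominates the reward functional for every admissible $v$, and then observe that the feedback control \eqref{optcontrol2} turns the inequality into an equality. The treatment of the two discontinuity sources---the compensator $\sum_j Q^{ij}h_j$ from the $Y$-jumps and the prescribed jumps $-\gamma\xi_k X(t_k)$ at the trading dates---matches the paper's.

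There are, however, some differences in execution worth flagging. The paper proceeds by backward induction over the intervals $[t_K,T],[t_{K-1},t_K),\dots$, establishing the value-function identity on $[t_K,T]$ first and then stitching in the jump conditions one at a time; you instead sum over all subintervals at once and let the jump terms telescope globally. More substantively, the paper disposes of the local-martingale issue by asserting that for $v\in\mathcal{A}_{HFT}$ one has $P(X(t)\in[M_1,M_2])=1$ and hence $h$ is bounded along trajectories; your route via localization at $\tau_n$ plus dominated convergence (using $\sup_s X(s)^2\le 2x^2+2T\int_0^T v^2$) is the more careful way to handle this, since square-integrability of $v$ does not by itself give a uniform-in-$\omega$ bound on $X$. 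Finally, you supply two ingredients the paper leaves implicit: global existence of the coupled Riccati system \eqref{hs} via an invariant-region argument for $\{h_i^2\}$, and the admissibility check $v^*\in\mathcal{A}_{HFT}$. Both are needed for a complete proof, so your version is in this sense more self-contained.
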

\begin{proof}
See \ref{proof-thm-verification} of \nameref{appendix}.
\end{proof}

Solving the coupled ODEs in \eqref{hs} is rather complicated. Instead of studying the behavior of a single HFT, we turn to the evolution of the average position $\{E_i\}_{i\in\mathbb{S}}$ and $E$ of the HFT population, which does not require explicit solutions of \eqref{hs}.
To this end, we next consider the forward equation for the distribution of $Y$ and the distribution of $X$ conditioned on $Y$.

\textbf{Forward equation.} The Kolmogorov forward equation for the jump process $Y$ during $[0,T]$ is
\begin{equation*}
\begin{cases}
&\frac{d}{dt}p_i(t)=\sum_j p_j(t)Q^{ji},\\
&p_i(0)=P(Y(0)=i).
\end{cases}
\end{equation*}

For HFT's inventory, denote the conditional density of $X(t)$ given $Y(t)=i$ by $m_t^i(x)$. The following lemma indicates the evolution of $m_t^i(x)$.
\begin{lemma}\label{lemma-forward-eqn}
The conditional density $\{m_t^i(x)\}_{i\in\mathbb{S}}$ satisfies the following forward equation:
\begin{equation*}
\sum_jp_j(t) Q^{ji}m_t^i(x)+p_i(t)\frac{\partial}{\partial t}m_t^i(x)=-p_i(t)\frac{\partial}{\partial x}[v(t, x,i)m_t^i(x)]+\sum_j p_j(t) Q^{ji} m_t^j(x).
\end{equation*}
\end{lemma}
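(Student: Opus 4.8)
The plan is to establish the equation first in weak (distributional) form, by testing against smooth compactly supported functions and applying Dynkin's formula to the joint process $(X(t),Y(t))$, and then to factor out the marginal law $p_i(t)$ of $Y(t)$ so as to arrive at the stated identity for the conditional density $m_t^i$.

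First I would record the structure of $(X(t),Y(t))_{t\in[0,T]}$: it is a time-inhomogeneous Markov process of piecewise-deterministic type, since $Y$ is an autonomous finite-state Markov chain with generator $\boldsymbol{Q}$ while $X$ has absolutely continuous paths solving $\dot X(t)=v(t,X(t),Y(t))$ — note that $X$ never jumps, as LT's orders at the $t_k$ move the price but not HFT inventories. Acting on a function $f(x,i)$ that does not depend on $t$, its generator on $[t_k,t_{k+1})$ is
\begin{equation*}
(\mathcal A_t f)(x,i)=v(t,x,i)\,\partial_x f(x,i)+\sum_{j}Q^{ij}f(x,j).
\end{equation*}

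Next, fix $\ell\in\mathbb{S}$ and $g\in C_c^\infty(\mathbbm{R})$ and apply Dynkin's formula to $f(x,i)=g(x)\,\mathbbm{1}_{\{i=\ell\}}$; taking expectations and differentiating in $t$ gives, for a.e. $t$,
\begin{equation*}
\frac{d}{dt}\,\mathbbm{E}\!\left[g(X(t))\,\mathbbm{1}_{\{Y(t)=\ell\}}\right]
=\mathbbm{E}\!\left[v(t,X(t),\ell)\,g'(X(t))\,\mathbbm{1}_{\{Y(t)=\ell\}}\right]
+\sum_{i\in\mathbb{S}}p_i(t)\,Q^{i\ell}\!\int g(x)\,m_t^i(x)\,dx .
\end{equation*}
Rewriting the time-derivative term as $\int g(x)\,\partial_t[p_\ell(t)m_t^\ell(x)]\,dx$, the transport term as $p_\ell(t)\int v(t,x,\ell)\,g'(x)\,m_t^\ell(x)\,dx$, and integrating the latter by parts (legitimate since $g$ has compact support), one obtains, for every such $g$,
\begin{equation*}
\int g(x)\,\partial_t\!\big[p_\ell(t)m_t^\ell(x)\big]\,dx
=-\,p_\ell(t)\!\int g(x)\,\partial_x\!\big[v(t,x,\ell)m_t^\ell(x)\big]\,dx
+\sum_{i\in\mathbb{S}}p_i(t)\,Q^{i\ell}\!\int g(x)\,m_t^i(x)\,dx .
\end{equation*}
As $g$ is arbitrary, $\partial_t[p_\ell m_t^\ell]=-p_\ell\,\partial_x[v(\cdot,\cdot,\ell)m_t^\ell]+\sum_i p_iQ^{i\ell}m_t^i$ in the distributional sense; expanding the left side through the Kolmogorov forward ODE $\dot p_\ell(t)=\sum_i p_i(t)Q^{i\ell}$, i.e. $\partial_t[p_\ell m_t^\ell]=(\sum_i p_iQ^{i\ell})m_t^\ell+p_\ell\,\partial_t m_t^\ell$, and rearranging and then renaming the indices ($\ell\mapsto i$, summation index $\mapsto j$), this is exactly the asserted equation.

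The two integrations above are routine; the only delicate point is regularity. A priori $m_t^i$ is merely a density and $v(\cdot,x,i)$ is only c\`adl\`ag in $t$, with jumps at the $t_k$ (through the jumps of $h_i^1$), so the pointwise PDE should be read as holding on each interval $[t_k,t_{k+1})$, equivalently a.e. in $t$, while the weak form holds on all of $[0,T]$. Here one uses the model's structure: by \eqref{optcontrol2} the feedback $v(t,x,i)=\frac{1}{2\eta}(h_i^1(t)+2h_i^2(t)x-\lambda^H\mu(t))$ is affine in $x$ with coefficients $C^1$ in $t$ on each $[t_k,t_{k+1})$, so the flow $x\mapsto X(t)$ is an affine bijection and $m_t^i$ inherits a density that is as smooth in $x$ as $m_0^i$ and $C^1$ in $t$ away from the $t_k$; this justifies differentiating under the integral sign and passing from the weak form to the pointwise equation. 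I expect this regularity bookkeeping, not the algebra, to be the main obstacle.
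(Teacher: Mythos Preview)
Your proposal is correct and follows essentially the same route as the paper: both compute $\frac{d}{dt}\mathbbm{E}[f(X(t),Y(t))]$ via the generator of the joint process, rewrite it against the conditional densities, integrate by parts in $x$, and identify coefficients. The only cosmetic differences are that the paper tests against a general $f(x,i)$ rather than your indicator-type $g(x)\mathbbm{1}_{\{i=\ell\}}$, and that you add a regularity discussion which the paper omits.
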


\begin{proof}
See \ref{proof-lemma-forward-eqn} of \nameref{appendix}.
\end{proof}

\textbf{Mean field equation.} 
In this part, we first derive the evolution of average positions $\{E_i\}_{i\in\mathbb{S}}$ of different HFT populations, then calculate the average position $E$ of the whole HFT population. 

Given conditional densities $\{m_t^i(x)\}_{i\in\mathbb{S}},$ for each $i\in\mathbb{S}$:
\begin{equation*}
E_i(t)=\int_{\mathbbm{R}} x m_t^i(x)dx,\ \mu_i(t)=\int_{\mathbbm{R}} v(t,x,i) m_t^i(x)dx.
\end{equation*}
We denote
\begin{equation*}
\boldsymbol{E}=
\begin{bmatrix}
E_1\\
E_2\\
\vdots\\
E_N
\end{bmatrix},\ 
\boldsymbol{\mu}=
\begin{bmatrix}
\mu_1\\
\mu_2\\
\vdots\\
\mu_N
\end{bmatrix}.
\end{equation*}
For simplicity, the following notations will be used in the paragraph below:
\begin{equation}
\label{notation1}
\begin{aligned}
\boldsymbol{e}=
\begin{bmatrix}
1\\
\vdots\\
1
\end{bmatrix},\ 
\boldsymbol{p(t)}=
\begin{bmatrix}
p_1(t)\\
\vdots\\
p_N(t)
\end{bmatrix},\ 
\boldsymbol{H(t)}=
\begin{bmatrix}
h_1^2(t)& &\\
&\ddots&\\
& &h_N^2(t)
\end{bmatrix},\ 
\boldsymbol{Q}=
\begin{bmatrix}
Q^{11}&\cdots &Q^{1N}\\
&\ddots&\\
Q^{N1}&\cdots &Q^{NN}
\end{bmatrix},\ 
\end{aligned}
\end{equation}

\begin{equation}
\label{notation2}
\begin{aligned}
\boldsymbol{\Gamma}=
\begin{bmatrix}
\Gamma(1)& &\\
&\ddots&\\
& &\Gamma(N)
\end{bmatrix},\ 
\boldsymbol{\Phi(t)}=
\begin{bmatrix}
\phi(1)-\sum_j Q^{1j}h_j^2& &\\
&\ddots&\\
& &\phi(N)-\sum_j Q^{Nj}h_j^2
\end{bmatrix},
\end{aligned}
\end{equation}

\begin{equation}
\label{notation3}
\boldsymbol{p_Q(t)}=\begin{bmatrix}
\frac{1}{p_1(t)}& &\\
&\ddots&\\
& &\frac{1}{p_N(t)}
\end{bmatrix}\boldsymbol{Q^\top}\begin{bmatrix}
p_1(t)& &\\
&\ddots&\\
& &p_N(t)
\end{bmatrix}
-\text{diag}\left\{
\begin{bmatrix}
\frac{1}{p_1(t)}& &\\
&\ddots&\\
& &\frac{1}{p_N(t)}
\end{bmatrix}\boldsymbol{Q^\top}\begin{bmatrix}
p_1(t)\\
\vdots\\
p_N(t)
\end{bmatrix}
\right\}.
\end{equation}

Lemma \ref{lemma-dE} gives the relationship between the derivative of the average inventory $\boldsymbol{E}$ of the HFT population at inventory aversion $i$ and their average trading speed $\boldsymbol{\mu}$, which is a direct result of Lemma \ref{lemma-forward-eqn}.
\begin{lemma}
\label{lemma-dE}
\begin{equation}
\label{dE}
\begin{aligned}
&\boldsymbol{\mu(t)}=\boldsymbol{\dot{E}(t)}-\boldsymbol{p_Q(t)}\boldsymbol{E(t)}.
\end{aligned}
\end{equation}
\end{lemma}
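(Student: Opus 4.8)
The plan is to prove \eqref{dE} componentwise, reading off $\dot E_i(t)$ from the definition $E_i(t)=\int_{\mathbb R}x\,m_t^i(x)\,dx$ and substituting the forward equation of Lemma \ref{lemma-forward-eqn} for the term $p_i(t)\partial_t m_t^i(x)$; this is why the statement is advertised as a direct corollary of that lemma. Concretely, I would multiply the forward equation by $x$, integrate over $\mathbb R$, and use $\int_{\mathbb R}x\,m_t^i(x)\,dx=E_i(t)$ and $\int_{\mathbb R}x\,m_t^j(x)\,dx=E_j(t)$ to obtain
$$p_i(t)\dot E_i(t)+\Big(\sum_j p_j(t)Q^{ji}\Big)E_i(t)=-p_i(t)\int_{\mathbb R}x\,\partial_x\big[v(t,x,i)m_t^i(x)\big]\,dx+\sum_j p_j(t)Q^{ji}E_j(t).$$
An integration by parts in $x$ converts the first term on the right into $p_i(t)\int_{\mathbb R}v(t,x,i)m_t^i(x)\,dx=p_i(t)\mu_i(t)$, leaving
$$p_i(t)\dot E_i(t)=p_i(t)\mu_i(t)+\sum_j p_j(t)Q^{ji}E_j(t)-\Big(\sum_j p_j(t)Q^{ji}\Big)E_i(t).$$

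Dividing by $p_i(t)>0$ and matching against the definition \eqref{notation3} then closes the proof: the $(i,j)$-entry of $\boldsymbol{p_Q(t)}$ is $\frac{p_j(t)}{p_i(t)}Q^{ji}$ off the diagonal and $Q^{ii}-\frac1{p_i(t)}\sum_k p_k(t)Q^{ki}$ on it, so $\big(\boldsymbol{p_Q(t)}\boldsymbol{E(t)}\big)_i=\frac1{p_i(t)}\big[\sum_j p_j(t)Q^{ji}E_j(t)-(\sum_j p_j(t)Q^{ji})E_i(t)\big]$, whence $\dot E_i(t)=\mu_i(t)+\big(\boldsymbol{p_Q(t)}\boldsymbol{E(t)}\big)_i$, i.e. $\boldsymbol{\mu(t)}=\boldsymbol{\dot E(t)}-\boldsymbol{p_Q(t)}\boldsymbol{E(t)}$. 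I would also note that HFT inventories do not jump at the trading times $t_k$ (only the fair price $P$ does, and the jump process $Y$ induces no jump in $X$), so each $E_i$ is absolutely continuous on $[0,T]$ and the identity is meaningful and holds a.e.\ in $t$.

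The step requiring care, and the only real obstacle, is justifying the interchange of $\partial_t$ and $\int_{\mathbb R}(\cdot)\,dx$ together with the vanishing of the boundary term $\big[\,x\,v(t,x,i)m_t^i(x)\,\big]_{x=-\infty}^{x=+\infty}$ in the integration by parts. Because the optimal feedback \eqref{optcontrol2} is affine in $x$, the relevant integrands grow at most like $x^2 m_t^i(x)$ at infinity, so it suffices that $m_t^i$ have a second moment that is bounded locally uniformly in $t$ and tails decaying faster than any polynomial. This is built into the model: given a path of $Y$, the inventory solves the linear ODE $\dot X(t)=\frac1{2\eta}\big[h^1_{Y(t)}(t)+2h^2_{Y(t)}(t)X(t)-\lambda^H\mu(t)\big]$ with coefficients bounded on $[0,T]$, so $X(t)$ is an affine function of $X(0)$ with deterministic, bounded coefficients; under the (implicit) assumption that the initial inventory distribution has finite moments, $X(t)$ then has moments bounded locally uniformly in $t$, and the conditional densities $m_t^i$ inherit the needed decay. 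Dominated convergence then legitimizes differentiating under the integral and forces the boundary terms to zero, completing the argument.
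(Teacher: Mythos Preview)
Your proof is correct and follows essentially the same route as the paper: multiply the forward equation of Lemma~\ref{lemma-forward-eqn} by $x$, integrate in $x$, use integration by parts to turn the transport term into $p_i(t)\mu_i(t)$, and identify the remaining sum with $(\boldsymbol{p_Q(t)}\boldsymbol{E(t)})_i$. The paper's proof records exactly the scalar identity $\dot E_i(t)=\mu_i(t)+\frac{1}{p_i(t)}\sum_j p_j(t)Q^{ji}[E_j(t)-E_i(t)]$ and stops there; you go a bit further by explicitly matching the matrix entries of $\boldsymbol{p_Q(t)}$ and by justifying the interchange of $\partial_t$ with $\int dx$ and the vanishing boundary term, details the paper leaves implicit.
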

\begin{proof}
See \ref{proof-lemma-dE} of \nameref{appendix}.
\end{proof}

Dot multiply each side of \eqref{dE} by $\boldsymbol{p(t)}$, we have
\begin{equation*}
    \mu(t)=\dot{E}(t),
\end{equation*}
which implies that the derivative of average inventory $E$ of the whole HFT population is their average trading speed $\mu.$ However, \eqref{dE} implies that derivative of $E_i$ is not $\mu_i$. It is because the change of $E_i$ is from not only the trading activity, but also the change of population composition with inventory aversion $i$.

Before we give the whole evolution of the average inventory $\boldsymbol{E}$, the following boundary conditions at $\{t_k\}_{k=1}^K$ are studied, which come from discontinuities of $h_i^1$. It offers some intuition about HFTs' anticipatory trading towards LT. 
\begin{lemma}
\label{lemma-Eboundary}
For $k=1,...,K,$
\begin{equation*}
\begin{aligned}
&\boldsymbol{E(t_{k-})}=\boldsymbol{E(t_k)},\ \boldsymbol{\mu(t_{k-})}=\boldsymbol{\mu(t_k)}+\frac{\gamma\xi_k}{\lambda^H+2\eta}\boldsymbol{e},\\
&E(t_k)=E(t_{k-}),\ \mu(t_{k-})=\mu(t_{k})+\frac{\gamma\xi_k}{\lambda^H+2\eta}.    
\end{aligned}
\end{equation*}
\end{lemma}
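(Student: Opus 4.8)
The plan is to exploit the affine-in-$x$ form of the optimal feedback control \eqref{optcontrol2} together with the jump conditions \eqref{hs-boundary}, and to reduce the whole statement to a one-dimensional fixed-point computation for the jump of $\mu$ across $t_k$.

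First I would record the identity obtained by integrating \eqref{optcontrol2} against the conditional density $m_t^i$: since $v(t,x,i)=\frac{1}{2\eta}[h_i^1(t)+2h_i^2(t)x-\lambda^H\mu(t)]$ is affine in $x$,
\begin{equation*}
\mu_i(t)=\int_{\mathbbm{R}}v(t,x,i)m_t^i(x)\,dx=\frac{1}{2\eta}\left[h_i^1(t)+2h_i^2(t)E_i(t)-\lambda^H\mu(t)\right],\qquad i\in\mathbb{S},
\end{equation*}
so the jump of $\mu_i$ across $t_k$ is governed entirely by the jumps of $h_i^1$, $h_i^2$, $E_i$ and $\mu$ there. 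Next I would establish that $\boldsymbol{E}$ does not jump at $t_k$: because every admissible control satisfies $\mathbbm{E}\int_0^T v(t)^2dt<\infty$, the inventory $X(\cdot)$ is absolutely continuous in time and $\{X(s)\}_s$ is $L^2$-bounded hence uniformly integrable in a neighbourhood of $t_k$; combined with the continuity of $p_i(\cdot)$ this makes $t\mapsto p_i(t)E_i(t)=\mathbbm{E}[X(t)\,1_{\{Y(t)=i\}}]$ continuous, and dividing by $p_i(t)>0$ gives $\boldsymbol{E(t_{k-})}=\boldsymbol{E(t_k)}$, hence also $E(t_{k-})=E(t_k)$ after summing against $\boldsymbol{p(t_k)}=\boldsymbol{p(t_{k-})}$.

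Then I would combine the pieces. Writing $\Delta f:=f(t_{k-})-f(t_k)$, the displayed identity together with $\Delta h_i^2=0$, $\Delta h_i^1=\gamma\xi_k$ (from \eqref{hs-boundary}) and $\Delta E_i=0$ gives $\Delta\mu_i=\frac{1}{2\eta}[\gamma\xi_k-\lambda^H\Delta\mu]$ for every $i\in\mathbb{S}$. The right-hand side is independent of $i$, so all the $\Delta\mu_i$ coincide; multiplying by $p_i(t_k)$ and summing over $\mathbb{S}$, using $\sum_i p_i(t_k)=1$, $\mu(t)=\sum_i p_i(t)\mu_i(t)$ and the continuity of $\boldsymbol{p}$, produces the scalar equation $\Delta\mu=\frac{1}{2\eta}[\gamma\xi_k-\lambda^H\Delta\mu]$, i.e. $\Delta\mu=\frac{\gamma\xi_k}{\lambda^H+2\eta}$. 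Substituting this back, $\Delta\mu_i=\frac{1}{2\eta}\big(\gamma\xi_k-\lambda^H\frac{\gamma\xi_k}{\lambda^H+2\eta}\big)=\frac{\gamma\xi_k}{\lambda^H+2\eta}$ for all $i$, which is precisely $\boldsymbol{\mu(t_{k-})}=\boldsymbol{\mu(t_k)}+\frac{\gamma\xi_k}{\lambda^H+2\eta}\boldsymbol{e}$, and the aggregate statement $\mu(t_{k-})=\mu(t_k)+\frac{\gamma\xi_k}{\lambda^H+2\eta}$ is the scalar identity already obtained.

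The routine-but-delicate point is the continuity of $E_i$ across $t_k$: the mean field $\mu$, and hence the optimal control, genuinely jumps at $t_k$, and one must argue that this jump in the velocity does not induce a jump in the conditional mean position. Since admissible controls are only square-integrable in time (so $X$ is continuous but not Lipschitz), the cleanest justification goes through the representation $p_i(t)E_i(t)=\mathbbm{E}[X(t)\,1_{\{Y(t)=i\}}]$, using that $X(\cdot)$ is a.s. continuous and $L^2$-bounded and that the Markov chain $Y$ a.s. has no jump at the deterministic time $t_k$, so that $X(s)1_{\{Y(s)=i\}}\to X(t_k)1_{\{Y(t_k)=i\}}$ a.s. and in $L^1$; this avoids working directly with the forward PDE of Lemma \ref{lemma-forward-eqn}. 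Everything else is bookkeeping with the affine control and the jump conditions \eqref{hs-boundary}.
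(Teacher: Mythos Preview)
Your proof is correct and follows essentially the same route as the paper: integrate the affine feedback \eqref{optcontrol2} against $m_t^i$ to express $\mu_i(t)$ in terms of $h_i^1,h_i^2,E_i,\mu$, then invoke the jump conditions \eqref{hs-boundary} and solve the resulting linear relation for the jump of $\mu$ (the paper first writes the closed form $\mu=\frac{1}{\lambda^H+2\eta}\sum_i p_i[h_i^1+2h_i^2 E_i]$ and then plugs in, whereas you take differences first, but the algebra is the same). The one addition is that you supply an explicit $L^1$-continuity argument for $E_i$ at $t_k$ via $p_i(t)E_i(t)=\mathbbm{E}[X(t)1_{\{Y(t)=i\}}]$, which the paper uses tacitly.
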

\begin{proof}
See \ref{proof-lemma-Eboundary} of \nameref{appendix}.
\end{proof}
HFTs' anticipation towards the large order brings the jump in their trading speed. For example, if LT is going to buy, i.e., $\xi_k>0,$ HFTs will decrease the trading speed to exploit the market impact caused by LT's purchase.

Combining the Riccati equations \eqref{hs} and Lemma \ref{lemma-dE}, the evolution of $\boldsymbol{E}$ within each time interval can be derived, together with the boundary conditions in Lemma \ref{lemma-Eboundary}, the whole equilibrium system is displayed in Theorem \ref{thm-E}.
\begin{theorem}
The equilibrium system of $\boldsymbol{E}$ is:
\label{thm-E}
{\footnotesize\begin{equation}\label{mean_field_equation}
\begin{cases}
&[2\eta\boldsymbol{I}+\lambda^H\boldsymbol{e}\boldsymbol{p}^\top]\boldsymbol{\dot{\mu}(t)} +[\gamma^H\boldsymbol{e}\boldsymbol{p}^\top+2\eta\boldsymbol{Q}+\lambda^H\boldsymbol{e}\boldsymbol{p}^\top\boldsymbol{Q}]\boldsymbol{\mu(t)}
-2[\boldsymbol{H(t)}\boldsymbol{p_Q(t)}+\boldsymbol{\Phi(t)}+\boldsymbol{Q}\boldsymbol{H(t)}]\boldsymbol{E(t)}=\boldsymbol{0},\ t\in[t_k,t_{k+1}),k=0,...,K\\
&\boldsymbol{\dot{E}(t)}=\boldsymbol{\mu(t)}+\boldsymbol{p_Q(t)}\boldsymbol{E(t)},\ t\in[t_k,t_{k+1}),k=0,...,K\\
&\boldsymbol{E(t_{k-})}=\boldsymbol{E(t_k)},\ \boldsymbol{\mu(t_{k-})}=\boldsymbol{\mu(t_k)}+\frac{\gamma\xi_k}{\lambda^H+2\eta}\boldsymbol{e},\ k=1,...,K\\
&[2\eta\boldsymbol{I}+\lambda^H\boldsymbol{ep}^\top]\boldsymbol{\mu(T)}+2\boldsymbol{\Gamma}\boldsymbol{E(T)}=\boldsymbol{0},\\
&\boldsymbol{E(0)}=\boldsymbol{E}_0.
\end{cases}
\end{equation}}
\end{theorem}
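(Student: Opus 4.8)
The plan is to assemble the equilibrium system from three ingredients already established in the excerpt: the Riccati equations \eqref{hs} for $\{h_i^0,h_i^1,h_i^2\}_{i\in\mathbb{S}}$, the relation \eqref{dE} from Lemma \ref{lemma-dE} linking $\boldsymbol{\mu}$ and $\boldsymbol{\dot E}$, and the jump conditions from Lemma \ref{lemma-Eboundary}. First I would recall the optimal feedback \eqref{optcontrol2}, integrate it against the conditional density $m_t^i(x)$ and use the definitions $E_i(t)=\int x\,m_t^i\,dx$, $\mu_i(t)=\int v(t,x,i)m_t^i\,dx$ to get the scalar identity $\mu_i(t)=\tfrac{1}{2\eta}\big(h_i^1(t)+2h_i^2(t)E_i(t)-\lambda^H\mu(t)\big)$; stacking over $i\in\mathbb{S}$ and writing $\mu(t)=\boldsymbol{p}^\top\boldsymbol{\mu}(t)$ this becomes, in matrix form, $2\eta\,\boldsymbol{\mu}(t)=\boldsymbol{h}^1(t)+2\boldsymbol{H}(t)\boldsymbol{E}(t)-\lambda^H\boldsymbol{e}\boldsymbol{p}^\top\boldsymbol{\mu}(t)$, i.e.
\begin{equation*}
[2\eta\boldsymbol{I}+\lambda^H\boldsymbol{e}\boldsymbol{p}^\top]\boldsymbol{\mu}(t)=\boldsymbol{h}^1(t)+2\boldsymbol{H}(t)\boldsymbol{E}(t),
\end{equation*}
where $\boldsymbol{h}^1=(h_1^1,\dots,h_N^1)^\top$.

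The key step is to differentiate this identity in $t$ on each interval $[t_k,t_{k+1})$ so as to eliminate $\boldsymbol{h}^1$ in favour of $\boldsymbol{\mu}$ and $\boldsymbol{E}$. Differentiating the left side gives $[2\eta\boldsymbol{I}+\lambda^H\boldsymbol{e}\boldsymbol{p}^\top]\boldsymbol{\dot\mu}(t)+\lambda^H\boldsymbol{e}\boldsymbol{\dot p}^\top\boldsymbol{\mu}(t)$ — and here I would use the Kolmogorov forward equation $\boldsymbol{\dot p}^\top=\boldsymbol{p}^\top\boldsymbol{Q}$ to rewrite $\boldsymbol{\dot p}^\top\boldsymbol{\mu}=\boldsymbol{p}^\top\boldsymbol{Q}\boldsymbol{\mu}$. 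Differentiating the right side, I substitute $\boldsymbol{\dot h}^1$ from the second line of \eqref{hs} (written in matrix form as $\boldsymbol{\dot h}^1=-\tfrac1\eta\boldsymbol{H}(\boldsymbol{h}^1-\lambda^H\mu\boldsymbol{e})-\gamma^H\mu\boldsymbol{e}-\boldsymbol{Q}\boldsymbol{h}^1$) and $\boldsymbol{\dot H}$ from the first line of \eqref{hs}. Wherever $\boldsymbol{h}^1$ reappears I eliminate it using the undifferentiated identity $\boldsymbol{h}^1=[2\eta\boldsymbol{I}+\lambda^H\boldsymbol{e}\boldsymbol{p}^\top]\boldsymbol{\mu}-2\boldsymbol{H}\boldsymbol{E}$, and wherever $\boldsymbol{\dot E}$ would appear I use \eqref{dE}, $\boldsymbol{\dot E}=\boldsymbol{\mu}+\boldsymbol{p_Q}\boldsymbol{E}$. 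After collecting the coefficient of $\boldsymbol{\dot\mu}$, of $\boldsymbol{\mu}$, and of $\boldsymbol{E}$, the $\frac1\eta\boldsymbol{H}\boldsymbol{h}^1$-type cross terms must cancel against pieces coming from $2\boldsymbol{\dot H}\boldsymbol{E}+2\boldsymbol{H}\boldsymbol{\dot E}$, leaving exactly the first line of \eqref{mean_field_equation}; note $\boldsymbol{\Phi}(t)$ as defined in \eqref{notation2} packages the term $-2(\phi(i)-\sum_jQ^{ij}h_j^2)$ that survives from $2\boldsymbol{\dot H}\boldsymbol{E}$ after using the $h^2$-Riccati equation (the $(h_i^2)^2/\eta$ terms are what cancel the cross terms), and $\boldsymbol{H}\boldsymbol{p_Q}+\boldsymbol{Q}\boldsymbol{H}$ collects the remaining structural pieces.

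The second line of \eqref{mean_field_equation} is just \eqref{dE} restated (noting the sign: $-\boldsymbol{p_Q}$ there versus $+\boldsymbol{p_Q}$ here should be reconciled against the precise definition \eqref{notation3}, so I would double-check the orientation of $\boldsymbol{p_Q}$ against the statement and the forward equation rather than transcribe blindly). The jump conditions in line three are Lemma \ref{lemma-Eboundary} verbatim. For the terminal condition I evaluate the undifferentiated identity $[2\eta\boldsymbol{I}+\lambda^H\boldsymbol{e}\boldsymbol{p}^\top]\boldsymbol{\mu}(t)=\boldsymbol{h}^1(t)+2\boldsymbol{H}(t)\boldsymbol{E}(t)$ at $t=T$ and insert the boundary data $h_i^1(T)=0$, $h_i^2(T)=-\Gamma(i)$ from \eqref{hs-boundary}, giving $[2\eta\boldsymbol{I}+\lambda^H\boldsymbol{e}\boldsymbol{p}^\top]\boldsymbol{\mu}(T)=-2\boldsymbol{\Gamma}\boldsymbol{E}(T)$, which is line four; line five is the given initial data. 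The main obstacle I anticipate is purely the bookkeeping in the differentiation step: tracking which occurrences of $\boldsymbol{h}^1$ and $\boldsymbol{\dot E}$ to substitute, and verifying that the $\boldsymbol{H}\boldsymbol{h}^1$ and $\lambda^H\mu\boldsymbol{H}\boldsymbol{e}$ cross terms cancel cleanly so that no stray $\boldsymbol{h}^1$ survives — the matrices $\boldsymbol{H}$, $\boldsymbol{Q}$, $\boldsymbol{e}\boldsymbol{p}^\top$ do not commute, so the order of factors in each product matters and has to match \eqref{mean_field_equation} exactly.
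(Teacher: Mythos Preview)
Your proposal is correct and follows essentially the same route as the paper: derive the identity $[2\eta\boldsymbol{I}+\lambda^H\boldsymbol{e}\boldsymbol{p}^\top]\boldsymbol{\mu}=\boldsymbol{h}^1+2\boldsymbol{H}\boldsymbol{E}$ by integrating the feedback control against $m_t^i$, differentiate it on each $[t_k,t_{k+1})$, substitute the Riccati equations for $\boldsymbol{\dot h}^1$ and $\boldsymbol{\dot H}$, eliminate residual $\boldsymbol{h}^1$'s via the undifferentiated identity (using that row sums of $\boldsymbol{Q}$ vanish so $\boldsymbol{Q}\boldsymbol{e}\lambda^H\mu=\boldsymbol{0}$), and invoke $\boldsymbol{\dot E}=\boldsymbol{\mu}+\boldsymbol{p_Q}\boldsymbol{E}$ together with $\boldsymbol{\dot p}^\top=\boldsymbol{p}^\top\boldsymbol{Q}$. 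The only cosmetic difference is that the paper carries out the computation componentwise in $i$ before stacking into matrix form, whereas you work in matrix notation from the start; the cancellations you anticipate (the $(h_i^2)^2/\eta$ pieces from $\boldsymbol{\dot H}$ against the $\tfrac{1}{\eta}\boldsymbol{H}(\boldsymbol{h}^1-\lambda^H\mu\boldsymbol{e})$ pieces from $\boldsymbol{\dot h}^1$) are exactly what occurs.
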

\begin{proof}
    See \ref{proof-thm-E} of \nameref{appendix}.
\end{proof}
Once $\boldsymbol{E}$ is solved, the average position of the total HFT population $E=\boldsymbol{p}^\top\boldsymbol{E}$ is easily calculated.

\subsubsection{The special case of \texorpdfstring{$N=1$}{}}
When $|\mathbb{S}|=1$, i.e., all HFTs have same inventory aversion, system \eqref{mean_field_equation} becomes:
\begin{equation}
\label{mean_field_equation_nojump}
\begin{cases}
&(\lambda^H+2\eta)\ddot{E}(t)+\gamma^H \dot{E}(t)-2\phi E(t)=0,\ t\in[t_k,t_{k+1}),\ k=0,...,K.\\
&E(t_{k-})=E(t_{k}),\dot{E}(t_{k-})=\dot{E}(t_{k})+\frac{\gamma\xi_k}{\lambda^H+2\eta},\ k=1,...,K.\\
&(\lambda^H+2\eta)\dot{E}(T)+2\Gamma E(T)=0,\\
&E(0)=E_0.\\
\end{cases}
\end{equation}
This system can be solved explicitly,
$$
E(t)=A_k e^{\theta_1 t}+B_ke^{\theta_2 t},t\in[t_k,t_{k+1}),\ k=0,...,K,   
$$
where
\begin{equation*}
\label{LTnogamenojumpHFT}
\begin{cases}
&\theta_1=\frac{-\gamma^H+\sqrt{(\gamma^{H})^2+8\phi(\lambda^H+2\eta)}}{2(\lambda^H+2\eta)},\ \theta_2=\frac{-\gamma^H-\sqrt{(\gamma^{H})^2+8\phi(\lambda^H+2\eta)}}{2(\lambda^H+2\eta)},\\
&x_k=\frac{1}{\theta_2-\theta_1}\frac{\gamma}{\lambda^H+2\eta}\sum_{j=1}^k\xi_je^{-\theta_1t_j},\ y_k=\frac{1}{\theta_2-\theta_1}\frac{\gamma}{\lambda^H+2\eta}\sum_{j=1}^k\xi_je^{-\theta_2t_j},\ k=1,...,K.\\
&A_0=\frac{-(\lambda^H+2\eta)[x_K\theta_1e^{\theta_1 T}+(E_0-y_K)\theta_2e^{\theta_2 T}]-2\Gamma[x_Ke^{\theta_1 T}+(E_0-y_K)e^{\theta_2 T}]}{(\lambda^H+2\eta)(\theta_1e^{\theta_1 T}-\theta_2e^{\theta_2 T})+2\Gamma(e^{\theta_1 T}-e^{\theta_2 T})},\\
&B_0=E_0-A_0,\\
&A_k=A_0+x_k,\ B_k=B_0-y_k, k=1,...,K.\\
\end{cases}
\end{equation*}

\subsubsection{The solving scheme of general \texorpdfstring{$N$}{}}
\label{subsec-solvingscheme}
\begin{itemize}
    \item[Step 1] Given $\boldsymbol{Q}$, solve $\boldsymbol{p}$:
    \begin{equation*}
    \frac{d}{dt}\boldsymbol{p(t)}^\top = \boldsymbol{p(t)}^\top\boldsymbol{Q}.
    \end{equation*}
    For example, if $\boldsymbol{Q}=\begin{bmatrix}
        -x&x\\
        y&-y
    \end{bmatrix},x^2+y^2\neq0$ and $\boldsymbol{p(0)}=\begin{bmatrix}
        1/2\\
        1/2
    \end{bmatrix},$
    \begin{equation*}
    p_1(t)=\frac{y+\frac{x-y}{2}e^{-(x+y)t}}{x+y},\ p_2(t)=\frac{x-\frac{x-y}{2}e^{-(x+y)t}}{x+y}.
    \end{equation*}
    $\boldsymbol{p_Q}$ in \eqref{notation3} is also solved.
\item[Step 2] Solve $\{h_i^2\}_{i=1}^N:$
$$\frac{dh_i^2}{dt}+\frac{(h_i^2)^2}{\eta}-\phi(i)+\sum_jQ^{ij}h_j^2=0,\ h_i^2(T)=-\Gamma(i).$$
This can be numerically solved by Euler method, then we get $\boldsymbol{H}$ in \eqref{notation1} and $\boldsymbol{\Phi}$ in \eqref{notation2}.

\item[Step 3] Solve $\boldsymbol{E}$:
note that $$|2\eta\boldsymbol{I}+\lambda^H \boldsymbol{ep}^{\top}|=(2\eta)^{N-1}(\lambda^H+2\eta)>0,$$ so $2\eta\boldsymbol{I}+\lambda^H \boldsymbol{ep}^{\top}$ is invertible, and the mean field equilibrium system \eqref{mean_field_equation} can be written as
\begin{equation*}
d\begin{bmatrix}
\boldsymbol{\mu(t)}\\
\boldsymbol{E(t)}
\end{bmatrix}=
\boldsymbol{A(t)}\begin{bmatrix}
\boldsymbol{\mu(t)}\\
\boldsymbol{E(t)}
\end{bmatrix} dt
-\frac{\gamma}{\lambda^H+2\eta} \sum_{k=1}^K \xi_k 1_{[t=t_k]} \begin{bmatrix}
\boldsymbol{e}\\
\boldsymbol{0}
\end{bmatrix},
\end{equation*}
where
\begin{equation}\label{mean_field_equation1}
\begin{aligned}
&\boldsymbol{A(t)}=\begin{bmatrix}
\boldsymbol{A_1(t)} & \boldsymbol{A_2(t)} \\
\boldsymbol{I} & \boldsymbol{p_Q(t)}
\end{bmatrix},\\
&\boldsymbol{A_1(t)} = -[2\eta\boldsymbol{I}+\lambda^H\boldsymbol{e}\boldsymbol{p}^\top]^{-1} [\gamma^H\boldsymbol{e}\boldsymbol{p}^\top+2\eta\boldsymbol{Q}+\lambda^H\boldsymbol{e}\boldsymbol{p}^\top\boldsymbol{Q}],\\
&\boldsymbol{A_2(t)} = 2[2\eta\boldsymbol{I}+\lambda^H\boldsymbol{e}\boldsymbol{p}^\top]^{-1} [\boldsymbol{H(t)}\boldsymbol{p_Q(t)}+\boldsymbol{\Phi(t)}+\boldsymbol{Q}\boldsymbol{H(t)}].
\end{aligned}
\end{equation}
We first use Euler's method to find the fundamental solution matrix $\boldsymbol{U(t)}$ of 
\begin{equation*}
\begin{cases}
&d\boldsymbol{U(t)}=\boldsymbol{A(t)}\boldsymbol{U(t)}dt,\\
&\boldsymbol{U(0)}=\boldsymbol{I}_{2N}.    
\end{cases}
\end{equation*}
Next, we let
\begin{equation*}
\begin{bmatrix}
\boldsymbol{\mu(t)}\\
\boldsymbol{E(t)}
\end{bmatrix}
=\boldsymbol{U(t)}\boldsymbol{c(t)},
\end{equation*}
and solve $\boldsymbol{c(t)}$.

By \eqref{mean_field_equation1},
\begin{equation*}
d\boldsymbol{c(t)}=-\frac{\gamma}{\lambda^H+2\eta} \sum_{k=1}^K \xi_k 1_{[t_k= t]} \boldsymbol{U(t_k)}^{-1} \begin{bmatrix} \boldsymbol{e}\\ \boldsymbol{0} \end{bmatrix},
\end{equation*}
so
\begin{equation*}
\boldsymbol{c(t)}=\boldsymbol{c(0)}-\frac{\gamma}{\lambda^H+2\eta} \sum_{k=1}^K \xi_k 1_{[t_k\leq t]} \boldsymbol{U(t_k)}^{-1} \begin{bmatrix} \boldsymbol{e}\\ \boldsymbol{0} \end{bmatrix},
\end{equation*}
and the only unknown is $\boldsymbol{c(0)}$.

Since $\boldsymbol{U(0)}=\boldsymbol{I}_{2N}$,
\begin{equation*}
\begin{bmatrix}
\boldsymbol{\mu(0)}\\
\boldsymbol{E(0)}
\end{bmatrix} = \boldsymbol{c(0)},
\end{equation*}
by the initial condition $\boldsymbol{E(0)}=\boldsymbol{E}_0$, we get the last $N$ entries of $\boldsymbol{c(0)}$.

Using the terminal condition $[2\eta\boldsymbol{I}+\lambda^H\boldsymbol{ep}^\top]\boldsymbol{\mu(T)}+2\boldsymbol{\Gamma}\boldsymbol{E(T)}=\boldsymbol{0}$, we get
\begin{equation*}
\begin{bmatrix}
2\eta\boldsymbol{I}+\lambda^H\boldsymbol{ep}^\top & 2\boldsymbol{\Gamma}
\end{bmatrix}
\boldsymbol{U(T)}\left(\boldsymbol{c(0)}-\frac{\gamma}{\lambda^H+2\eta} \sum_{k=1}^K \xi_k \boldsymbol{U(t_k)}^{-1} \begin{bmatrix} \boldsymbol{e}\\ \boldsymbol{0} \end{bmatrix}   \right)=\boldsymbol{0},
\end{equation*}
and the first $N$ entries of $\boldsymbol{c(0)}$ can be found by solving the above linear system.
\item[Step 4] Solve $E=\boldsymbol{p}^\top\boldsymbol{E}.$
\end{itemize}

\subsubsection{\texorpdfstring{$\varepsilon-$}{}Nash equilibrium}
In Theorem \ref{thm-epsilonNash} we prove that, the partial Nash equilibrium with infinite HFTs well approximates the Nash equilibrium among enough HFTs.
\begin{theorem}[Partial $\varepsilon-$Nash equilibrium]
\label{thm-epsilonNash}
Assume that there exists $m>0$ such that
$$
P(\sup_{j\in\mathbbm{N}^+} \vert X_j(0)\vert\leq m)=1.
$$
Given LT's strategy $\{\xi_k\}_{k=1}^K$, suppose there are $M$ HFTs in the market, let $(v_1^*, ..., v_M^*)$ be the MFG strategy of HFTs.
Then $\forall \varepsilon>0$, there exists $\Tilde{M}:=\Tilde{M}(\omega,\varepsilon)\in\mathbbm{N}^+$ such that when $M\geq \Tilde{M}$,
$(v_1^*, ..., v_M^*)$ forms an $M$-HFT partial $\varepsilon$-Nash equilibrium.
\end{theorem}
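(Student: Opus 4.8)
The plan is to fix a tagged HFT $j$ and compare its cost functional $J_{HFT}^M(v_j;\xi,v_{-j}^*)$ in the $M$-player game, where the other players use the mean field feedback $v_k^*=v^*(\cdot,X_k^*(\cdot),Y_k(\cdot);\xi,\mu^*)$, against the mean field functional $J_{HFT}(v_j;\xi,\mu^*)$. The only coupling is through the price, and since $\gamma^{H,M}=\gamma^H/M$ and the temporary term is $\lambda^H\overline v^M$, writing $\overline v^M(t)=\frac1M\sum_{k\neq j}v_k^*(t)+\frac1M v_j(t)$ one expects the ``exogenous'' part $\frac1M\sum_{k\neq j}v_k^*$ to converge to $\mu^*$ while the ``self'' part $\frac1M v_j$ is $O(1/M)$. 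I would dichotomize according to whether the conditional $L^2$ size $\|v_j\|^2:=\mathbbm E\!\left[\int_0^T v_j(t)^2dt\,\middle|\,\mathcal F_0\right]$ of the candidate deviation is large or small, and prove the $\varepsilon$-inequality in each regime, everything being done $\omega$-wise.

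First, an a priori coercivity bound. Dropping the two nonpositive penalty terms in \eqref{HFTobj-finite} and substituting the price dynamics, $J_{HFT}^M(v_j;\xi,v_{-j}^*)\le\mathbbm E\!\left[X_j(T)P^M(T)-\int_0^T v_j\Tilde P^M_j\,\middle|\,\mathcal F_0\right]$; using $\Tilde P^M_j=P^M+\lambda^H\overline v^M+\lambda\sum_k\xi_k1_{[t=t_k]}+\eta v_j$ and $\overline v^M=\frac1M\sum_{k\neq j}v_k^*+\frac1M v_j$, the terms quadratic in $v_j$ are $-\eta\|v_j\|_{L^2}^2$, $-\frac{\lambda^H}{M}\|v_j\|_{L^2}^2$, and a self-impact contribution of order $\frac{\gamma^H}{M}$ coming from the factor $\frac{\gamma^H}{M}\int_0^\cdot v_j$ inside $P^M$; every remaining term is at most affine in $\|v_j\|_{L^2}$ with coefficients built from $|X_j(0)|\le m$, $\sup_{[0,T]}|W|$, the deterministic data, and $\|\frac1M\sum_{k\neq j}v_k^*\|_{L^2}$. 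Here I need the untagged feedback controls to be uniformly bounded: since $dX_k^*(t)=\frac1{2\eta}[h^1_{Y_k(t)}(t)+2h^2_{Y_k(t)}(t)X_k^*(t)-\lambda^H\mu^*(t)]dt$ is a linear ODE with coefficients continuous (hence bounded) on $[0,T]$ — using that the Riccati functions $h_i^1,h_i^2$ of \eqref{hs} and $\mu^*$ are bounded on $[0,T]$ — Gronwall's inequality and $|X_k(0)|\le m$ give $\sup_{[0,T]}|X_k^*|\le C_X$, hence $\sup_{[0,T]}|v_k^*|\le C_0$ a.s., with $C_X,C_0$ deterministic. Thus $\|\frac1M\sum_{k\neq j}v_k^*\|_{L^2}\le C_0\sqrt T$, and for $M$ above a deterministic $M_1$ (so the self-impact cannot beat $\eta$) one obtains $J_{HFT}^M(v_j;\xi,v_{-j}^*)\le E_0(\omega)+D(\omega)\|v_j\|-\tfrac{\eta}{2}\|v_j\|^2$ with $\mathcal F_0$-measurable, a.s.\ finite $E_0,D$ not depending on $M$ or $v_j$. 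Since $J_{HFT}^M(v_j^*;\xi,v_{-j}^*)\ge\underline J(\omega):=\inf_M J_{HFT}^M(v_j^*;\xi,v_{-j}^*)>-\infty$ a.s.\ (this sequence is a.s.\ convergent, as shown below), there is an $\mathcal F_0$-measurable $R(\omega)$ such that $\|v_j\|>R(\omega)$ forces the right-hand side below $\underline J(\omega)$; hence the $\varepsilon$-inequality holds automatically for all $M\ge M_1$ in this regime.

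It remains to treat $\|v_j\|\le R(\omega)$. Subtracting, $J_{HFT}^M(v_j;\xi,v_{-j}^*)-J_{HFT}(v_j;\xi,\mu^*)$ equals $\mathbbm E[\,\cdot\,|\mathcal F_0]$ of a finite sum of terms, each carrying either a factor $R_M(t):=\frac1M\sum_{k\neq j}v_k^*(t)-\mu^*(t)$ (integrated against $X_j(T)$, $v_j$, or $\int_0^\cdot v_j$) or a factor $\tfrac1M$ times $v_j$ or $\int_0^\cdot v_j$; using $|X_j(T)|\le m+\sqrt T\|v_j\|_{L^2}$ and conditional Cauchy--Schwarz, this is bounded by $\gamma^H m\,a_M+(\tfrac{c_1}{M}+c_2 a_M+c_3 b_M)\|v_j\|+\tfrac{c_4}{M}\|v_j\|^2$, where $a_M:=\mathbbm E[(\int_0^T|R_M|)^2|\mathcal F_0]^{1/2}$, $b_M:=\mathbbm E[\int_0^T R_M^2|\mathcal F_0]^{1/2}$ and the $c_i$ are deterministic. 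The crux is $a_M,b_M\to0$ a.s.: for each fixed $t$ the variables $v_k^*(t)-\mu^*(t)$, $k\ne j$, are i.i.d., bounded, and mean zero by the fixed-point condition $\mathbbm E[v^*(t,X^*(t),Y(t);\xi,\mu^*)\mid Y(t)=i]=\mu_i^*(t)$, so the strong law gives $R_M(t)\to0$ a.s.; by Fubini this holds for a.e.\ $(t,\omega)$, and since $|R_M|$ is uniformly bounded, dominated convergence on $[0,T]$ gives $\int_0^T|R_M|\to0$ and $\int_0^T R_M^2\to0$ a.s., whence $a_M,b_M\to0$ a.s.\ by the conditional dominated convergence theorem. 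The same estimate with $v_j$ replaced by the bounded $v_j^*$ gives $|J_{HFT}^M(v_j^*;\xi,v_{-j}^*)-J_{HFT}(v_j^*;\xi,\mu^*)|=:\kappa_M(\omega)\to0$ a.s., confirming the a.s.\ convergence used above. Now, by the verification theorem (Theorem~\ref{thm-verification}), $J_{HFT}(v_j;\xi,\mu^*)\le J_{HFT}(v_j^*;\xi,\mu^*)$ a.s., so for $\|v_j\|\le R(\omega)$,
\begin{equation*}
J_{HFT}^M(v_j;\xi,v_{-j}^*)\le J_{HFT}(v_j;\xi,\mu^*)+\kappa_M^B(\omega)\le J_{HFT}(v_j^*;\xi,\mu^*)+\kappa_M^B(\omega)\le J_{HFT}^M(v_j^*;\xi,v_{-j}^*)+\kappa_M(\omega)+\kappa_M^B(\omega),
\end{equation*}
with $\kappa_M^B(\omega):=\gamma^H m\,a_M+(\tfrac{c_1}{M}+c_2a_M+c_3b_M)R(\omega)+\tfrac{c_4}{M}R(\omega)^2\to0$ a.s. Defining $\Tilde M(\omega,\varepsilon)$ as the first index $\ge M_1$ past which $\kappa_M(\omega)+\kappa_M^B(\omega)<\varepsilon$ (a.s.\ finite), the $\varepsilon$-inequality holds for every admissible $v_j$ once $M\ge\Tilde M(\omega,\varepsilon)$ — large ones by the coercivity step, small ones by the display. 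Repeating for each $j\in\{1,\dots,M\}$ and taking the maximum of the finitely many thresholds gives the $M$-HFT partial $\varepsilon$-Nash property.

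The step I expect to be the main obstacle is the a priori coercive bound: one must verify that the tagged player's self-impact on the price enters only at order $1/M$ (thanks to $\gamma^{H,M}=\gamma^H/M$), so that for $M$ large the coefficient of $\|v_j\|^2$ is strictly negative and uniform — without this, a deviation with a huge trading rate could in principle beat $v_j^*$ and the threshold $R(\omega)$ would not exist. A secondary delicate point is that the whole argument is pathwise, which is precisely why the hypothesis $P(\sup_j|X_j(0)|\le m)=1$ is needed: it makes the untagged feedback controls uniformly bounded and thereby powers the pathwise law of large numbers, while the conditioning on $\mathcal F_0$ forces conditional Cauchy--Schwarz and conditional dominated convergence throughout.
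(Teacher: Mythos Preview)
Your argument is correct and reaches the same conclusion, but the route differs from the paper's in two places worth noting.

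\emph{Handling the arbitrary deviation.} The paper packages the whole thing into a functional $\Phi(v,\delta,\overline v)$ with $J_{HFT}^{M+1}(v_1)=\Phi\bigl(v_1,\tfrac{1}{M+1},\tfrac{1}{M}\sum_{m\ge2}v_m^*\bigr)$ and $J_{HFT}(v_1)=\Phi(v_1,0,\mu)$, asserts that the maximum $\tilde\Phi(\delta,\overline v):=\max_v\Phi(v,\delta,\overline v)$ is jointly continuous, and then bounds $|J_{HFT}^{M+1}(v_1^*)-J_{HFT}^{M+1}(v_1^{M+1,*})|$ by $|\Phi(v_1^*,\cdot,\cdot)-\Phi(v_1^*,0,\mu)|+|\tilde\Phi(0,\mu)-\tilde\Phi(\cdot,\cdot)|$. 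Your large/small dichotomy with the explicit coercivity bound is precisely what one would have to do to \emph{justify} that continuity of $\tilde\Phi$; so your version is more explicit and self-contained, while the paper's is shorter to write down but leaves the concavity/continuity claim as ``can be verified''.

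\emph{Convergence of the empirical speed to $\mu^*$.} The paper does not use the pointwise-SLLN-plus-Fubini-plus-DCT route. Instead it tracks (Step~1) the empirical type distribution $\theta_i(t)=\tfrac{1}{M}\sum_m 1_{[Y_m(t)=i]}$ and (Step~3) the type-weighted inventory $Z_i(t)=\tfrac{1}{M}\sum_m 1_{[Y_m(t)=i]}X_m(t)$, applies It\^o's lemma to $(\theta_i-p_i)^2$ and $(Z_i-\nu_i)^2$, and closes via Gronwall; then (Step~4) writes $\hat v^M-\mu$ as a linear combination of $\theta-p$ and $Z-\nu$ through the explicit feedback form \eqref{optcontrol2}. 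Your route is more elementary and exploits only that the $v_k^*(t)$ are i.i.d.\ and bounded for each $t$; the paper's route gives slightly more structure (uniform rates through the Gronwall constants) but is heavier.

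One small point to tighten: in your last sentence you ``repeat for each $j\in\{1,\dots,M\}$ and take the maximum of the finitely many thresholds''. Since the number of thresholds grows with $M$, this is circular as written. The clean fix is to note that $\frac{1}{M-1}\sum_{k\neq j}v_k^*-\mu^*$ differs from $\frac{1}{M}\sum_{k=1}^M v_k^*-\mu^*$ by at most $\frac{2C_0}{M-1}$ uniformly in $t$ and $j$ (using your bound $|v_k^*|\le C_0$), so the single threshold coming from $j=1$ already works for every $j$.
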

\begin{proof}
    See \ref{proof-thm-epsilonNash} of \nameref{appendix}.
\end{proof}

\subsection{Overall Nash equilibrium}
In this section, we investigate the overall Nash equilibrium, where LT also participates in the game.
Given HFTs' strategy, LT maximizes her expected profit \eqref{LTobj} and from the first-order condition, her optimal strategy follows:
\begin{equation}
\label{LTopt}
\begin{aligned}
\xi_k=&-\frac{\xi_0}{K}+\frac{1}{\gamma+2(\lambda+\eta_0)}\{\gamma^H(E(t_K)-E(t_k))+\lambda^H(\mu(t_K)-\mu(t_k))\\
&-\frac{1}{K}\sum_{j=1}^{K-1}[\gamma^H(E(t_K)-E(t_j))+\lambda^H(\mu(t_K)-\mu(t_j))]\},\ k=1,...,K-1,\\
\xi_K=&-\xi_0-\sum_{j=1}^{K-1}\xi_j.
\end{aligned}
\end{equation}

Observe that the solution to \eqref{mean_field_equation}, $\boldsymbol{E}(\boldsymbol{E_0},\boldsymbol{\xi})$, is linear in $\boldsymbol{E}_0=\begin{bmatrix}
    E_{01}\\
    \vdots\\
    E_{0N}
\end{bmatrix}$ and $\boldsymbol{\xi}=\begin{bmatrix}
    \xi_{1}\\
    \vdots\\
    \xi_{K}
\end{bmatrix}$ in the sense that 
\begin{equation*}
\boldsymbol{E}(\boldsymbol{E_0},\boldsymbol{\xi})=\sum_{i=1}^N E_{0i}\boldsymbol{E}(\boldsymbol{E_0}^{(i)},\boldsymbol{0})+\sum_{k=1}^K\xi_{k}\boldsymbol{E}(\boldsymbol{0},\boldsymbol{\xi}^{(k)}),
\end{equation*}
where
\begin{equation*}
\begin{aligned}
 &\boldsymbol{E_0}^{(i)}=\begin{bmatrix}
    0\\
    \vdots\\
    0\\
    1_i\\
    0\\
    \vdots\\
    0
\end{bmatrix},\ i=1,...,N;   \ \boldsymbol{\xi}^{(k)}=\begin{bmatrix}
    0\\
    \vdots\\
    0\\
    1_k\\
    0\\
    \vdots\\
    0
\end{bmatrix},\ k=1,...,K.   
\end{aligned}
\end{equation*}
Consequently, $E$ and $\mu$ are also linear in $\boldsymbol{E_0}$ and $\boldsymbol{\xi}$:
\begin{equation}\label{mean_field_linear}
\begin{aligned}
E &= \sum_{i=1}^N E_0^i E(\boldsymbol{E}_0^{(i)},\boldsymbol{0})+ \sum_{k=1}^K \xi_k E(\boldsymbol{0},\boldsymbol{\xi}^{(k)}):= C_E + \sum_{k=1}^K \xi_k E(\boldsymbol{0},\boldsymbol{\xi}^{(k)}), \\
\mu &= \sum_{i=1}^N E_0^i \mu(\boldsymbol{E}_0^{(i)},\boldsymbol{0})+ \sum_{k=1}^K \xi_k \mu(\boldsymbol{0},\boldsymbol{\xi}^{(k)}):= C_{\mu}+ \sum_{k=1}^K \xi_k \mu(\boldsymbol{0},\boldsymbol{\xi}^{(k)}).
\end{aligned}
\end{equation}
Substitute \eqref{mean_field_linear} into \eqref{LTopt}, we get linear equations of $\{\xi_k\}_{k=1}^{K-1}$ in equilibrium. The solving scheme is:
\begin{itemize}
    \item[Step 1] Solve bases: applying the solving scheme in Section \ref{subsec-solvingscheme}, solve $\left\{E(\boldsymbol{E}_0^{(i)},\boldsymbol{0})\right\}_{i=1}^N$, $\left\{\mu(\boldsymbol{0},\boldsymbol{\xi}^{(k)})\right\}_{i=1}^N$, $\left\{E(\boldsymbol{E}_0^{(i)},\boldsymbol{0})\right\}_{k=1}^K$, $\left\{\mu(\boldsymbol{0},\boldsymbol{\xi}^{(k)})\right\}_{k=1}^K$.
    \item[Step 2] Solve $\{\xi_k\}_{k=1}^K:$ substitute above bases into \eqref{LTopt} and \eqref{mean_field_linear}, solve the linear system.
    \item[Step 3] Solve HFTs' average position: substitute $\{\xi_k\}_{k=1}^K$ into \eqref{mean_field_linear}, HFTs' average position $E$ is calculated.
\end{itemize}

Actually, the overall Nash equilibrium is also a good approximation for an overall Nash equilibrium with enough HFTs.
\begin{theorem}[Overall $\varepsilon-$Nash equilibrium]
\label{thm-epsilonNash2}
Assume that there exists $m>0$ such that
$$
P(\sup_{j\in\mathbbm{N}^+} \vert X_j(0)\vert\leq m)=1.
$$
    Suppose there are $M$ HFTs in the market, let $\xi^*$ and $(v_1^*, ..., v_M^*)$ be the MFG strategy of LT and HFTs, respectively. Then $\forall \varepsilon>0$, there exists $\Tilde{M}:=\Tilde{M}(\omega,\varepsilon)\in\mathbbm{N}^+$ such that when $M\geq \Tilde{M}$,
     $(\xi^*, v_1^*,...,v_M^*)$ forms an $M$-HFT overall $\varepsilon$-Nash equilibrium.
\end{theorem}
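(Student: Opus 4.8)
The proof proceeds by checking the two clauses of the definition of an $M$-HFT overall $\varepsilon$-Nash equilibrium separately. Clause (ii) — that with LT frozen at $\xi^*$ the HFT profile $(v_1^*,\dots,v_M^*)$ is an $M$-HFT partial $\varepsilon$-Nash equilibrium — is nothing but Theorem \ref{thm-epsilonNash} applied with the fixed LT strategy $\{\xi_k\}_{k=1}^K=\{\xi_k^*\}_{k=1}^K$, since the mean field $\mu^*$ and the feedback map $v^*$ making up the overall equilibrium are, by construction, the partial-equilibrium objects attached to $\xi^*$. This produces a threshold $\tilde M(\omega,\varepsilon)$. The only remaining task is clause (i): $J_{LT}^M(\xi^*;v^*)>J_{LT}^M(\xi;v^*)-\varepsilon$ for every $\xi\in\mathcal{A}_{LT}^M$.

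For clause (i) I would exploit a structural feature of the model: the HFT feedback control \eqref{optcontrol2}, $v^*(t,x,i;\xi^*,\mu^*)=\tfrac1{2\eta}\bigl[h_i^1(t)+2h_i^2(t)x-\lambda^H\mu^*(t)\bigr]$, does not involve the price path. Hence, when the $M$ HFTs play $(v_1^*,\dots,v_M^*)$, HFT $j$'s realized inventory solves the autonomous random ODE $\dot X_j(t)=v^*(t,X_j(t),Y_j(t);\xi^*,\mu^*)$, driven only by $(X_j(0),Y_j(\cdot))$: it does not react to LT's strategy, to the other HFTs, or to the price. Consequently $(X_j(\cdot),Y_j(\cdot))$ is an i.i.d.\ copy of the representative MFG pair $(X^*(\cdot),Y(\cdot))$ attached to $\xi^*$, so for every $t$,
\[
\mathbbm{E}\bigl[\overline{v}^M(t)\bigr]=\mathbbm{E}\bigl[v^*(t,X^*(t),Y(t);\xi^*,\mu^*)\bigr]=\sum_{i\in\mathbb{S}}p_i(t)\mu_i^*(t)=\mu^*(t)
\]
by the partial-equilibrium fixed-point condition; the hypothesis $\sup_j|X_j(0)|\le m$ a.s.\ together with Gr\"onwall bounds all the $X_j$, hence all the $v_j^*$, uniformly on $[0,T]$, legitimising Fubini in $\int_0^{t_k}\overline{v}^M(t)\,dt$.

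Now every $\xi\in\mathcal{A}_{LT}^M=\mathcal{A}_{LT}$ is deterministic, and $J_{LT}^M(\xi;v^*)$ in \eqref{LTobj-finite} depends on $\overline{v}^M$ only through the linear expressions $\gamma^H\int_0^{t_k}\overline{v}^M(t)\,dt$ and $\lambda^H\overline{v}^M(t_k)$; replacing their expectations by $\mu^*$ via the preceding display, and using the common initial price $P^M(0)=P(0)$, gives the exact identity $J_{LT}^M(\xi;v^*)=J_{LT}(\xi;\mu^*)$ for all $\xi\in\mathcal{A}_{LT}$. Since $\xi^*$ maximises $J_{LT}(\,\cdot\,;\mu^*)$ over $\mathcal{A}_{LT}$ (clause (i) of the overall Nash equilibrium definition), $\xi^*$ also maximises $J_{LT}^M(\,\cdot\,;v^*)$ over $\mathcal{A}_{LT}^M$, so clause (i) holds for every $M$ — indeed with $\varepsilon=0$. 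Taking $\tilde M$ to be the threshold furnished by Theorem \ref{thm-epsilonNash} in clause (ii) then completes the proof.

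The genuine difficulty is contained entirely in Theorem \ref{thm-epsilonNash} (clause (ii)), which is already established; within the argument above the only points needing care are the decoupling of $X_j$ from LT's action (so that $\overline{v}^M$, and hence $J_{LT}^M(\,\cdot\,;v^*)$, is insensitive to LT's deviation) and the Fubini step, both immediate from the boundedness hypothesis. Should one prefer a proof of clause (i) paralleling Theorem \ref{thm-epsilonNash}, carrying an explicit $M$-dependent error rather than the exact identity, the alternative route is to combine a uniform law of large numbers $\sup_{t\in[0,T]}|\overline{v}^M(t)-\mu^*(t)|\to 0$ a.s. — whose proof must accommodate the jumps of the i.i.d.\ c\`{a}dl\`{a}g processes $v_j^*$ at the deterministic times $\{t_k\}$ — with the strict concavity of $\xi\mapsto J_{LT}^M(\xi;v^*)$ (whose Hessian is negative definite and independent of $M$ and of the mean field), which confines the relevant $\xi$ to a deterministic ball and renders the discrepancy between $J_{LT}^M$ and $J_{LT}$ uniformly small there; in that version the uniform law of large numbers would be the main technical obstacle.
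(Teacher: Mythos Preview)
Your proof is correct and, for clause (i), actually cleaner than the paper's. Both arguments dispatch clause (ii) by invoking Theorem~\ref{thm-epsilonNash} with $\xi=\xi^*$. For clause (i), the paper packages LT's payoff as a functional $\Psi(\xi,\overline X,\overline v)$ with $J_{LT}^M(\xi)=\Psi(\xi,\overline X^M,\overline v^M)$ and $J_{LT}(\xi)=\Psi(\xi,E,\mu)$, then bounds $|J_{LT}^M(\xi^*)-\max_\xi J_{LT}^M(\xi)|$ by the triangle inequality through $\Psi(\xi^*,E,\mu)=\tilde\Psi(E,\mu)$ and invokes the weak law of large numbers $\mathbbm{E}|\overline X^M(t_k)-E(t_k)|^2\to0$, $\mathbbm{E}|\overline v^M(t_k)-\mu(t_k)|^2\to0$ together with joint continuity of $\Psi$ and $\tilde\Psi$. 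Your decoupling observation---that the MFG feedback \eqref{optcontrol2} does not see the price, so the $(X_j,Y_j)$ are i.i.d.\ copies of the representative pair regardless of LT's deviation---combined with the facts that $\xi$ is deterministic and that $\overline v^M$ enters $J_{LT}^M$ linearly, collapses this approximation into the exact identity $J_{LT}^M(\xi;v^*)=J_{LT}(\xi;\mu^*)$ for every $\xi$ and every $M$. Hence clause (i) holds with $\varepsilon=0$ for all $M$, not just asymptotically; the paper's continuity/variance machinery is not wrong but is more than needed here. The alternative route you sketch at the end (uniform LLN plus strict concavity to confine deviations) is close in spirit to the paper's argument, though the paper works with $L^2$ convergence at the finitely many times $t_k$ and continuity of $\Psi,\tilde\Psi$ rather than a uniform-in-$t$ statement.
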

\begin{proof}
See \ref{proof-thm-epsilonNash2} of \nameref{appendix}.
\end{proof}

\newpage
\section{Numerical Results}
\subsection{Partial Nash equilibrium}
In this section, the strategy of LT is fixed, HFT may take different anticipatory strategies and will affect LT differently. Trading period starts from $0$ and ends at $T=1.$ LT trades $K=9$ times and $\xi_k=1,t_k=\frac{k}{10},k=1,...,9.$ The mean initial position of HFTs $E_0=0.$
We first show HFTs' strategies and how LT is affected when HFTs are identically inventory averse and the aversion does not jump. The values of market impact coefficients do not affect conclusions about HFTs' strategies and we set
$\gamma=1,\gamma^H=0.7,\lambda=0.4,\lambda^H=0.1,\eta=\eta_0=0.05.$ 

From Figure \ref{fignojumpLTnogamephi=0}, given $\phi,$ with the increase of $\Gamma$, HFTs play the role of Round-Tripper throughout the trading period. From Figure \ref{fignojumpLTnogamegam=0}, given $\Gamma$, with the increase of $\phi,$ HFTs play the role of Round-Tripper each time the large order is executed. Interestingly, if we focus on the time period between two transactions of LT, it can be seen that HFTs first trade in the opposite direction of LT then trade in the same direction, it verifies the finding in  van Kervel and Menkveld (2019) \cite{van2019high} that HFTs initially lean against institutional orders but later change direction.

\begin{figure}[!htbp]
    \centering
\subcaptionbox{$\Gamma=0$}{
    \includegraphics[width = 0.27\textwidth]{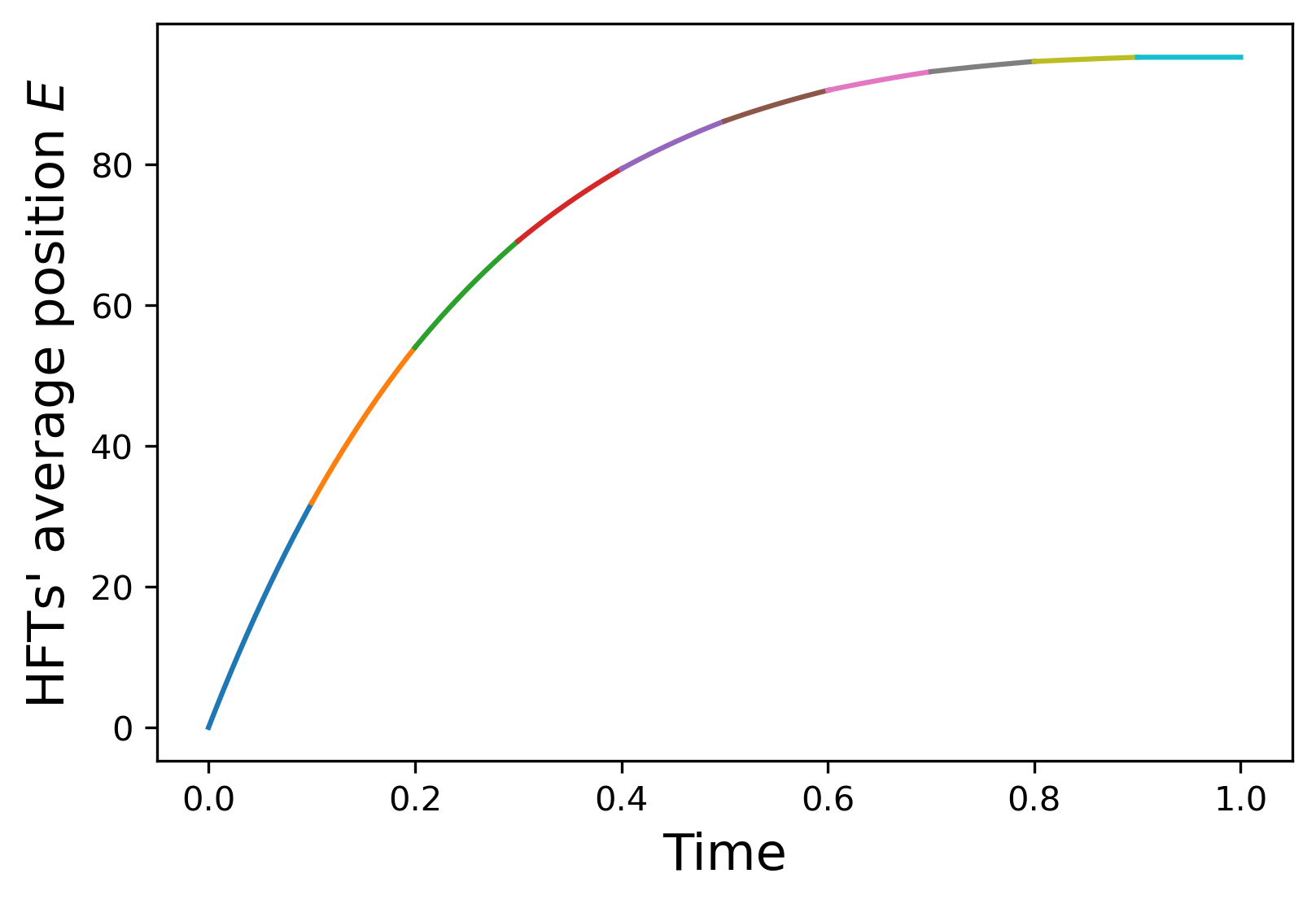}
    }
\subcaptionbox{$\Gamma=0.1$}{
    \includegraphics[width = 0.27\textwidth]{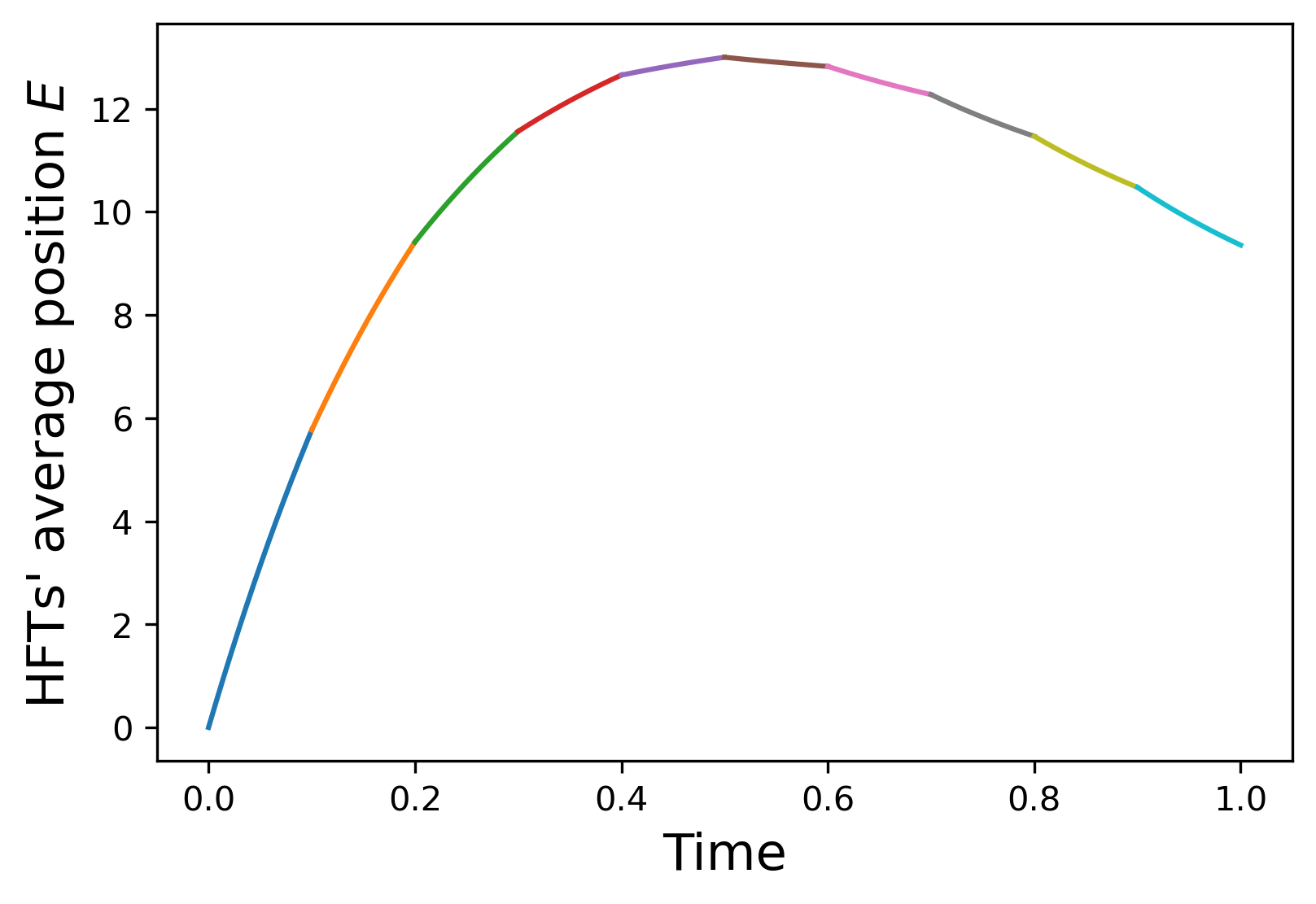}
    }
 \subcaptionbox{$\Gamma=2$}{
    \includegraphics[width = 0.27\textwidth]{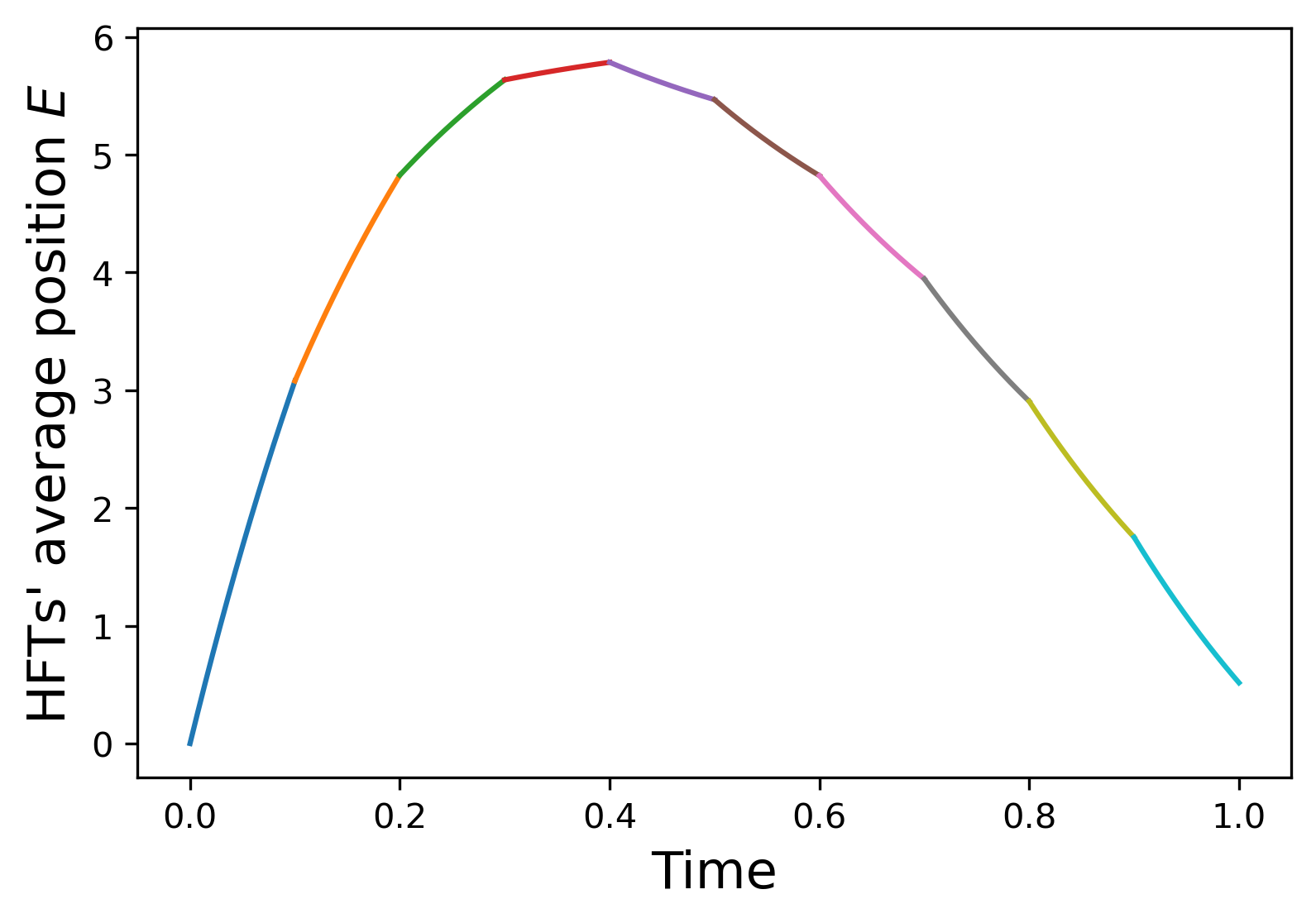}
    }   
    \caption{$\phi=0,$ HFTs' average position $E$.}
    \label{fignojumpLTnogamephi=0}
\end{figure}  

\begin{figure}[!htbp]
    \centering
\subcaptionbox{$\phi=0$}{
    \includegraphics[width = 0.27\textwidth]{fig/nojump/LTnogame/phi0Gam0.png}
    }
\subcaptionbox{$\phi=5$}{
    \includegraphics[width = 0.27\textwidth]{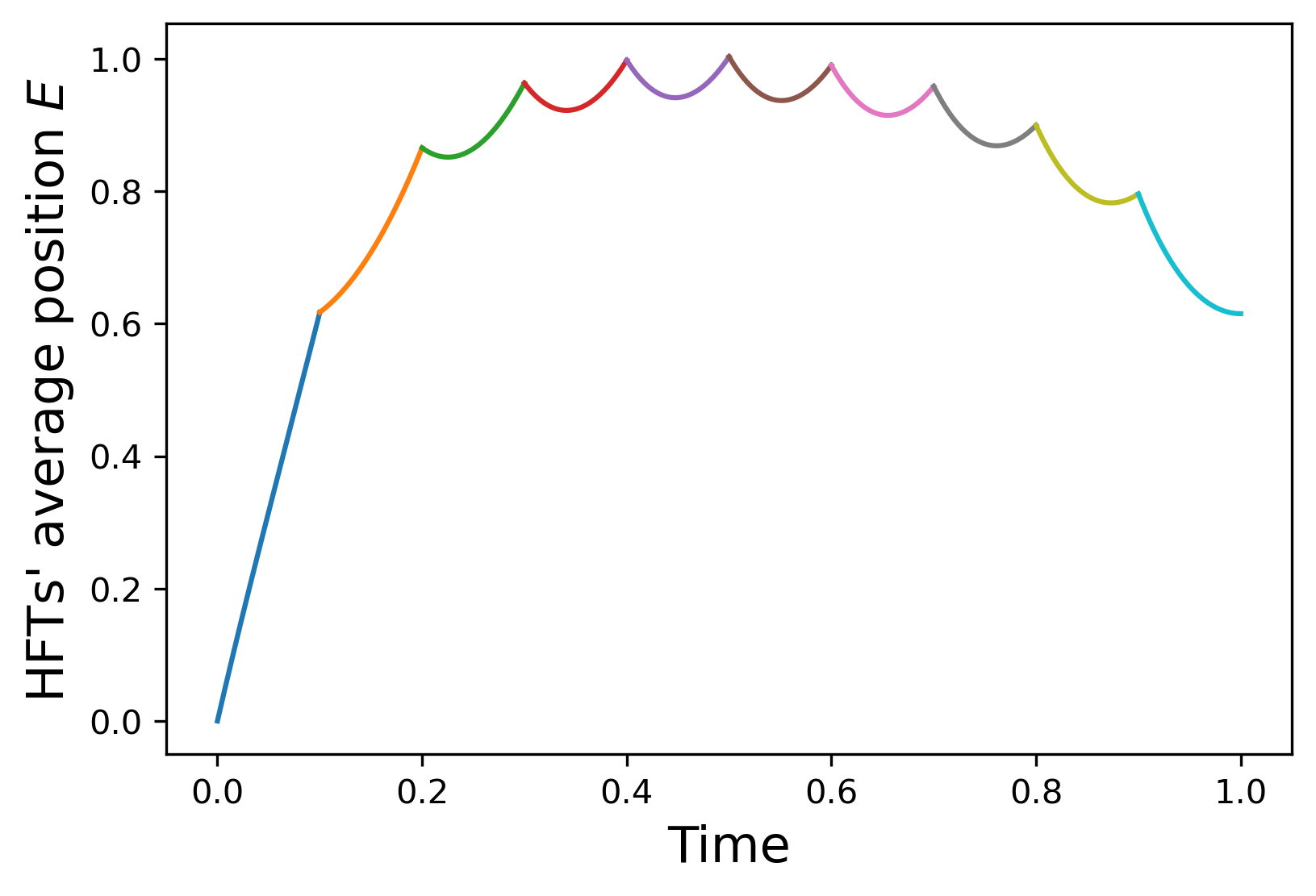}
    }
 \subcaptionbox{$\phi=10$}{
    \includegraphics[width = 0.27\textwidth]{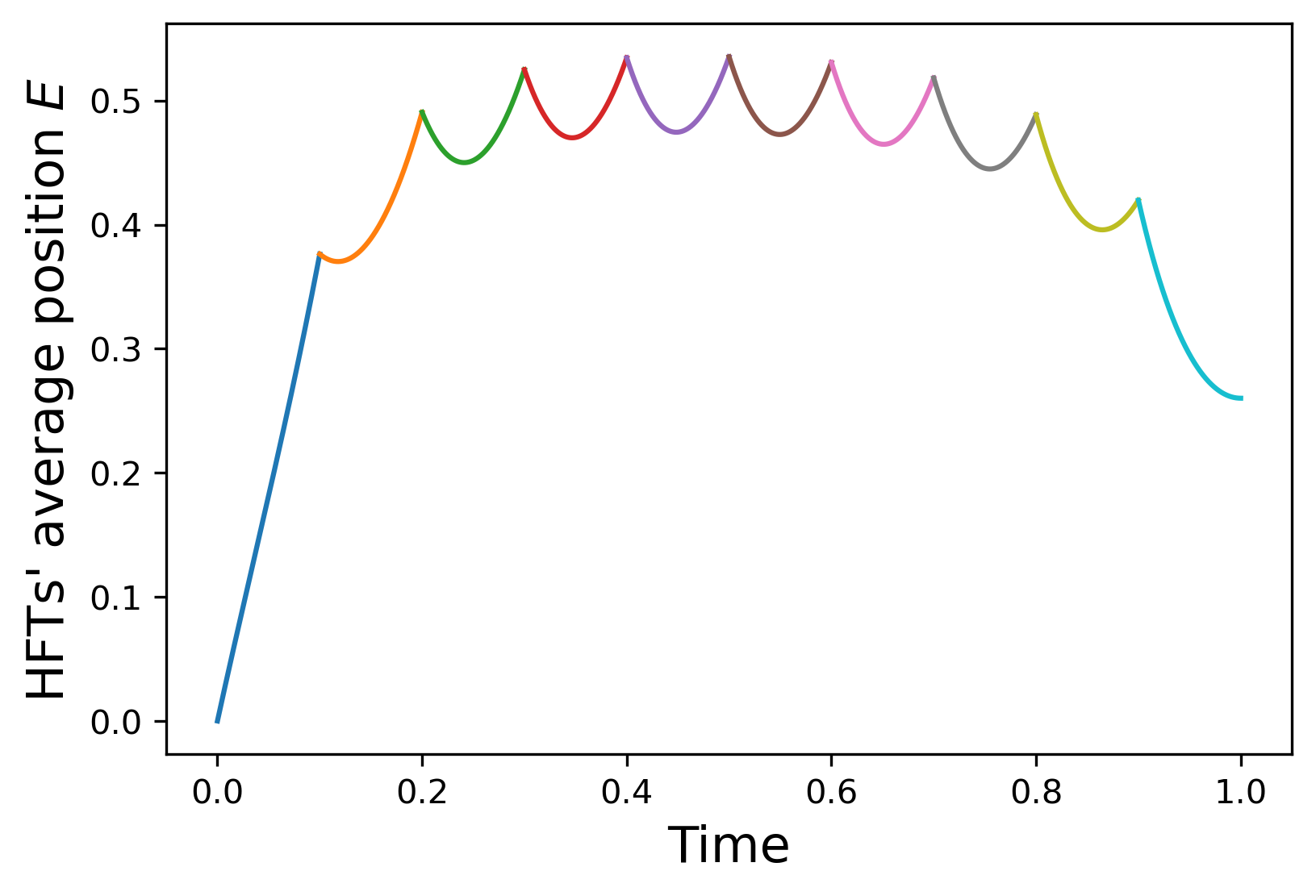}
    }   
    \caption{$\Gamma=0,$ HFTs' average position $E$.}
        \label{fignojumpLTnogamegam=0}
\end{figure}  



Next, we investigate HFTs' influences on LT's profit. Without HFTs, LT's expected profit is:
\begin{equation*}
\begin{aligned}
\mathbbm{E}(\pi^{LT}_0)
=P_0\xi_0-\gamma\sum_{k=1}^K\xi_k\sum_{j=1}^k\xi_j-(\lambda+\eta_0)\sum_{k=1}^K\xi_k^2.  
\end{aligned}
\end{equation*}
In presence of HFTs, LT's expected profit is:
\begin{equation*}
\begin{aligned}
\mathbbm{E}(\pi^{LT})=
\mathbbm{E}(\pi^{LT}_0)+\sum_{k=1}^K(-\xi_k)[\gamma^H(E(t_k)-E(0))+\lambda^H\mu(t_k)].
\end{aligned}
\end{equation*}
In other words, the extra cost (or surplus) that HFTs add to LT is the extra price impact that they cause.

In Figure \ref{fignojumpLTnogameLTprofitdiff}, we analyze how an inventory-averse HFT impact LT when the temporary impact $\lambda^H$ changes. Since the permanent impact $\gamma^H$ is given, it actually represents the relative size of $\lambda^H$ to $\gamma^H$. LT makes more profits with Round-Trippers if the temporary impact $\lambda^H$ is relatively large to the permanent impact $\gamma^H$. A large temporary impact implies that the majority of adverse impact of HFTs' preemitive same-direction trading will disappear and the opposite-direction trading decreases the cost of LT's order.
\begin{figure}[!htbp]
    \centering
    \includegraphics[width = 0.35\textwidth]{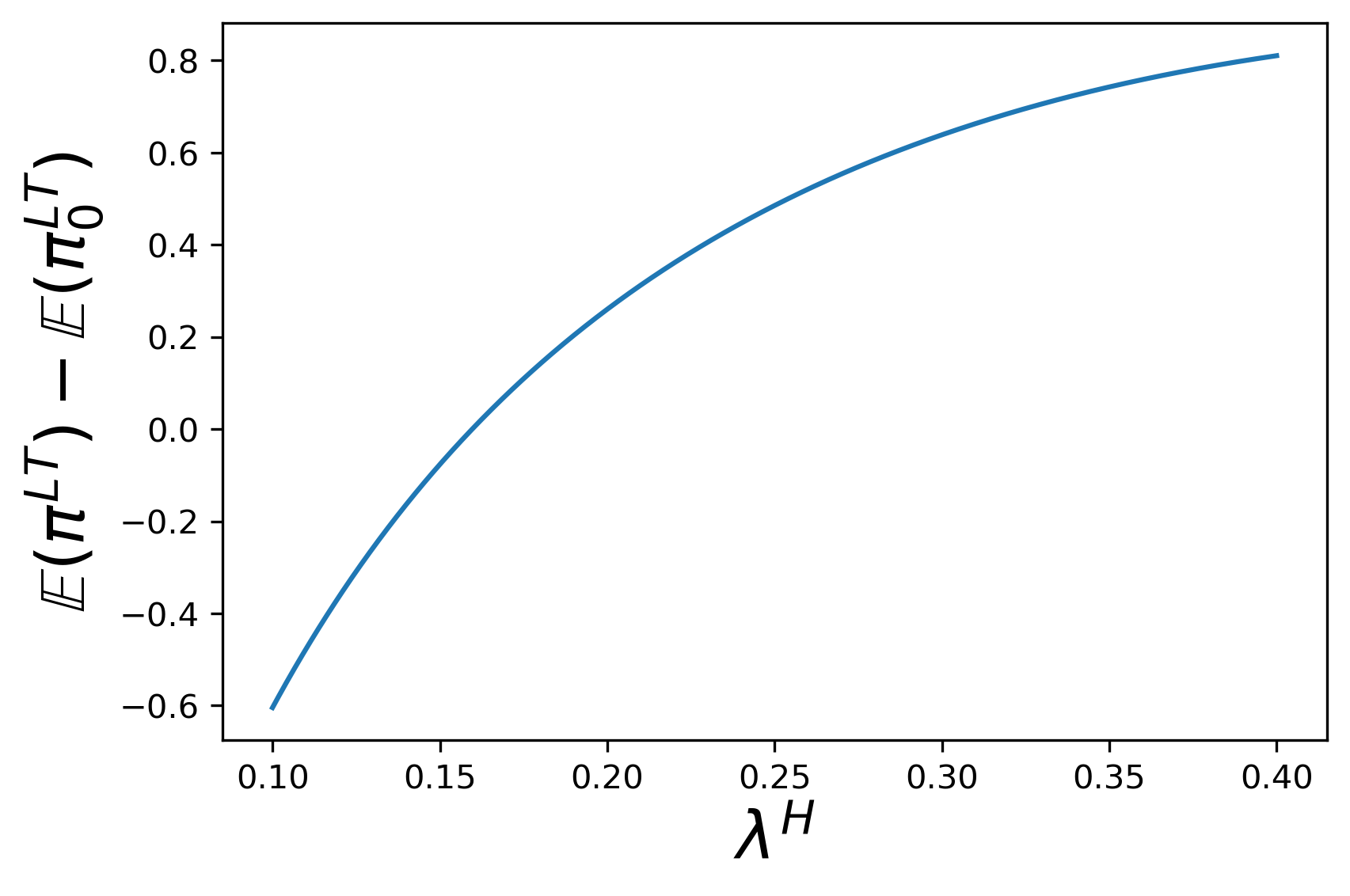}
    \caption{Difference of LT's profit when HFTs act as Round-Tripper ($\phi=10,\Gamma=2$).}
    \label{fignojumpLTnogameLTprofitdiff}
\end{figure}

In the general case, let $\boldsymbol{Q}=\begin{bmatrix}
        -x&x\\
        y&-y
    \end{bmatrix}$ and $\boldsymbol{p(0)}=\begin{bmatrix}
        1/2\\
        1/2
    \end{bmatrix}.$ If $x=y=0,$ the aversions of HFTs do not jump.

When $\phi(1)=\Gamma(1)=0$ and $\phi(2)=10,\Gamma(2)=2,$ HFTs are either with zero inventory aversions or averse to both running and ending positions. Comparing (a) and (c) in Figure \ref{figjumpLTnogameHFT}, where $p_1(t)=p_2(t)=\frac{1}{2},t\in[0,1],$ the crowd of HFTs exhibits different patterns of behavior. When $x=y=0.5,$ HFTs who are at $\phi=\Gamma=0$ should consider the switch to the other state $\phi=10,\Gamma=2,$ therefore control inventories in advance. In contrast, when $x=y=0,$ HFTs who are at $\phi=\Gamma=0$ are always free to accumulate positions. Consequently, although the components of HFTs are same in these two cases, HFTs whose inventory aversions may jump are more sensitive to positions and play the role of Round-Tripper.

When $x,y>0,$ a larger $x$ implies that the proportion of HFTs with higher inventory aversion will grow, consequently, the crowd of HFT tends to perform like a Round-Tripper both throughout and within the execution period. What's more, when HFTs play the role of Round-Tripper, LT is benefited if the temporary impact is relatively large to the permanent impact.

\begin{figure}[!htbp]
    \centering
    \subcaptionbox{$x=0,y=0$}{
    \includegraphics[width = 0.27\textwidth]{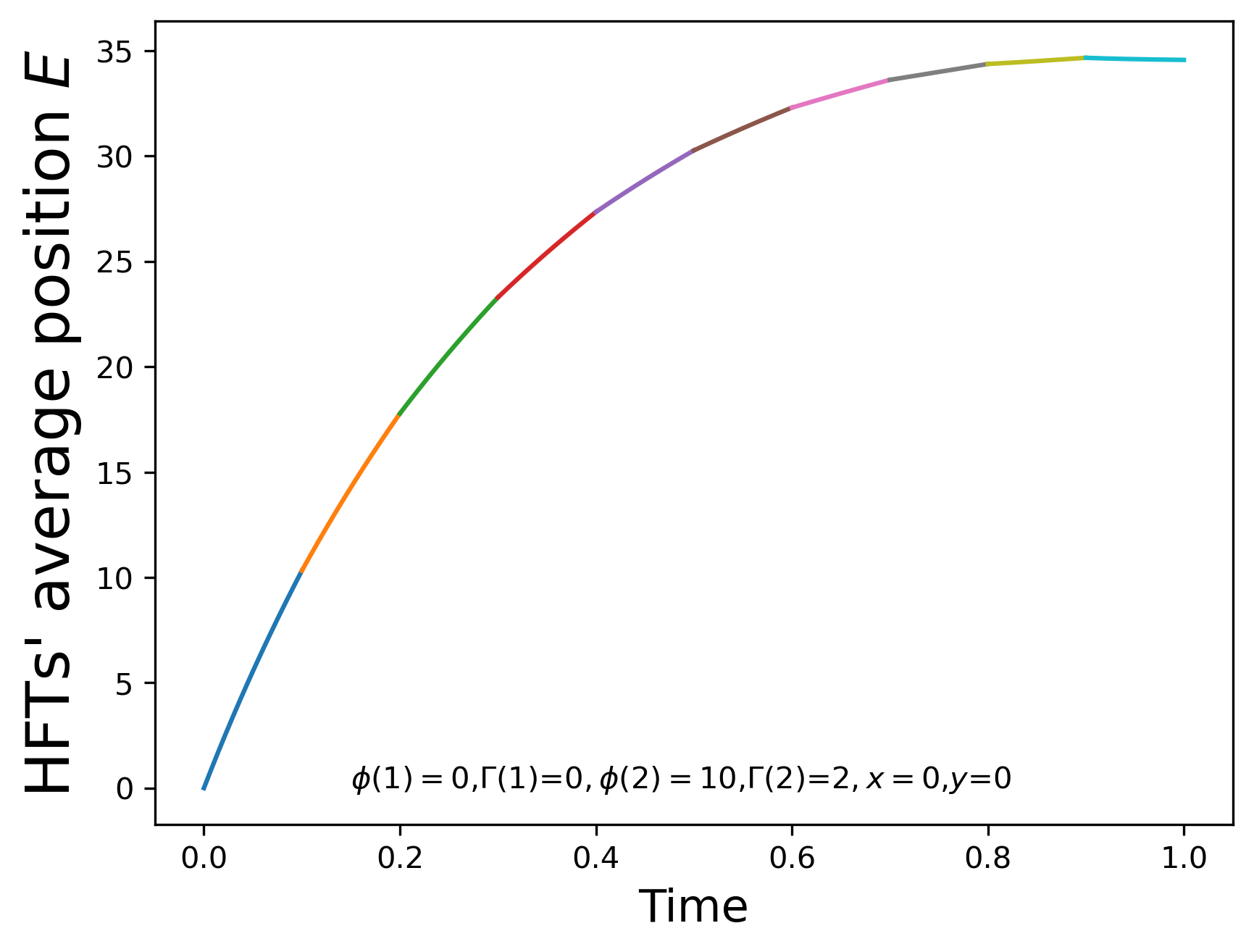}
    }
\subcaptionbox{$x=0.2,y=0.8$}{
    \includegraphics[width = 0.27\textwidth]{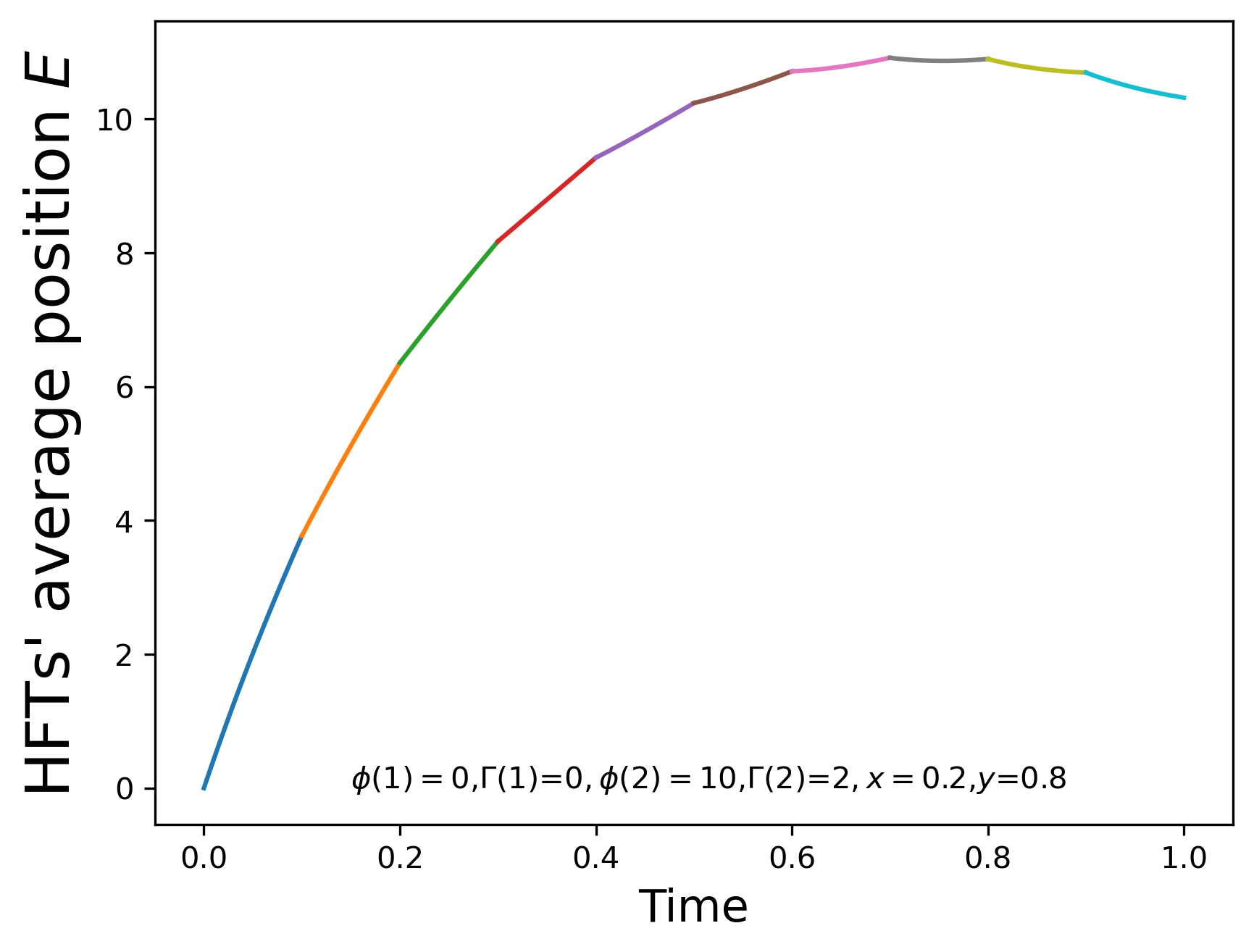}
    }
    
\subcaptionbox{$x=0.5,y=0.5$}{
    \includegraphics[width = 0.27\textwidth]{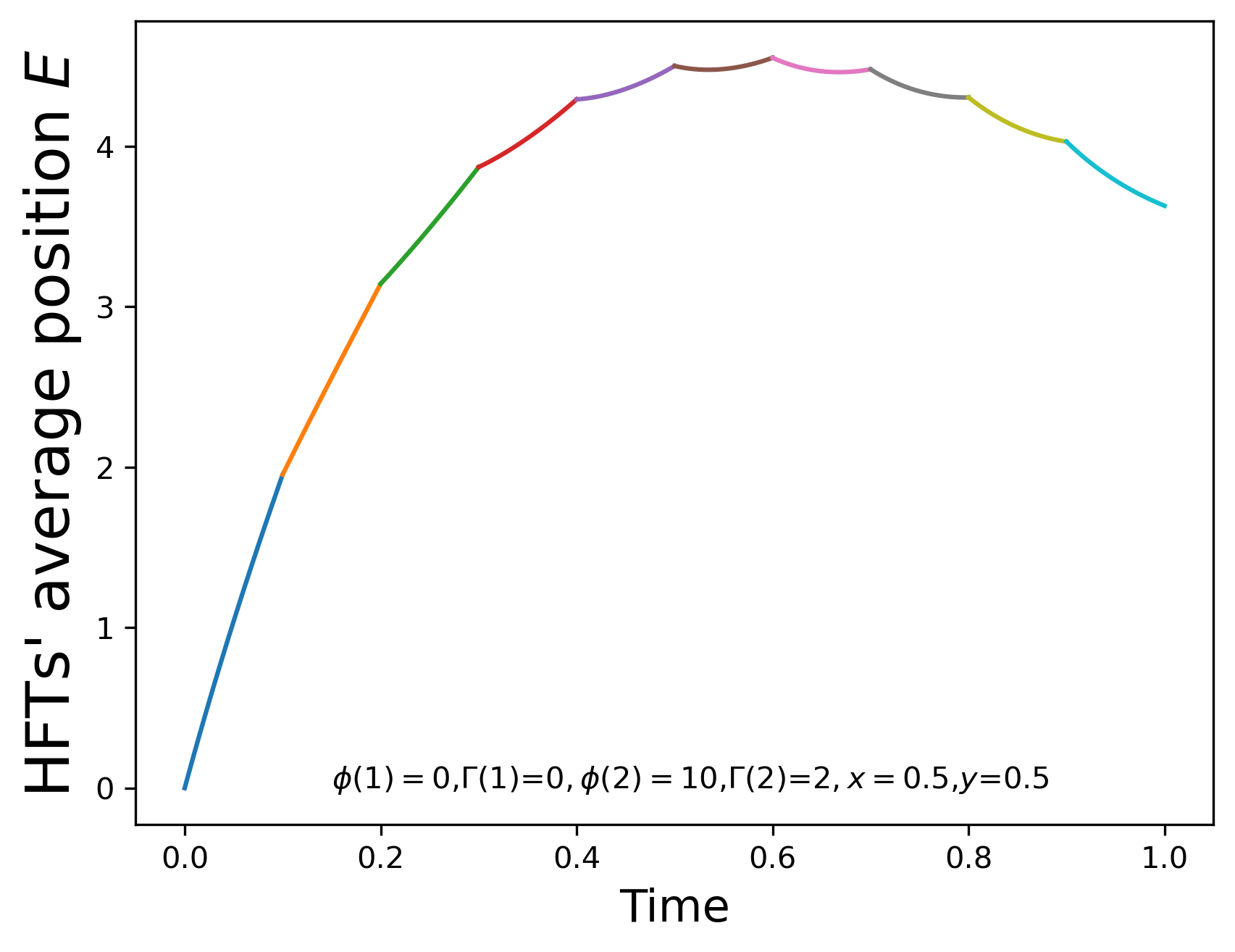}
    }
 \subcaptionbox{$x=0.8,y=0.2$}{
    \includegraphics[width = 0.27\textwidth]{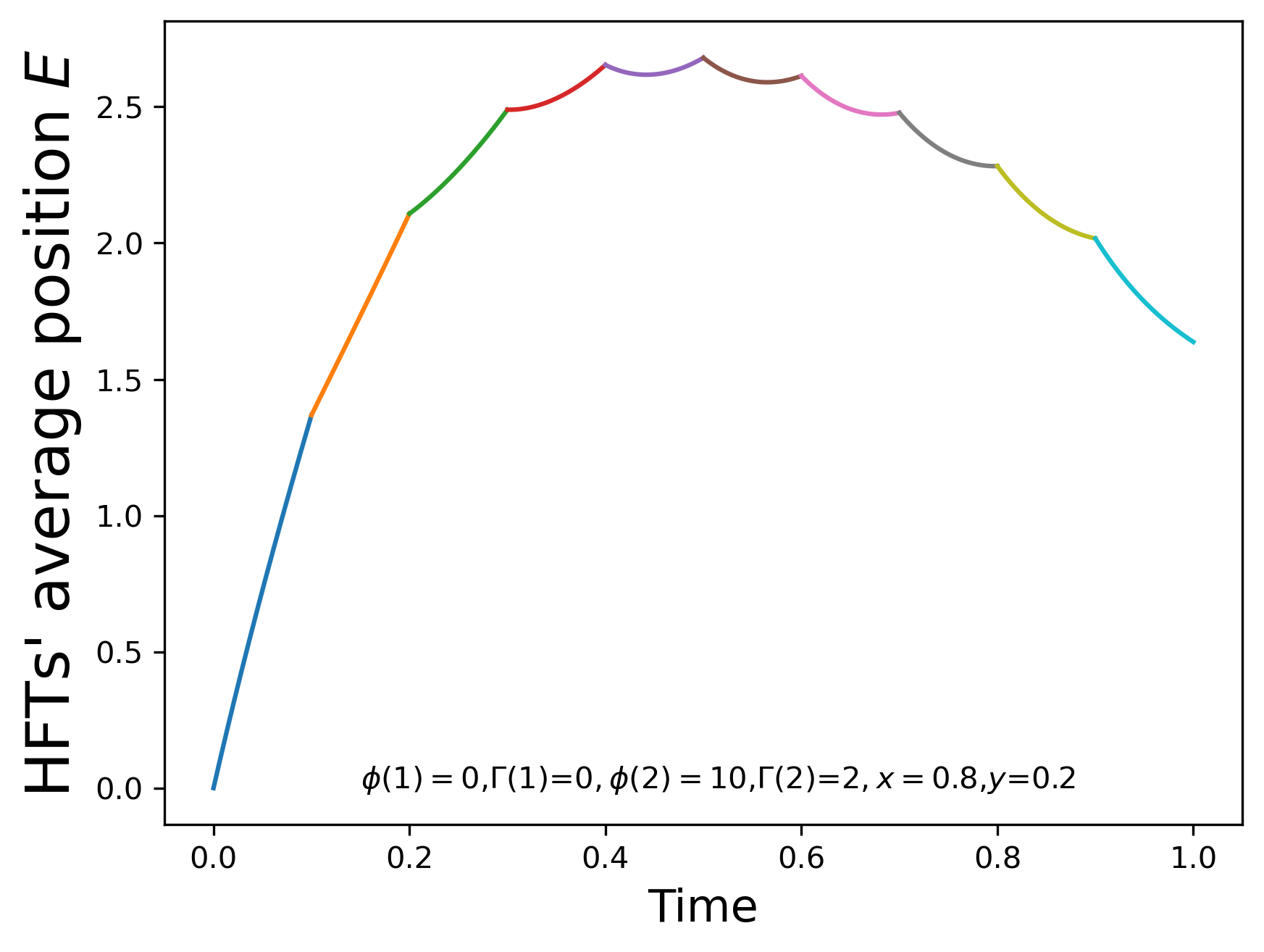}
    }   
\caption{$\phi(1)=0,\Gamma(1)=0,\phi(2)=10,\Gamma(2)=2,$ HFTs' average position $E$.}
\label{figjumpLTnogameHFT}
\end{figure}

When $\phi(1)=0,\Gamma(1)=2$ and $\phi(2)=10,\Gamma(2)=0,$ a larger $x$ implies that the crowd of HFT hates running position more; a larger $y$ implies that the crowd of HFT hates ending position more. As a result, with the increase of $x$, the tendency that HFTs play the role of Round-Tripper at each transaction of the large order is more obvious.

\begin{figure}[!htbp]
    \centering
    \subcaptionbox{$x=0,y=0$}{
    \includegraphics[width = 0.27\textwidth]{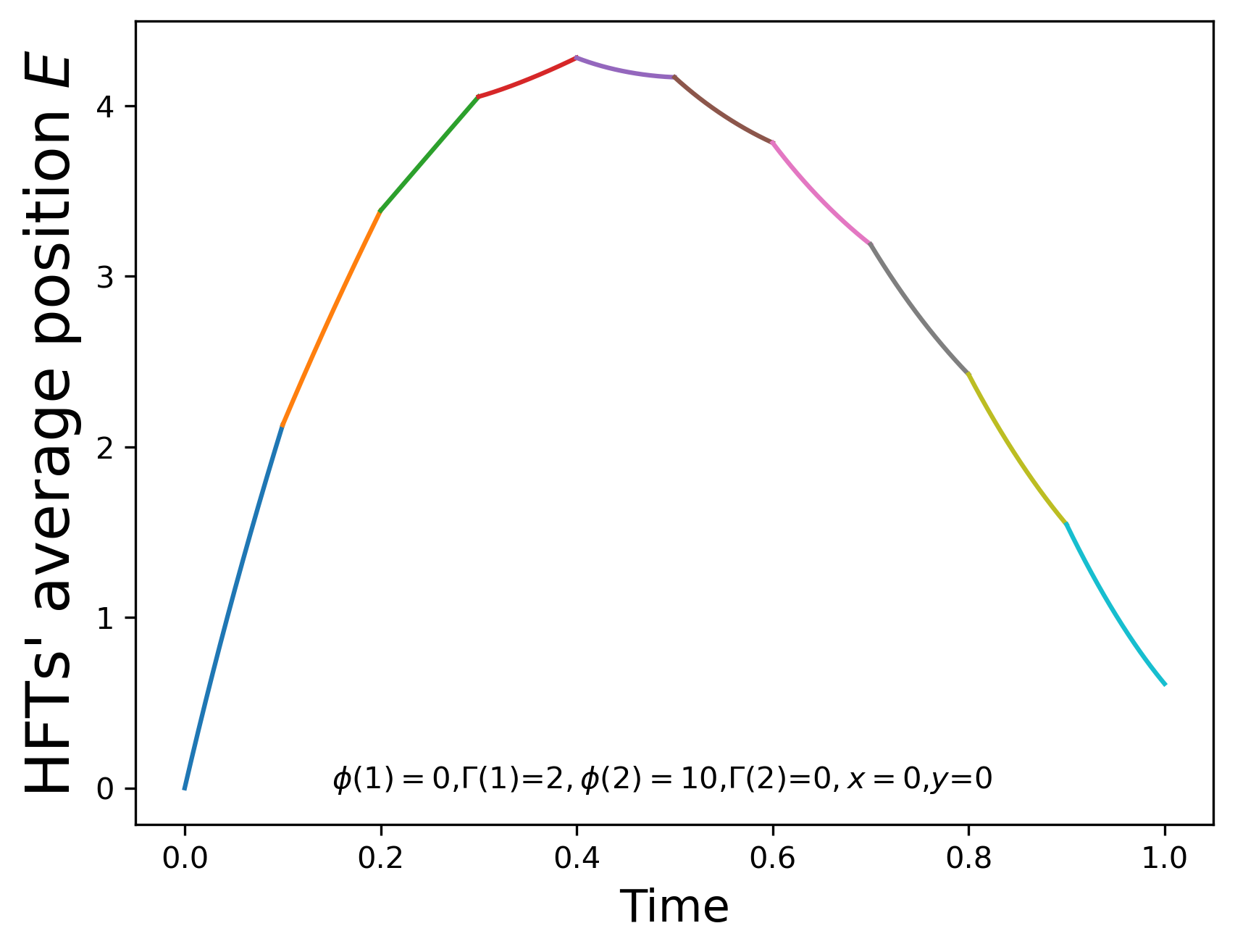}
    }
\subcaptionbox{$x=0.2,y=0.8$}{
    \includegraphics[width = 0.27\textwidth]{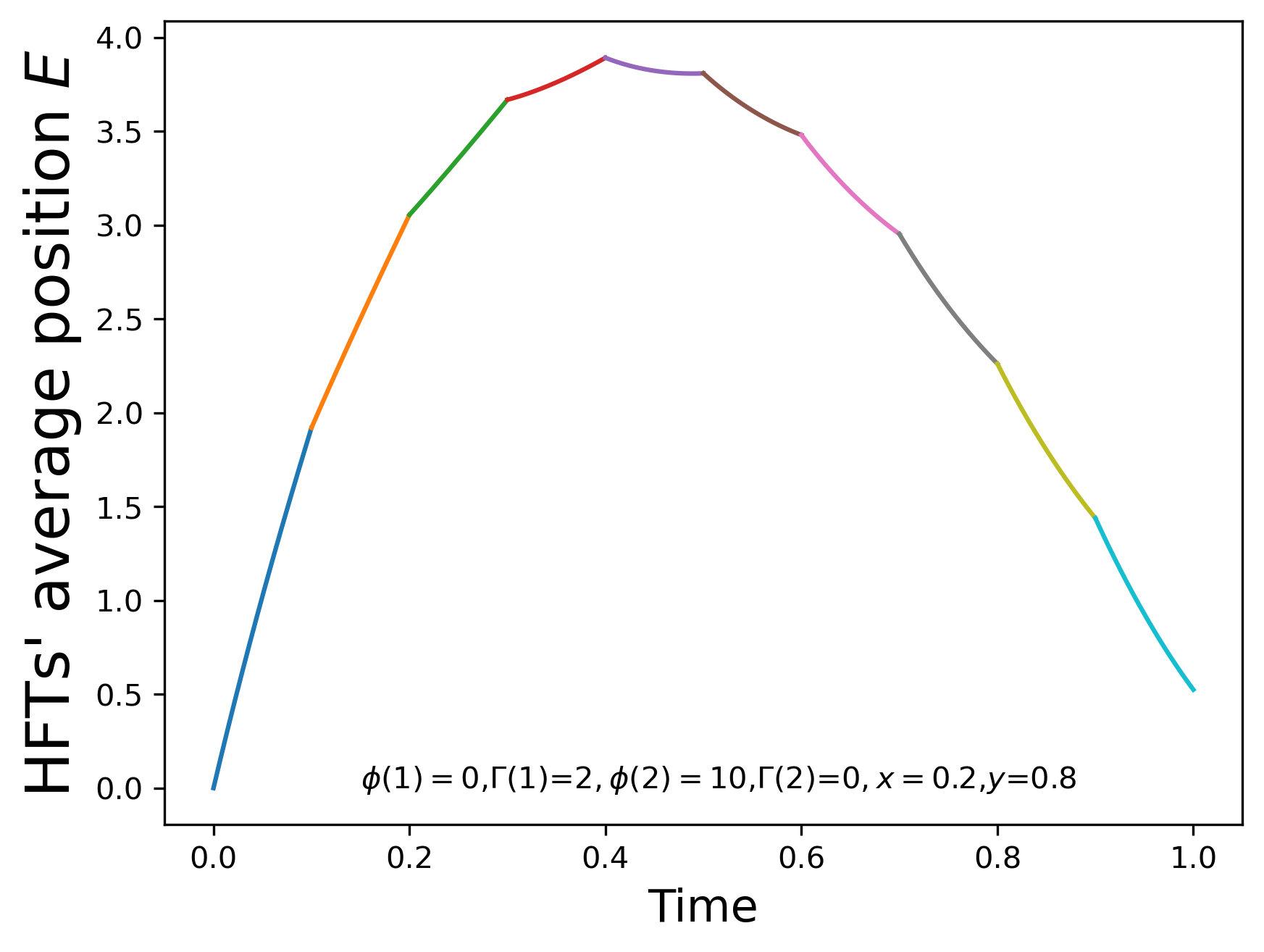}
    }
    
\subcaptionbox{$x=0.5,y=0.5$}{
    \includegraphics[width = 0.27\textwidth]{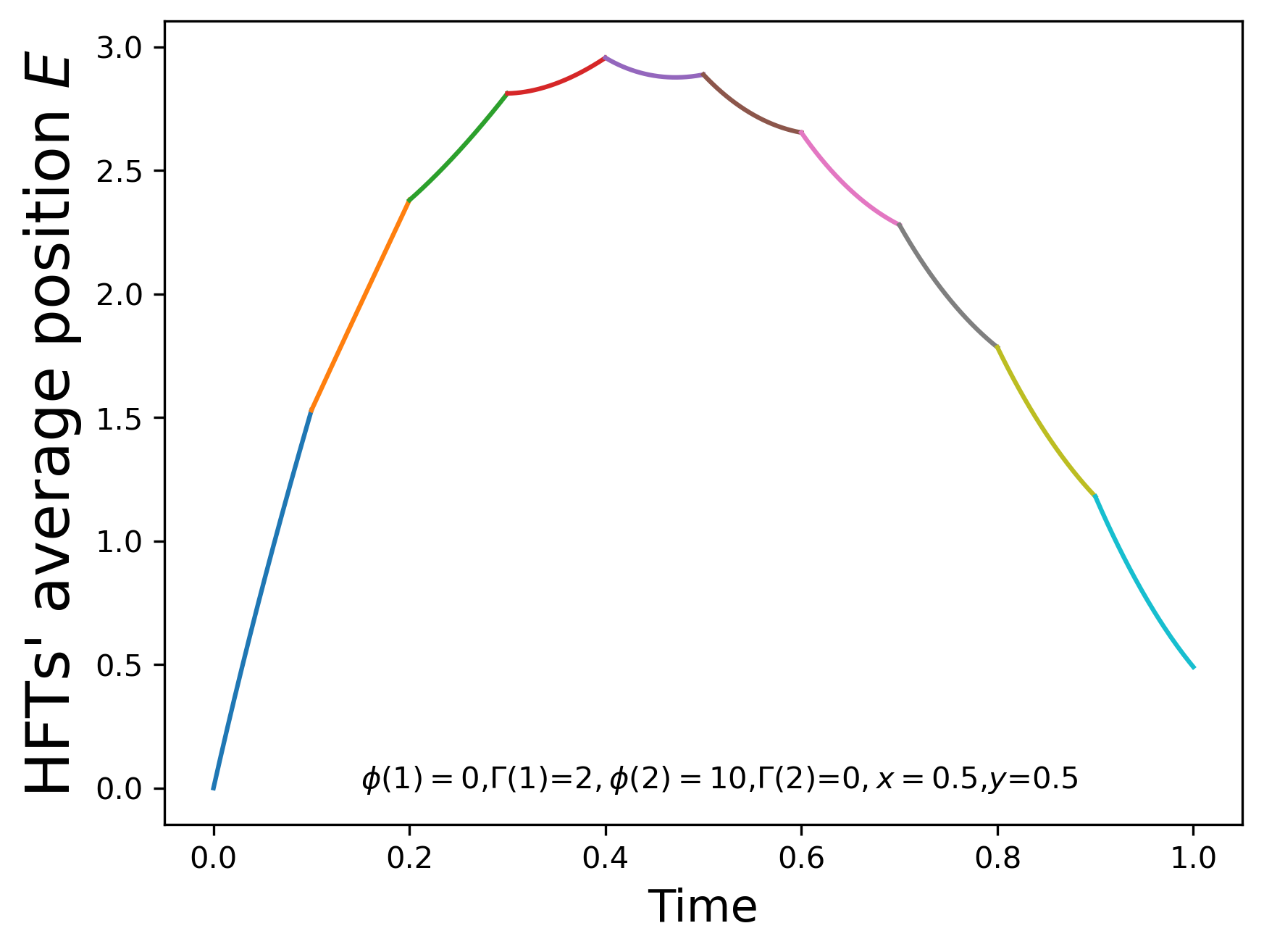}
    }
 \subcaptionbox{$x=0.8,y=0.2$}{
    \includegraphics[width = 0.27\textwidth]{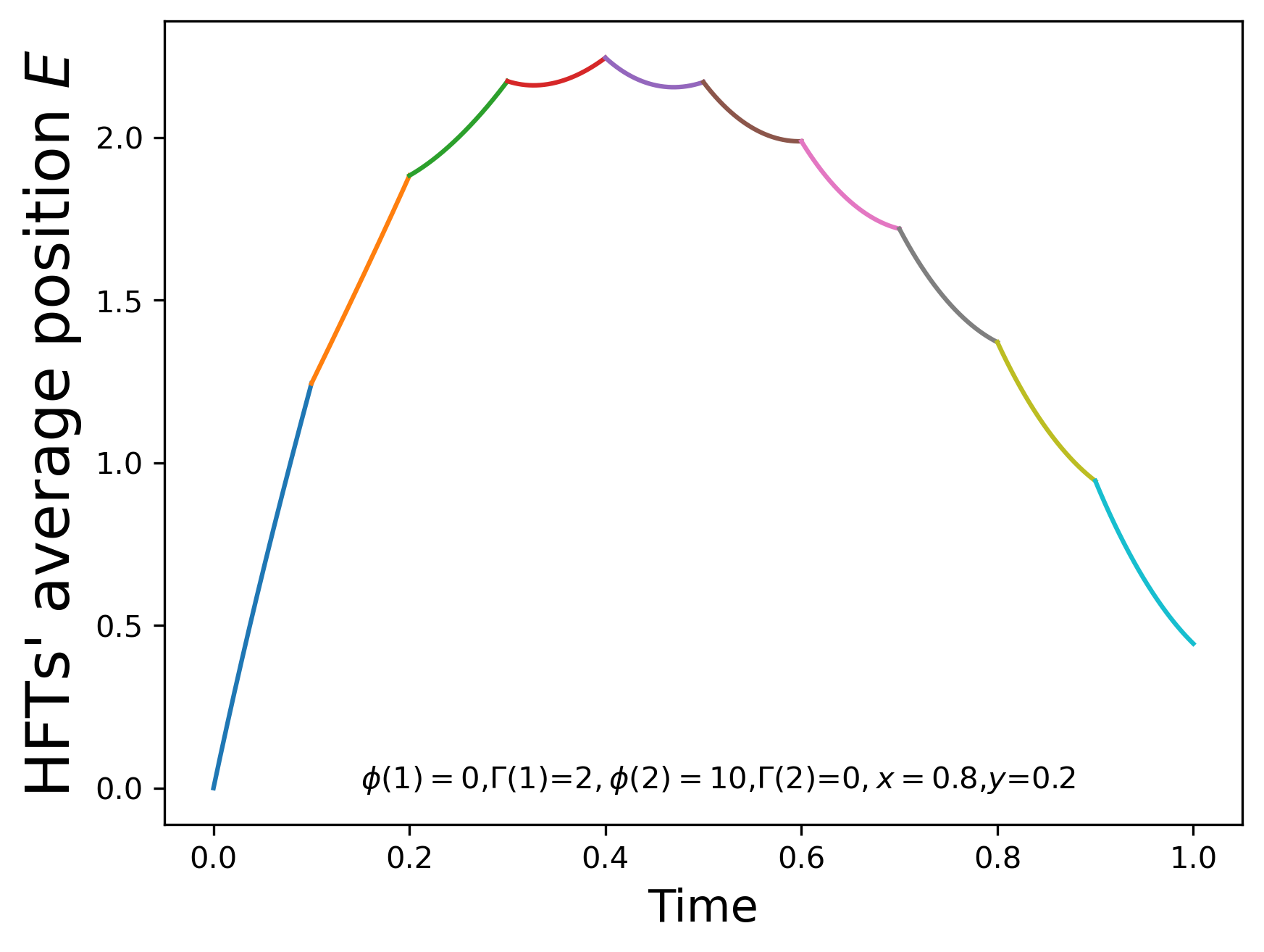}
    }   
\caption{$\phi(1)=0,\Gamma(1)=2,\phi(2)=10,\Gamma(2)=0,$ HFTs' average position $E$.}
\end{figure}

\newpage
\subsection{Overall Nash equilibrium}
In this section, the overall Nash equilibrium between LT and HFTs is studied.
We still first analyze the case where HFTs are of same inventory aversion.  \eqref{mean_field_equation_nojump} and \eqref{LTopt} give the unique Nash equilibrium between LT and HFTs.

LT is going to execute $\xi_0=-9$ during $[0,1]$ and she trades at time points $t_k=\frac{k}{10},k=1,...,9.$ Without HFTs, it is easy to verify that the optimal strategy for LT is
\begin{equation*}
    \xi_k=-\frac{\xi_0}{K},\ k=1,...,K.
\end{equation*}

Figure \ref{fignojumpLTgamephi=0LT} and \ref{fignojumpLTgamephi=0HFT} compare LT and HFTs' strategies in partial and overall equilibrium, when $\phi=0$. $\Gamma$ decides whether HFTs provide liquidity in the latter half of LT's execution period. 
When HFTs always trade along with LT (see (a) of Figure \ref{fignojumpLTgamephi=0HFT}), LT trade more intensely at the beginning, to avoid the cumulative adverse impact caused by HFTs' same-direction orders (see (a) of Figure \ref{fignojumpLTgamephi=0LT}). Correspondingly, it suppress the same-direction trading of HFTs.

When HFTs act as Round-Tripper (see (b) and (c) of Figure \ref{fignojumpLTgamephi=0HFT}), LT buys more when HFTs supply liquidity back (see (b) and (c) of Figure \ref{fignojumpLTgamephi=0LT}). A larger $\Gamma$ leads LT to delay trading more shares.
\begin{figure}[!htbp]
    \centering
\subcaptionbox{$\Gamma=0$}{
    \includegraphics[width = 0.27\textwidth]{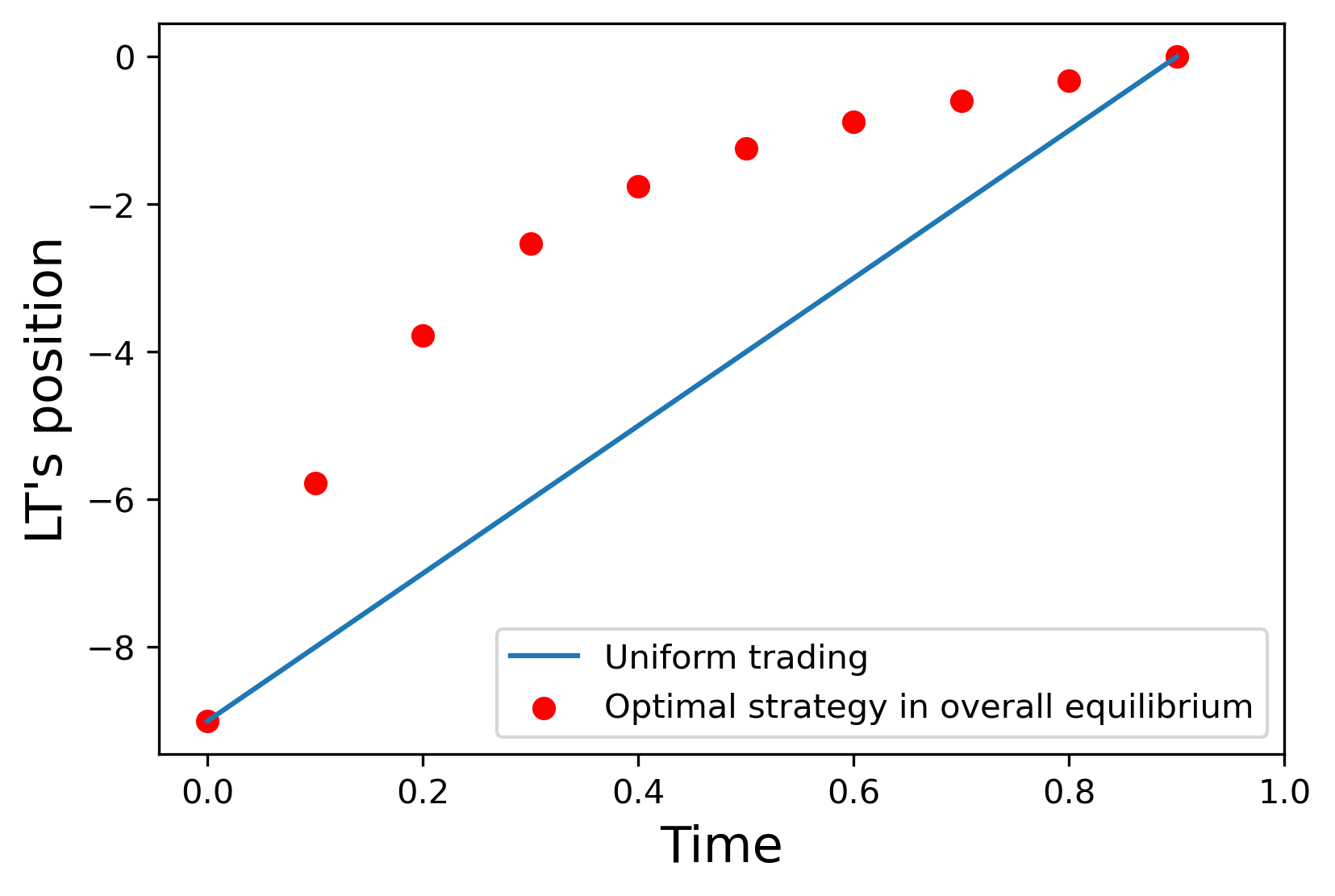}
    }
\subcaptionbox{$\Gamma=0.1$}{
    \includegraphics[width = 0.27\textwidth]{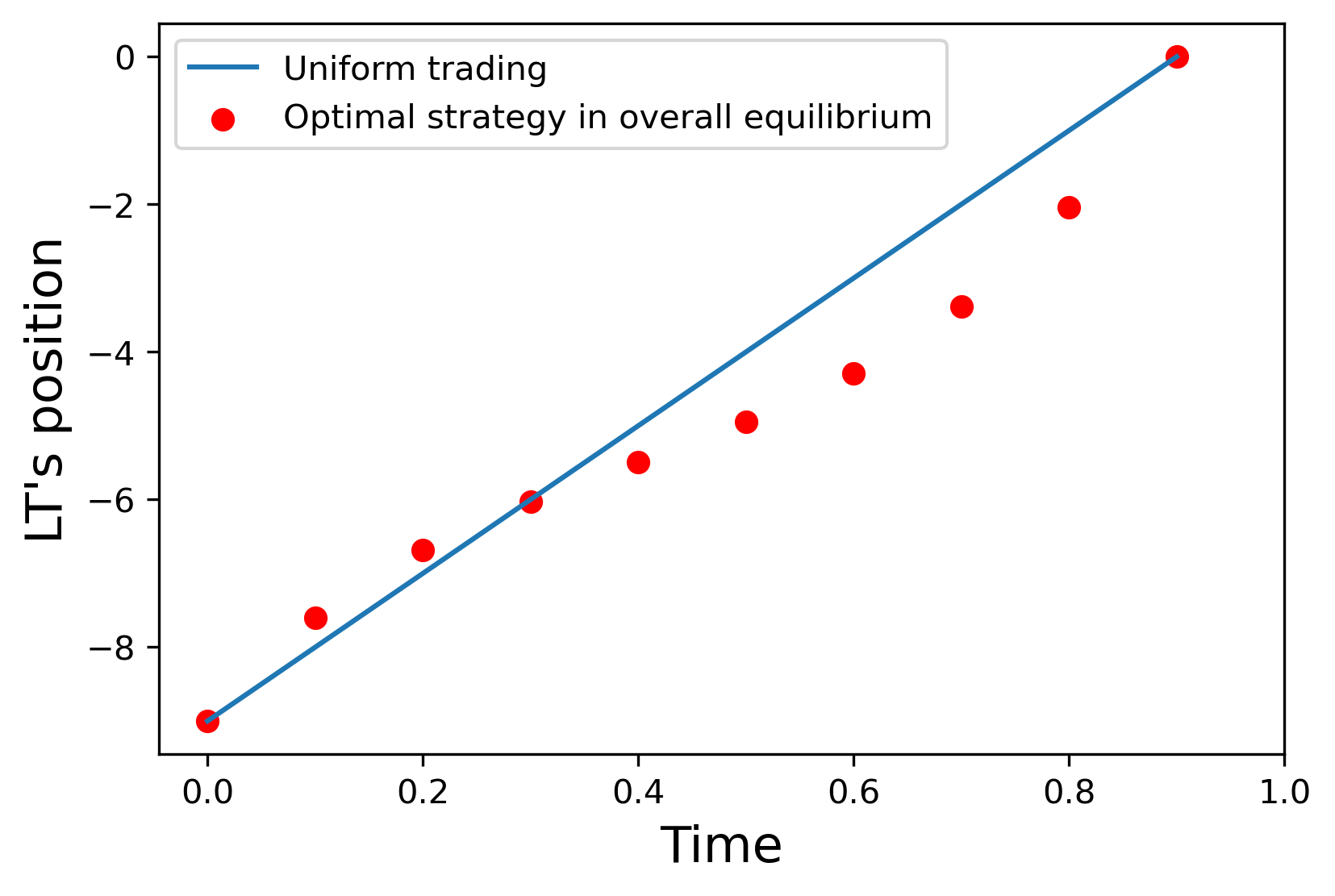}
    }
 \subcaptionbox{$\Gamma=2$}{
    \includegraphics[width = 0.27\textwidth]{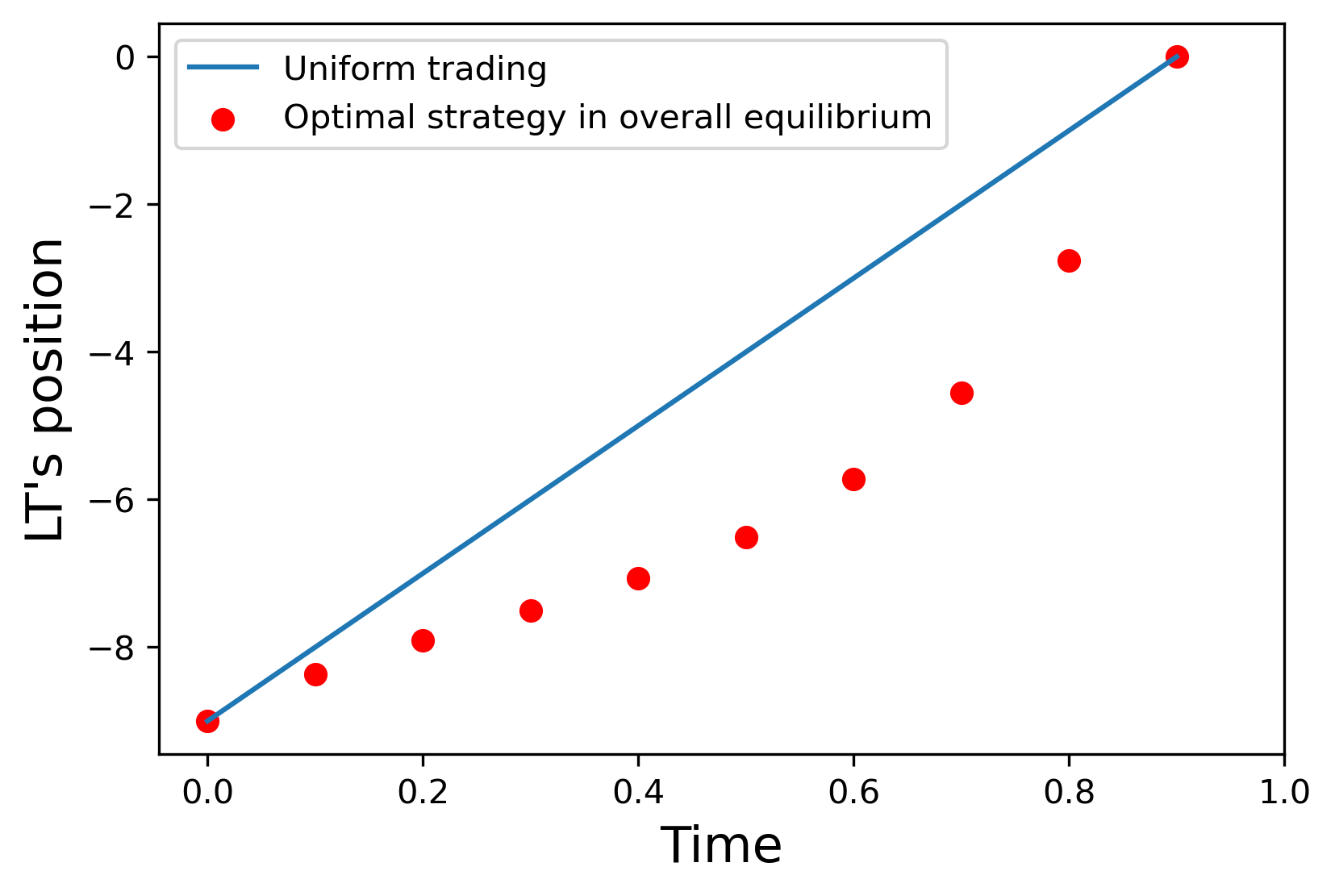}
    }   
    \caption{$\phi=0,$ LT's strategy.}
    \label{fignojumpLTgamephi=0LT}
\end{figure}  

\begin{figure}[!htbp]
    \centering
\subcaptionbox{$\Gamma=0$}{
    \includegraphics[width = 0.27\textwidth]{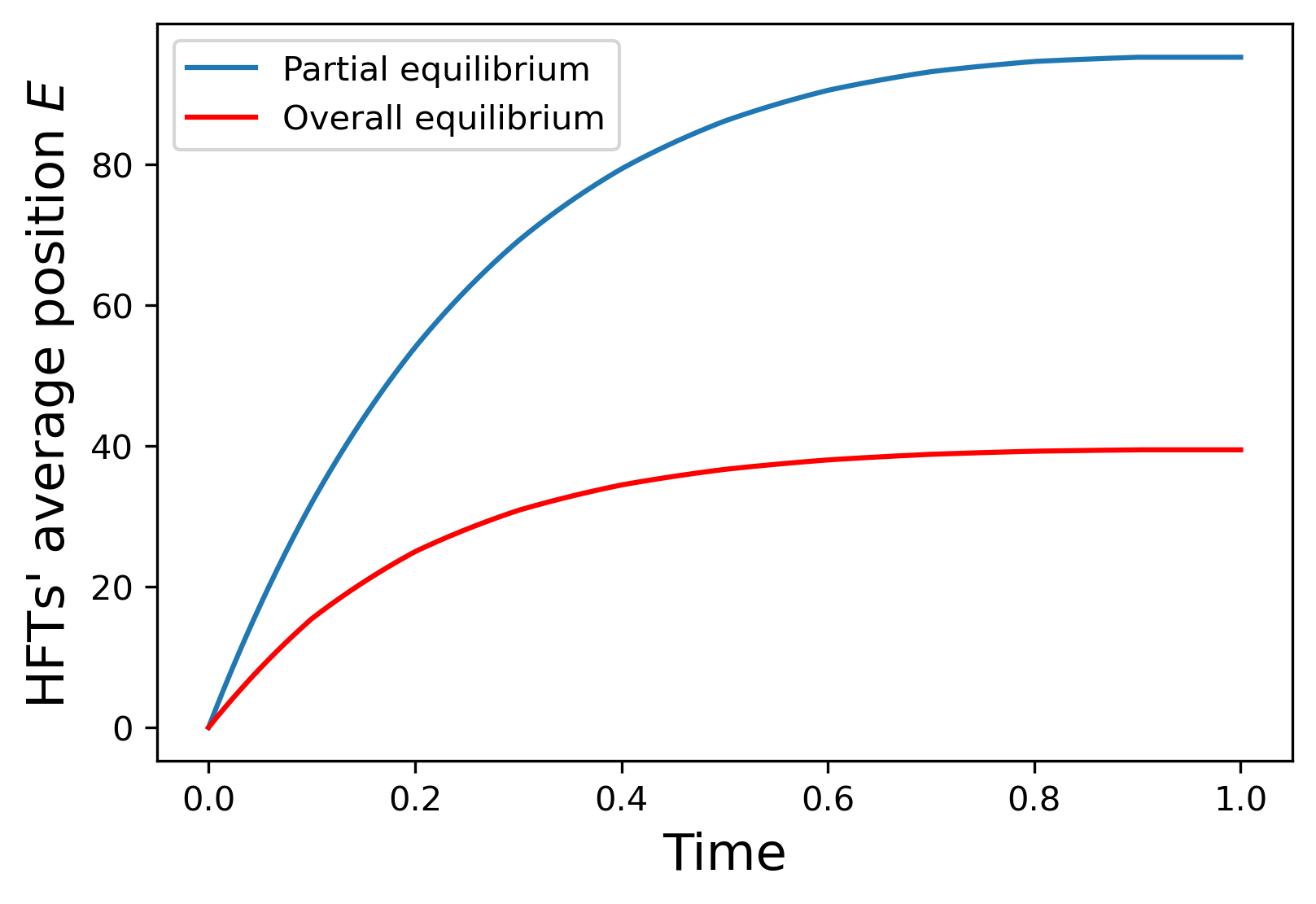}
    }
\subcaptionbox{$\Gamma=0.1$}{
    \includegraphics[width = 0.27\textwidth]{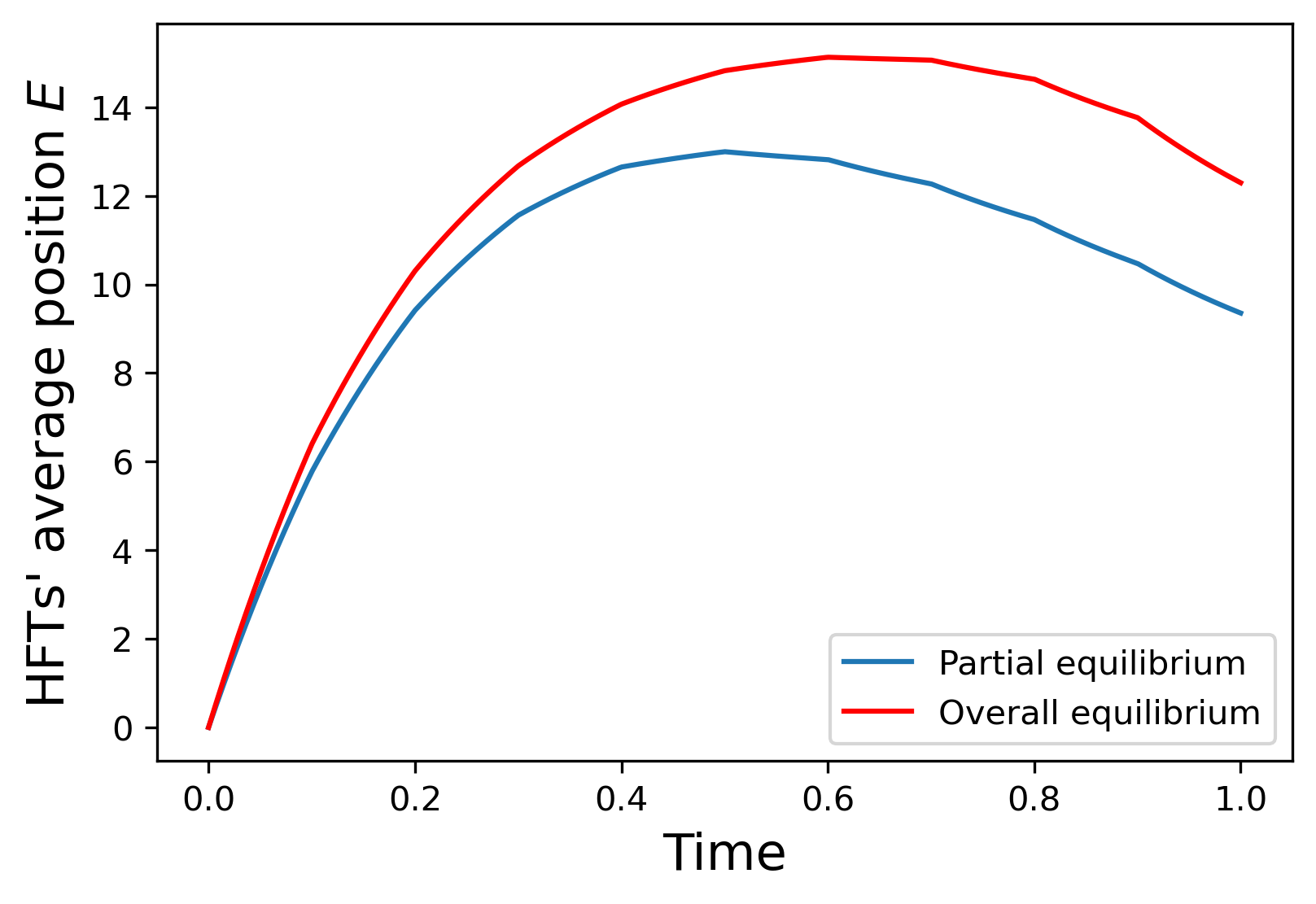}
    }
 \subcaptionbox{$\Gamma=2$}{
    \includegraphics[width = 0.27\textwidth]{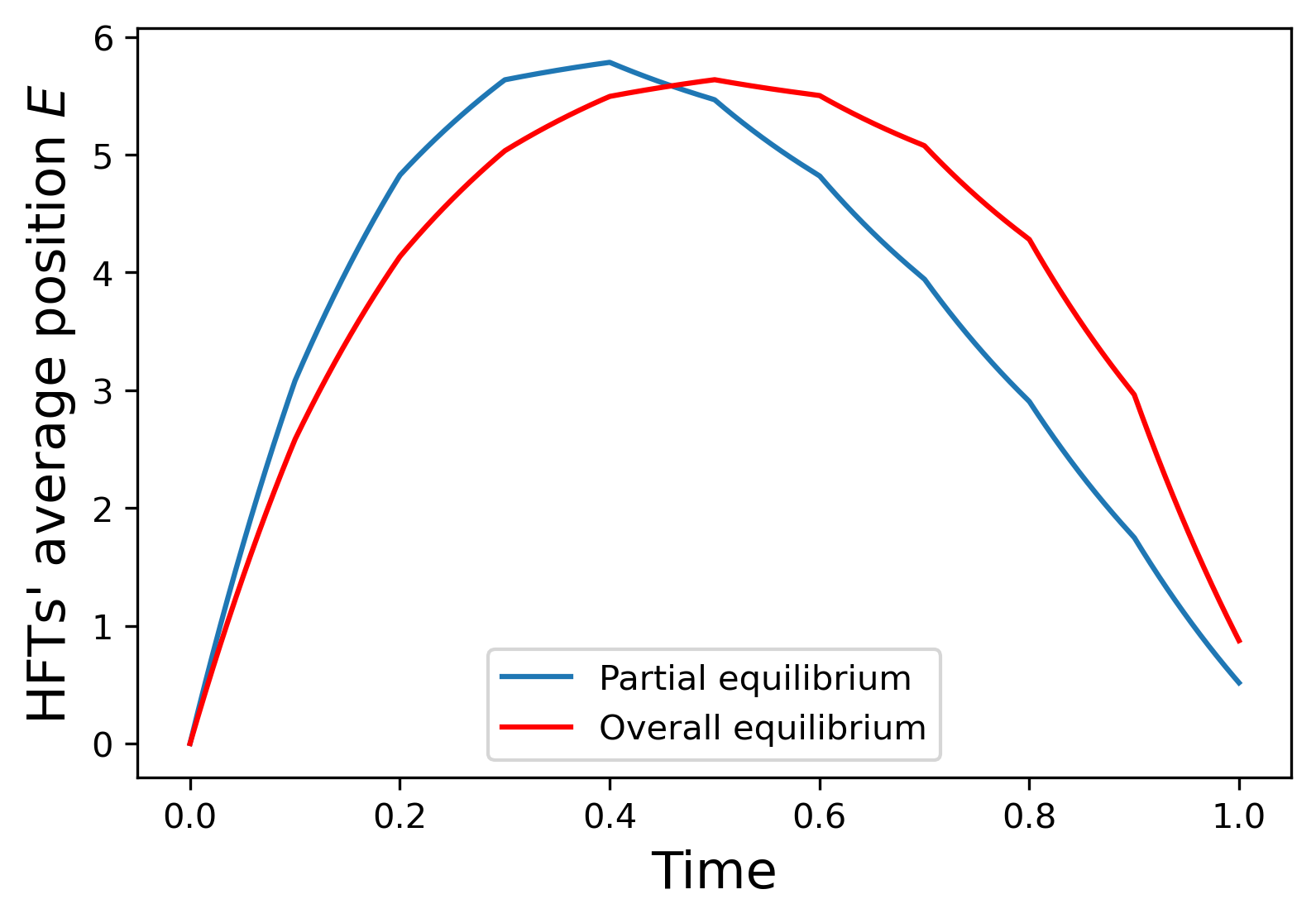}
    }   
    \caption{$\phi=0,$ HFTs' average position $E$.}
    \label{fignojumpLTgamephi=0HFT}
\end{figure}

Comparing LT's strategy with different $\phi,$ as shown in Figure \ref{fignojumpLTgamegam=0LT}, we find that a larger $\phi$ makes LT's strategy closer to uniform trading. It is because a larger $\phi$ leads HFT to be a running Round-Tripper: they constantly buy and sell around large orders. In a sense, it stabilizes the price of each transaction of LT.

\begin{figure}[!htbp]
    \centering
\subcaptionbox{$\phi=0$}{
    \includegraphics[width = 0.27\textwidth]{fig/nojump/LTgame/phi0Gam0LT.png}
    }
\subcaptionbox{$\phi=1$}{
    \includegraphics[width = 0.27\textwidth]{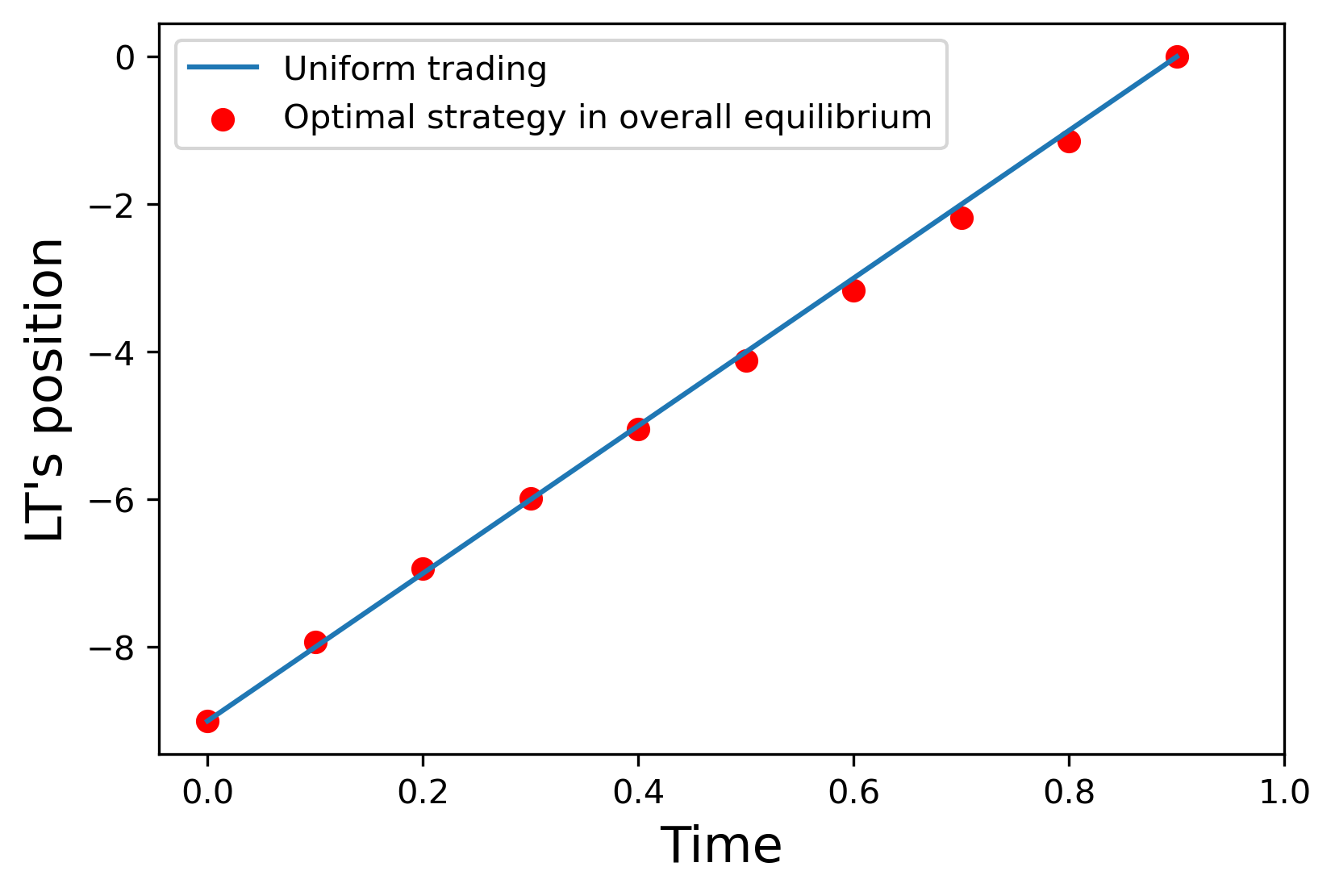}
    }
 \subcaptionbox{$\phi=5$}{
    \includegraphics[width = 0.27\textwidth]{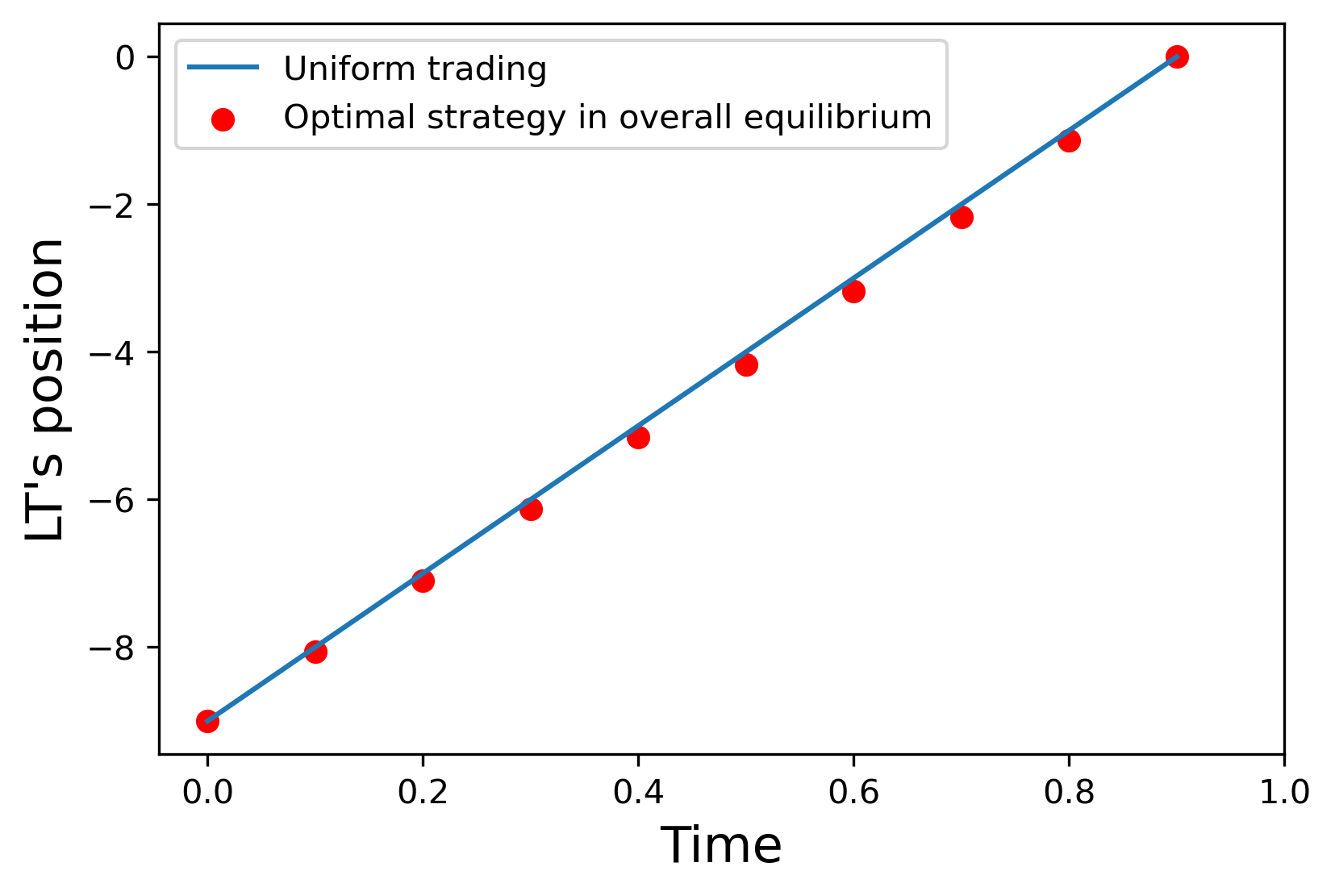}
    }   
    \caption{$\Gamma=0,$ LT's strategy.}
    \label{fignojumpLTgamegam=0LT}
\end{figure}  

\begin{figure}[!htbp]
    \centering
\subcaptionbox{$\phi=0$}{
    \includegraphics[width = 0.27\textwidth]{fig/nojump/LTgame/phi0Gam0HFT.png}
    }
\subcaptionbox{$\phi=1$}{
    \includegraphics[width = 0.27\textwidth]{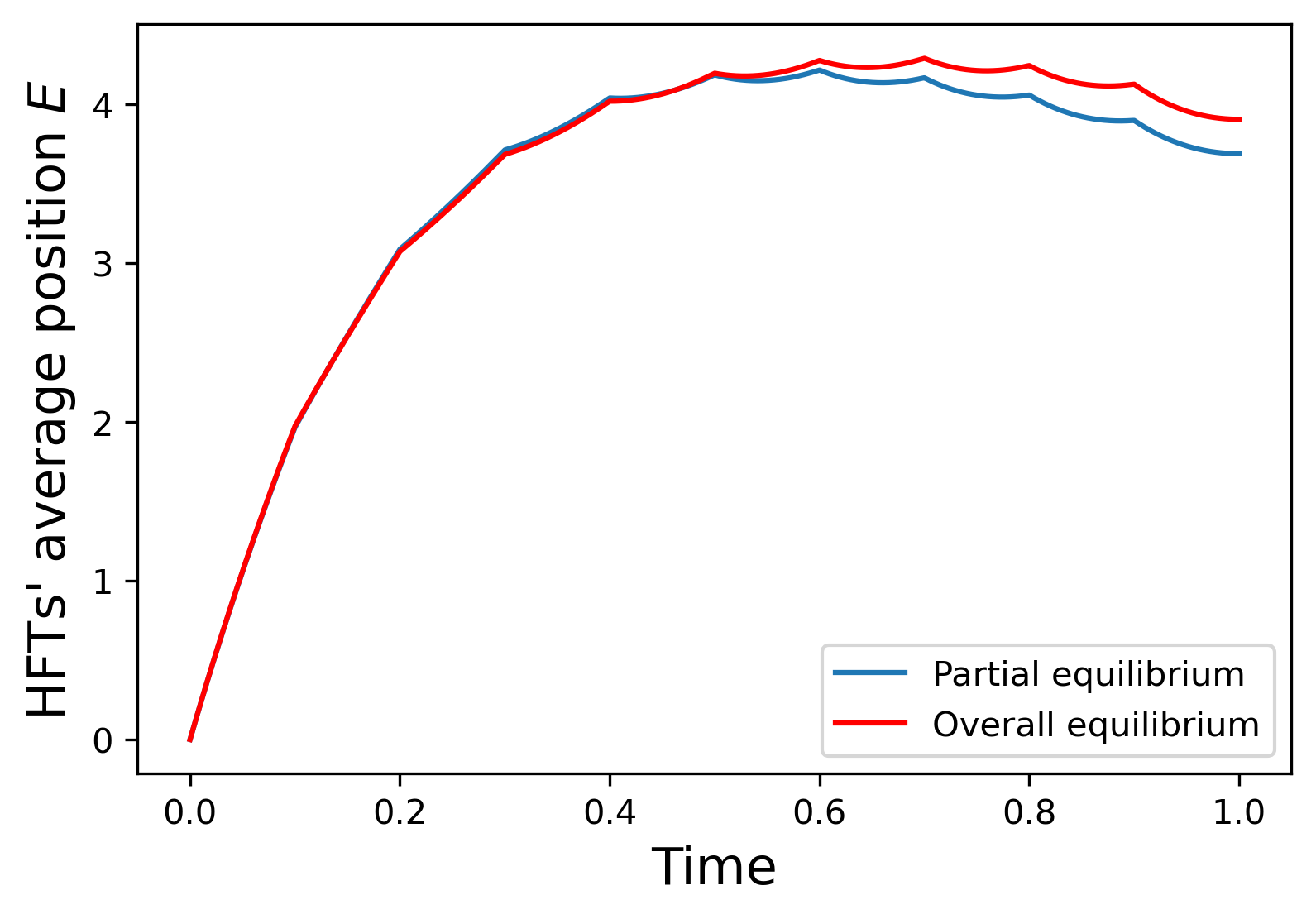}
    }
 \subcaptionbox{$\phi=5$}{
    \includegraphics[width = 0.27\textwidth]{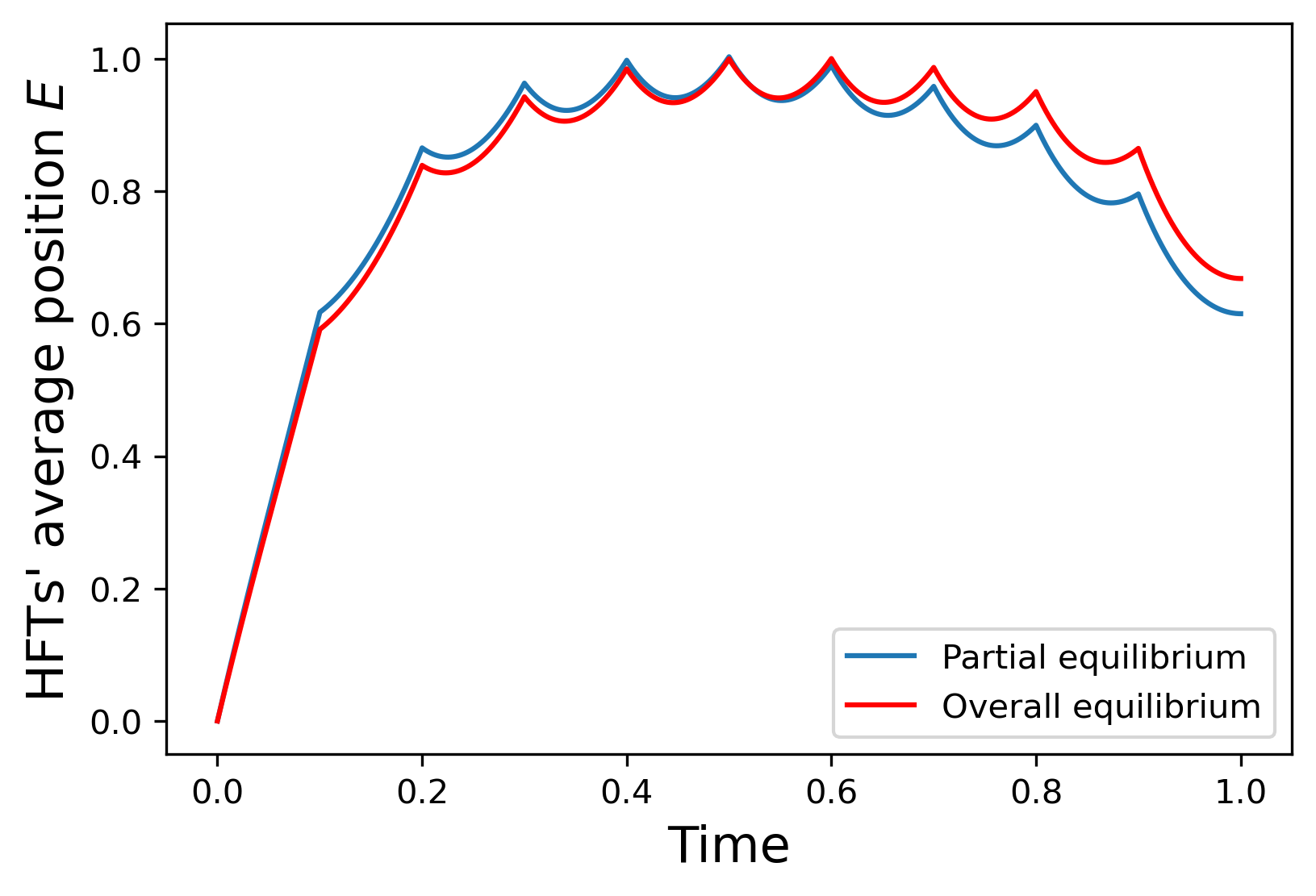}
    }   
    \caption{$\Gamma=0,$ HFTs' average position $E$.}
    \label{fignojumpLTgamegam=0HFT}
\end{figure}

In Figure \ref{fignojumpLTgameLTprofitdiff}, we find that the conclusions about HFTs' impact on LT's profit hold: LT's profit is higher with HFTs if the crowd of HFT plays the role of Round-Tripper and the temporary impact $\lambda^H$ is relatively large to the permanent impact $\gamma^H$.

\begin{figure}[!htbp]
    \centering
    \includegraphics[width = 0.35\textwidth]{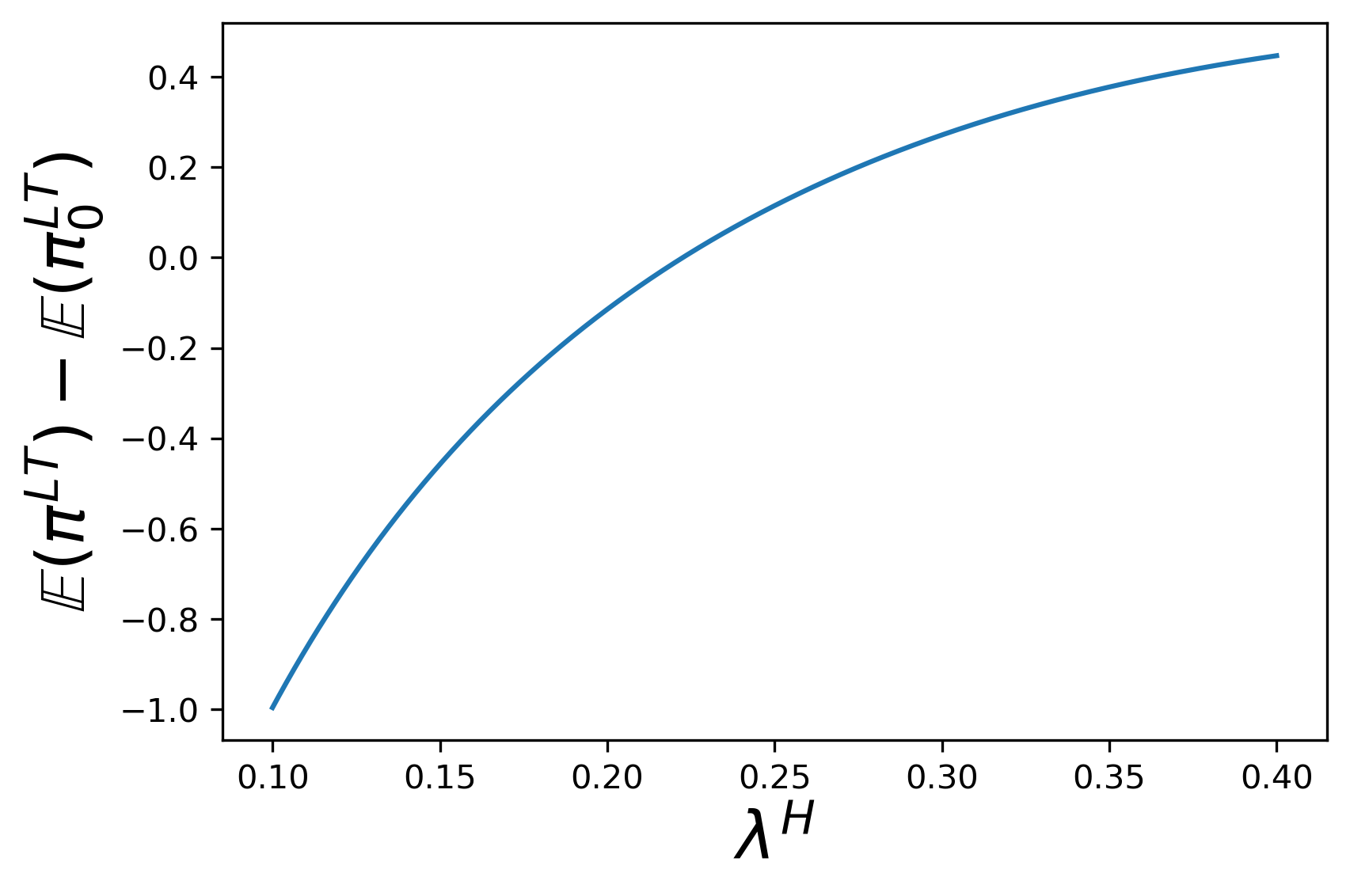}
    \caption{Difference of LT's profit when HFTs act as Round-Tripper ($\phi=10,\Gamma=2$).}
    \label{fignojumpLTgameLTprofitdiff}
\end{figure}

\newpage
In the general case, when $\phi(1)=0,\Gamma(1)=2$ and $\phi(2)=10,\Gamma(2)=0,$ a larger $x$ implies that the crowd of HFT is more averse to running positions. As a result, LT's optimal strategy comes close to uniform trading.

     \begin{figure}[!htbp]
    \centering
    \subcaptionbox{$x=0,y=0$}{
    \includegraphics[width = 0.27\textwidth]{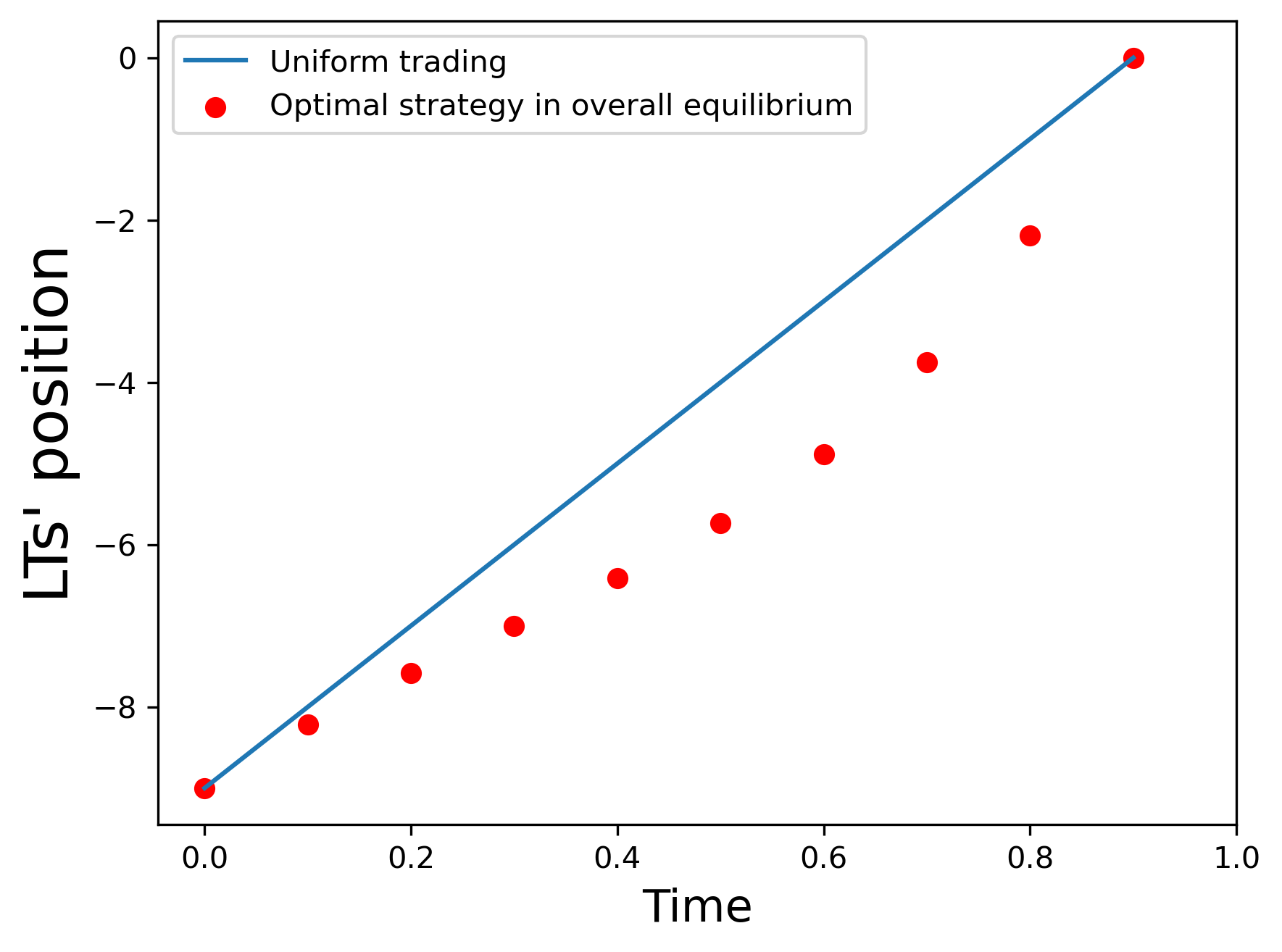}
    }
\subcaptionbox{$x=0.2,y=0.8$}{
    \includegraphics[width = 0.27\textwidth]{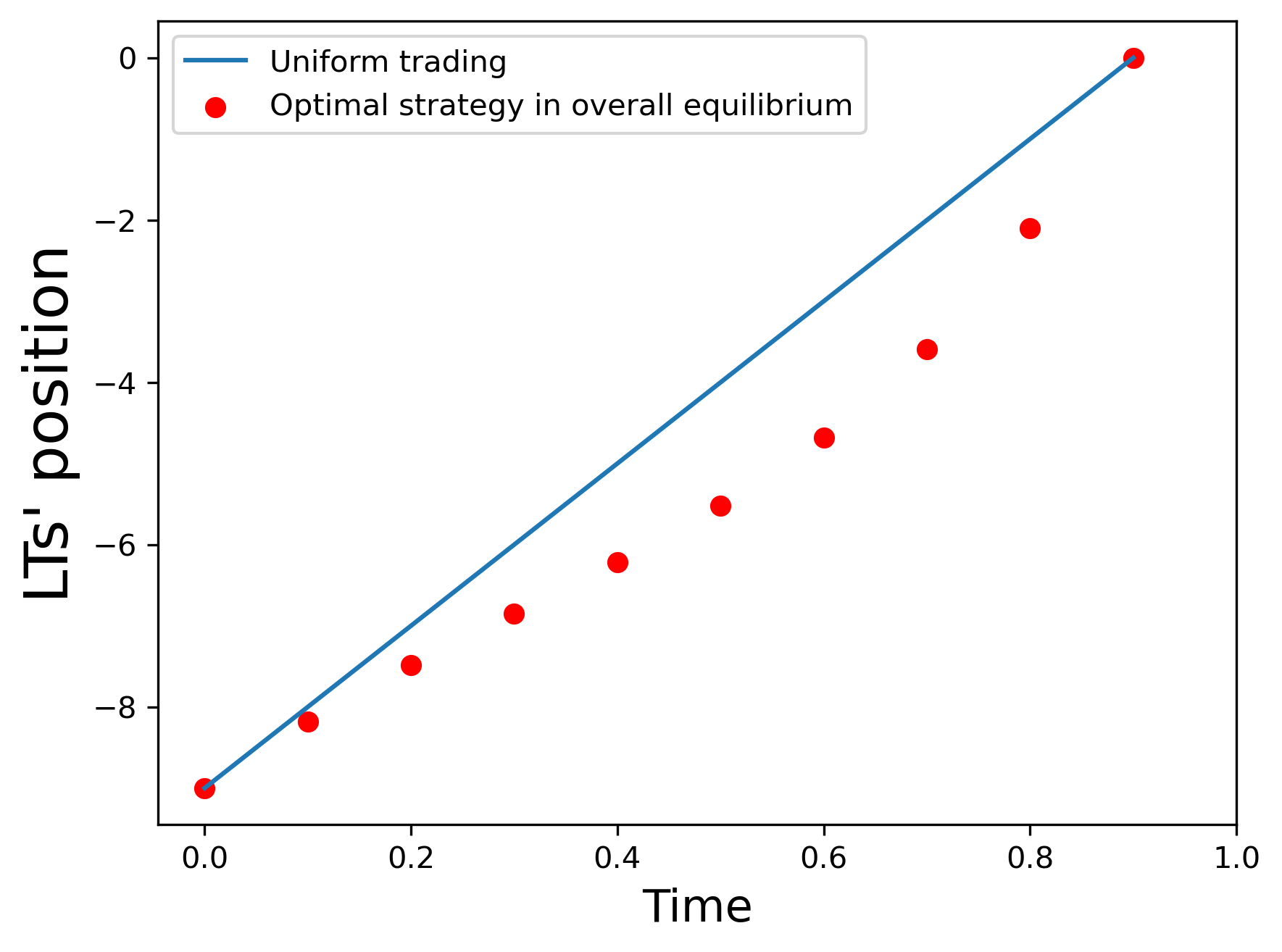}
    }
    
\subcaptionbox{$x=0.5,y=0.5$}{
    \includegraphics[width = 0.27\textwidth]{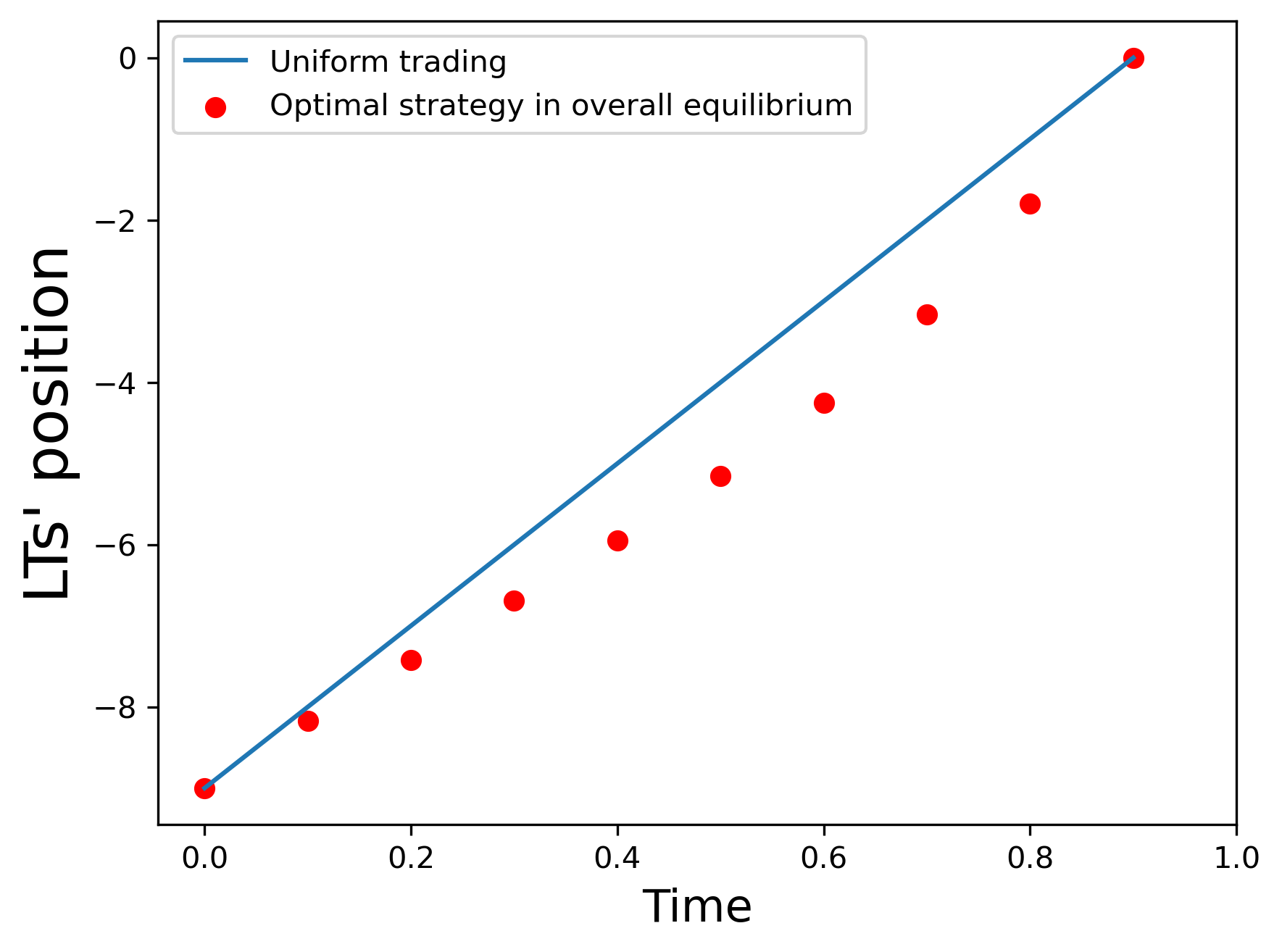}
    }
 \subcaptionbox{$x=0.8,y=0.2$}{
    \includegraphics[width = 0.27\textwidth]{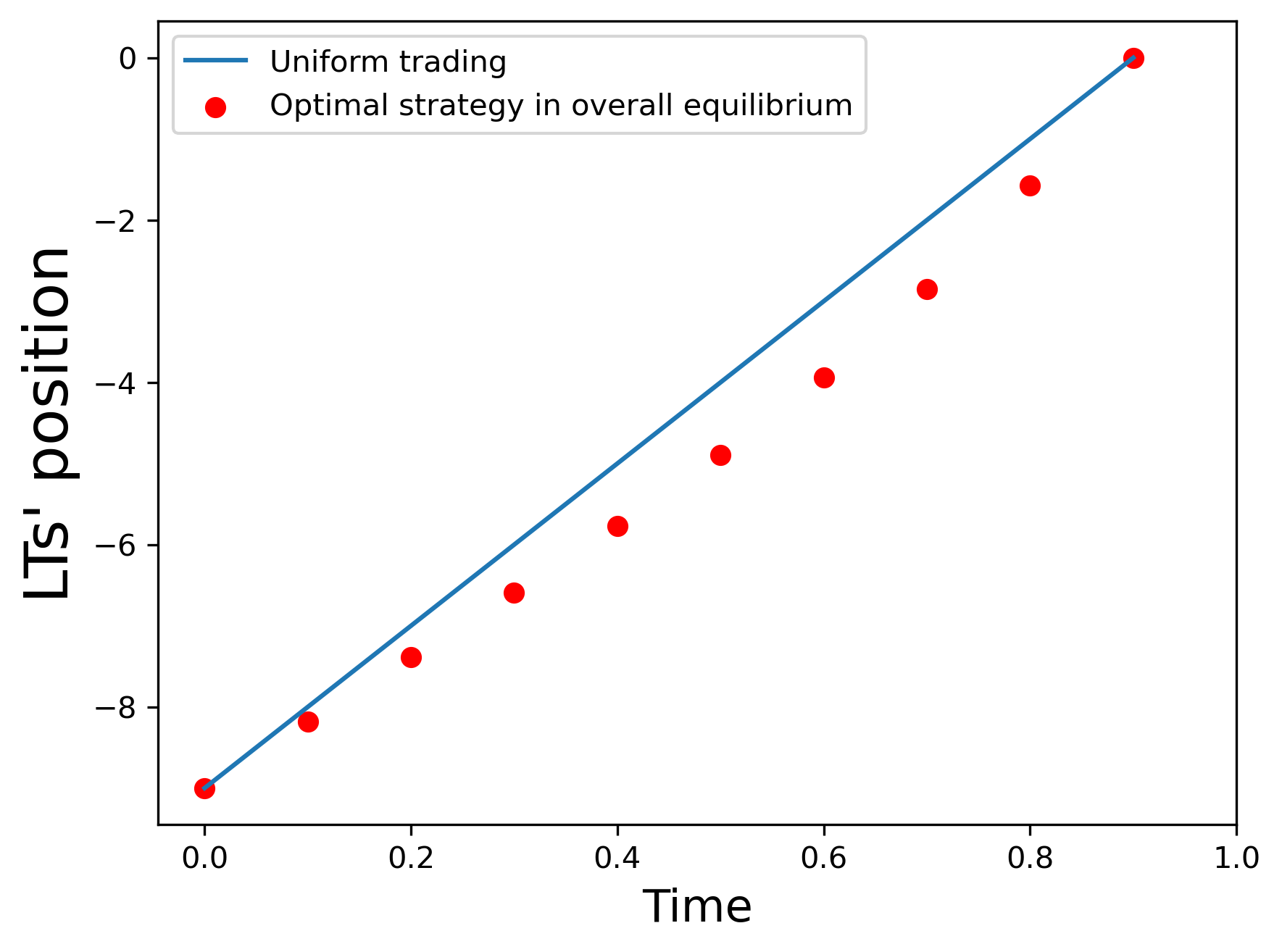}
    }   
\caption{$\phi(1)=0,\Gamma(1)=2,\phi(2)=10,\Gamma(2)=0,$ LT's strategy.}
\label{jumpLTgameLT}
\end{figure}  

    \begin{figure}[!htbp]
    \centering
    \subcaptionbox{$x=0,y=0$}{
    \includegraphics[width = 0.27\textwidth]{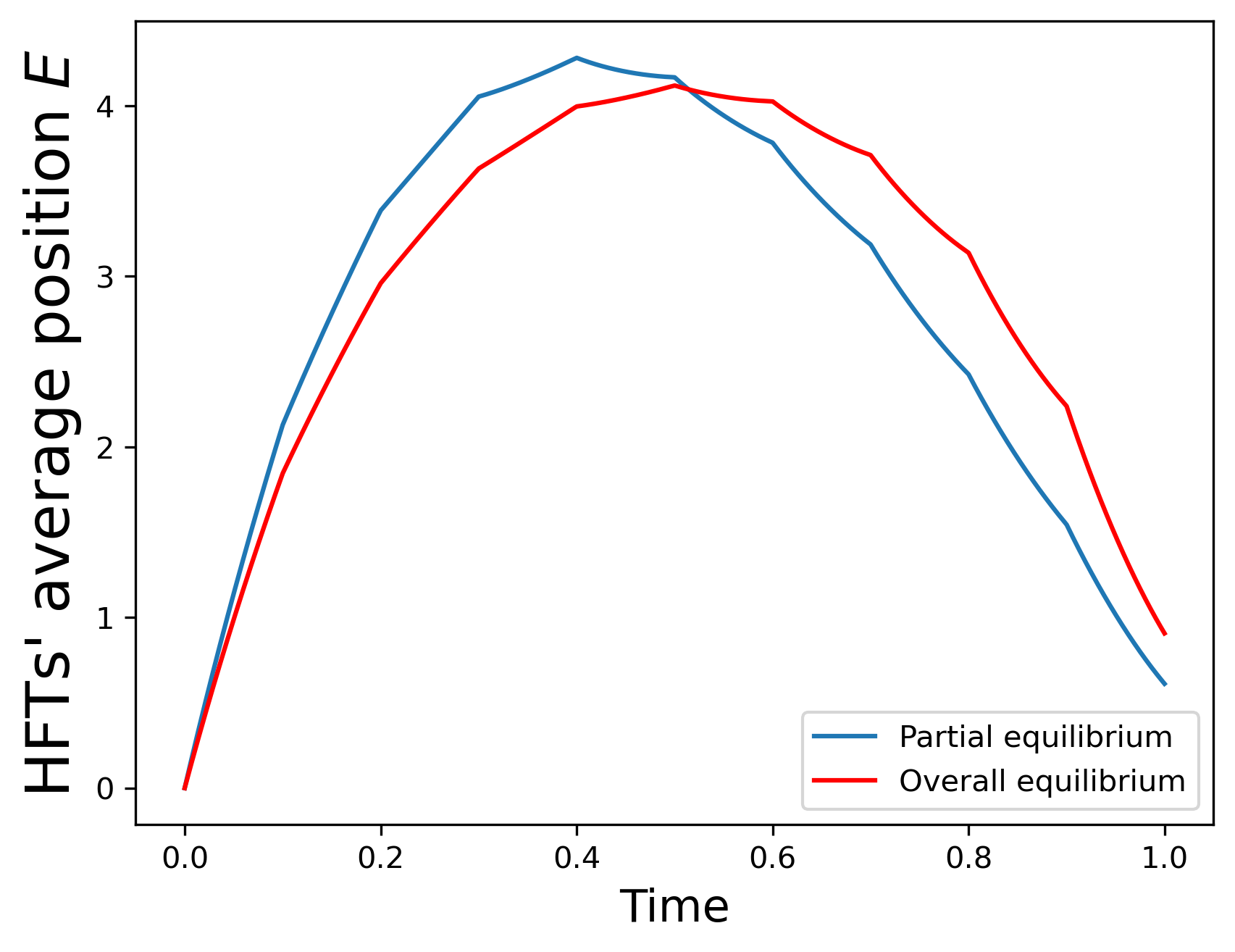}
    }
\subcaptionbox{$x=0.2,y=0.8$}{
    \includegraphics[width = 0.27\textwidth]{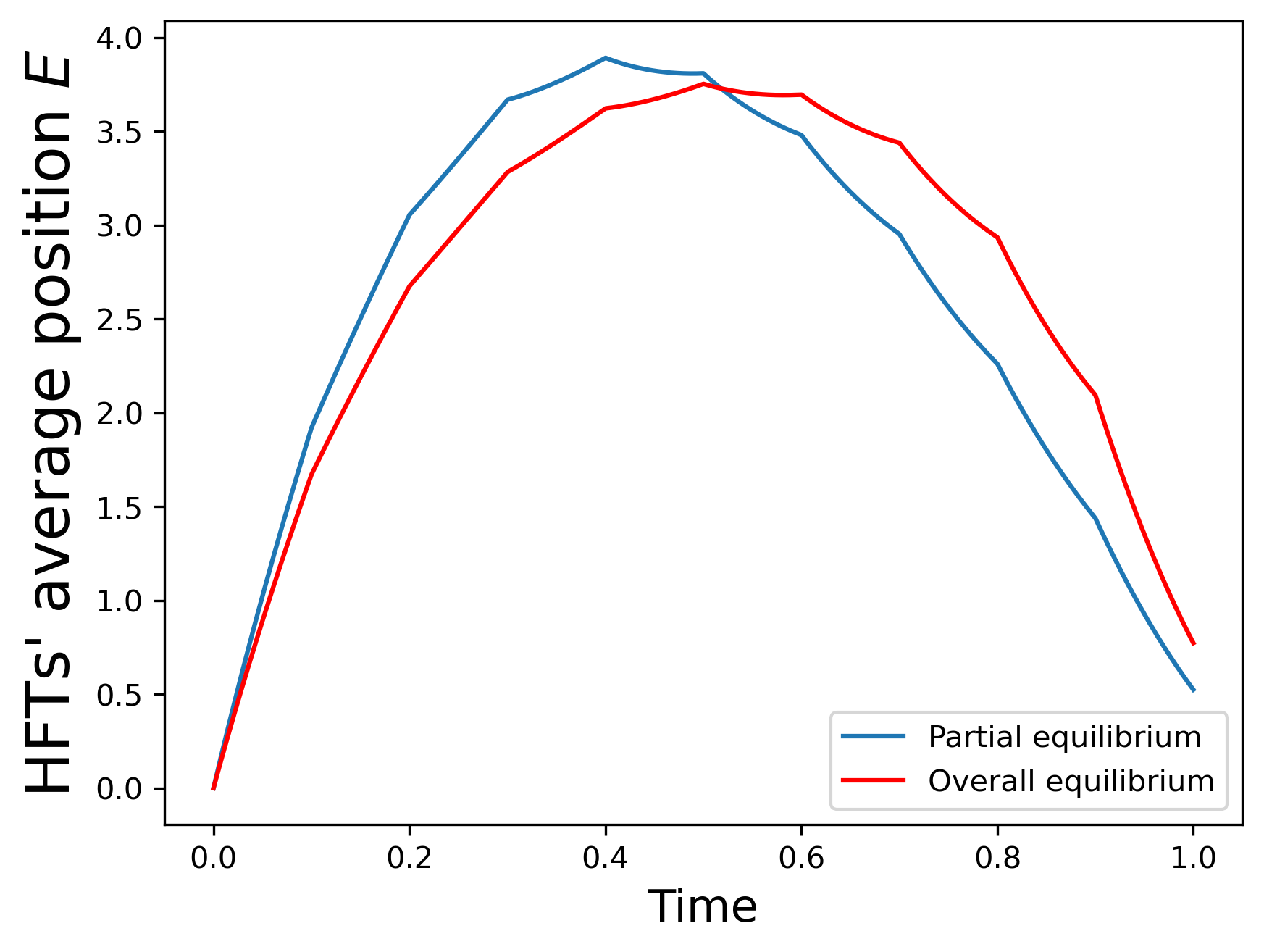}
    }
    
\subcaptionbox{$x=0.5,y=0.5$}{
    \includegraphics[width = 0.27\textwidth]{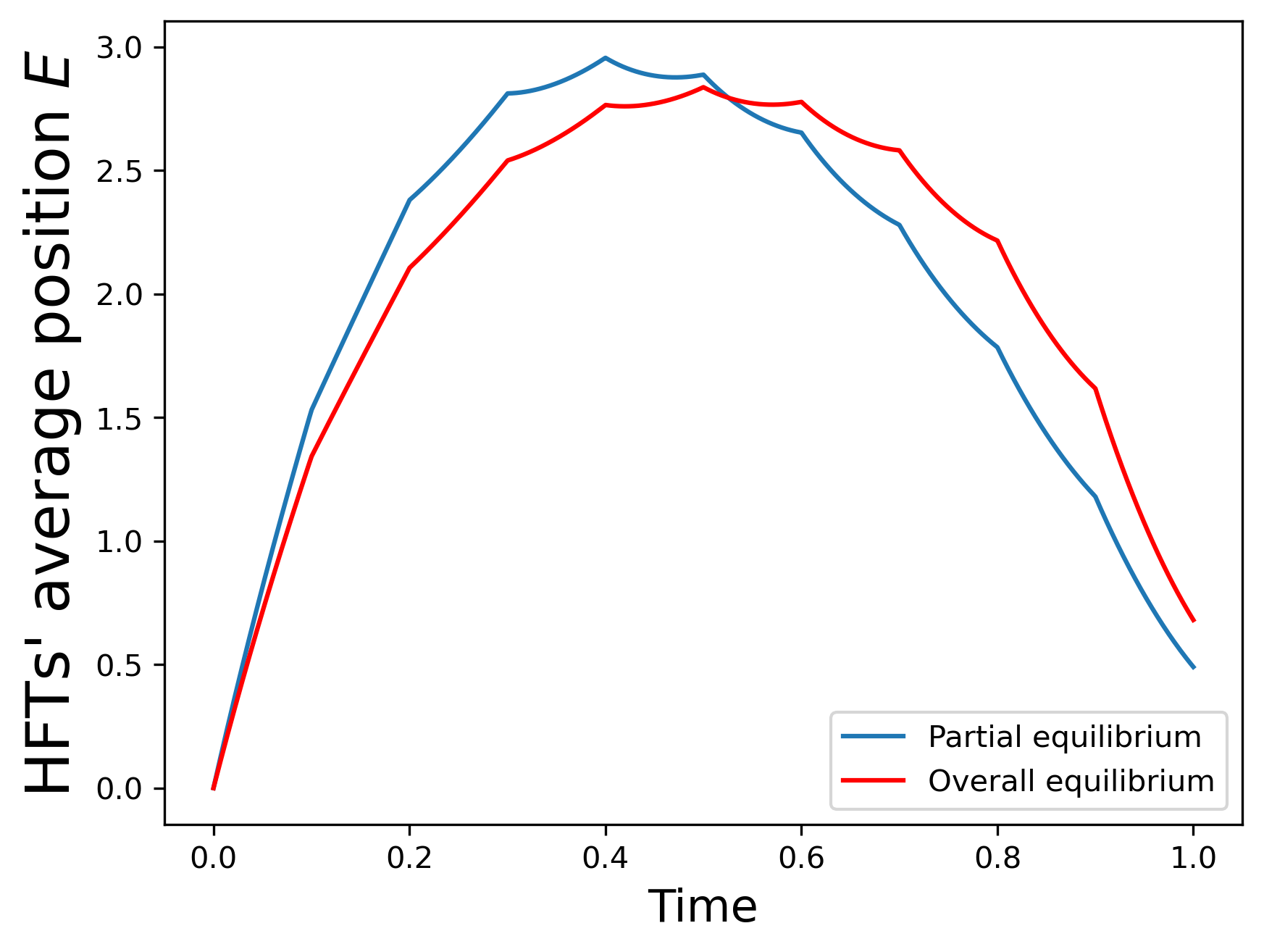}
    }
 \subcaptionbox{$x=0.8,y=0.2$}{
    \includegraphics[width = 0.27\textwidth]{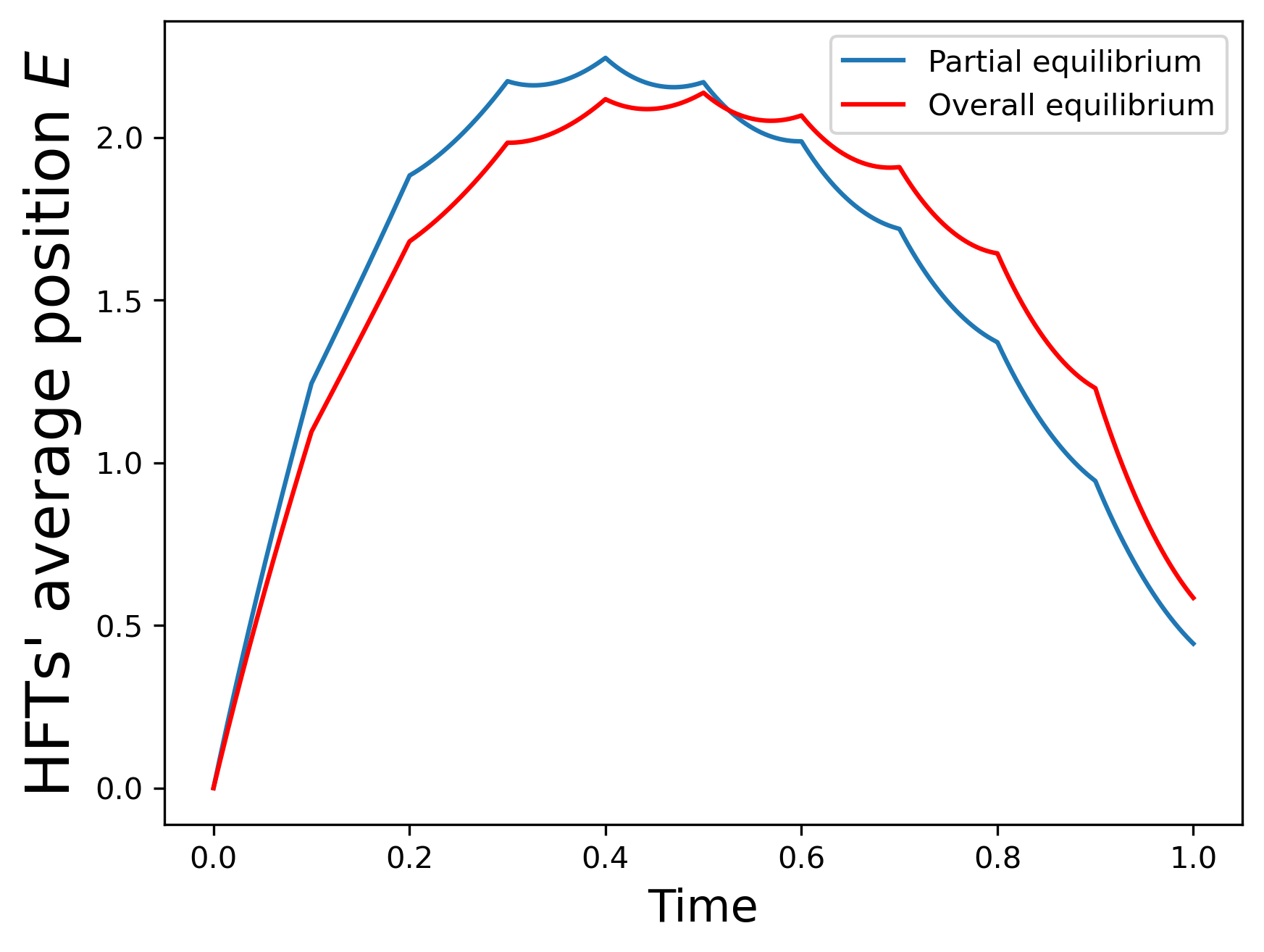}
    }   
\caption{$\phi(1)=0,\Gamma(1)=2,\phi(2)=10,\Gamma(2)=0,$ HFTs' average position $E$.}
\label{jumpLTgameHFT}
\end{figure}

\newpage
\section{Conclusion}
In this paper, we propose a model for studying the interaction between a large trader (LT) and a large population of high-frequency traders (HFTs) who predict and exploit LT's transactions. They affect each other not only through permanent price impact but also temporary one. HFTs are subject to different inventory-averse levels, and the levels may change overtime during the trading period, characterized by independent jump processes. LT maximizes her revenue from the liquidation of her position, while HFTs also care about the inventory risk in addition to the revenue. LT is slower in the sense that instead of trading continuously like HFTs, she only executes orders at discrete time points, and her trading strategy is deterministic.

Since the solution becomes intractable when the population size of HFTs increases, we adopt the idea from mean field game theory such that HFTs affect the market and the LT only through the average trading speed, while the individual trading activity only has an impact on their own trading fee. For a given trading strategy of LT, we solve the partial Nash equilibrium among HFTs by the forward-backward system given by the HJB equation and the Kolmogorov forward equation. To address the discontinuities in the price dynamic caused by LT, we divide the whole trading period into sub-intervals partitioned by the trading time of LT, solve the system on each sub-interval and adjust at the endpoints. When there are more than one inventory-averse levels, by utilizing the linearity of solution with respect to the initial value and the LT's strategy, we are able to efficiently solve the system of coupled equations. The LT's optimal strategy given HFTs' is found by solving a linear system, and we reach the overall Nash equilibrium between LT and HFTs. The property of approximate Nash equilibrium is satisfied when the number of HFTs is large enough. 

Numerical results show that when the HFTs are averse to terminal position, they will act as a Round-Tripper during the whole period, first trade in the same direction as LT and consume liquidity, and then in the opposite direction and provide liquidity, in which case LT will delay the liquidation towards the end of the period. When the HFTs are averse to running inventory, within each sub-interval, they will first trade in the opposite direction and then in the same direction, which is in line with the observation of van Kervel and Menkveld (2019) \cite{van2019high}. Moreover, the repeated liquidity consuming-supply makes the behavior of LT close to uniform trading. When the temporary impact is relatively large compared to the permanent impact, LT benefits from the existence of HFTs.

Further extensions may include introducing execution risk for both LT and HFTs, where LT is not only subject to the uncertainty of executed amount but also execution time, and studying how HFTs learn and predict the behavior of LT.

\newpage
\section*{Appendix}
\label{appendix}
\appendix
\section{Proof of Theorem \ref{thm-verification}}
\label{proof-thm-verification}
We first prove the case $t\in[t_K,T]$. Apply It\^{o}'s lemma on $h(t,X(t),Y(t))$ (Theorem 7.1 in \cite{Hanson2007}),
\begin{equation*}
\begin{aligned}
h(T,X(T),Y(T)) &= h(t,X(t),Y(t))\\
&+\int_t^T \left\{\frac{\partial h}{\partial s} ds + \frac{\partial h}{\partial x}v(s) ds + \sum_{j\in\mathbb{S}, j\neq Y(s)}\left[ h(s, X(s), j) - h(s, X(s),Y(s)  \right]\mathcal{P}(ds,j;Y(s))\right\},
\end{aligned}
\end{equation*}
where $\mathcal{P}(ds,j;Y(s))$ is the Poisson random measure with
\begin{equation*}
\mathcal{P}(ds,j;Y(s))\sim Poisson(Q^{Y(s),j} ds)
\end{equation*}
and
\begin{equation*}
\hat{\mathcal{P}}(ds,j;Y(s)) = \mathcal{P}(ds,j;Y(s))-Q^{Y(s),j}ds
\end{equation*}
is the compensated zero-mean Poisson random measure.

For $v\in\mathcal{A}_{HFT},$ there exists $M_1,M_2$ such that $P(X(t)\in[M_1,M_2])=1.$ So when $t\in[t_K,T],$ $h(t,X(t),Y(t))$ is bounded, which implies that
$$M_t=\int_{t_K}^t \sum_{j\in\mathbb{S}, j\neq Y(s)}\left[ h(s, X(s), j) - h(s, X(s),Y(s)  \right]\hat{\mathcal{P}}(ds,j;Y(s)),t\in[t_K,T]$$
is actually a martingale. Hence,
\begin{equation*}
\mathbbm{E}\left[\left. \int_t^T \sum_{j\in\mathbb{S}, j\neq Y(s)}\left[ h(s, X(s), j) - h(s, X(s),Y(s)  \right]\hat{\mathcal{P}}(ds,j;Y(s)) \right\vert \mathcal{F}_t  \right] = 0
\end{equation*}
and
\begin{equation*}
\begin{aligned}
&\mathbbm{E}\left[h(T,X(T),Y(T)) | \mathcal{F}_t \right] \\
= &h(t,X(t),Y(t)) + \mathbbm{E}\left[\left. \int_t^T  \left\{\frac{\partial h}{\partial s} ds + \frac{\partial h}{\partial x}v(s) ds + \sum_{j\in\mathbb{S}, j\neq Y(s)}\left[ h(s, X(s), j) - h(s, X(s),Y(s)  \right] Q^{Y(s),j} ds\right\}\right \vert \mathcal{F}_t \right] \\
= & h(t,X(t),Y(t)) + \mathbbm{E}\left[\left. \int_t^T  \left\{\frac{\partial h}{\partial s} ds + \frac{\partial h}{\partial x}v(s) ds + \sum_{j\in\mathbb{S}} h(s, X(s), j) Q^{Y(s),j} ds\right\}\right \vert \mathcal{F}_t \right] \\
\leq & h(t,X(t),Y(t)) + \mathbbm{E}\left[\left. \int_t^T \left\{ -\gamma^H \mu(s)X(s) + \phi(Y(s))X(s)^2 + \eta v(s)^2 + \lambda^H v(s)\mu(s)  \right\} ds \right\vert \mathcal{F}_t \right],
\end{aligned}
\end{equation*}
where the last inequality is due to the HJB equation.
So 
\begin{equation*}
h(t,X(t),Y(t))\geq \mathbbm{E}\left[\left. \int_t^T \left\{ \gamma^H \mu(s)X(s) - \phi(Y(s))X(s)^2 - \eta v(s)^2 - \lambda^H v(s)\mu(s)  \right\} ds  -\Gamma(Y(T))X(T)^2 \right\vert \mathcal{F}_t\right]
\end{equation*}
for any admissible control $v$, thus
\begin{equation*}
h(t,X(t),Y(t))\geq \max_{v}\mathbbm{E}\left[\left. \int_t^T \left\{ \gamma^H \mu(s)X(s) - \phi(Y(s))X(s)^2 - \eta v(s)^2 - \lambda^H v(s)\mu(s)  \right\} ds  -\Gamma(Y(T))X(T)^2 \right\vert \mathcal{F}_t\right].
\end{equation*}

On the other side,
\begin{equation*}
\begin{aligned}
h(t,X(t),Y(t))&=\mathbbm{E}\left[\left. \int_t^T \left\{ \gamma^H \mu(s)X(s) - \phi(Y(s))X(s)^2 - \eta v^*(s)^2 - \lambda^H v^*(s)\mu(s)  \right\} ds  -\Gamma(Y(T))X(T)^2 \right\vert \mathcal{F}_t\right] \\
&\leq \max_{v}\mathbbm{E}\left[\left. \int_t^T \left\{ \gamma^H \mu(s)X(s) - \phi(Y(s))X(s)^2 - \eta v(s)^2 - \lambda^H v(s)\mu(s)  \right\} ds  -\Gamma(Y(T))X(T)^2 \right\vert \mathcal{F}_t\right].
\end{aligned}
\end{equation*}
The conclusion follows from the Markovian property.

So we have proved that
\begin{equation*}
\begin{aligned}
h(t_K,x,i)=\max_{v}\mathbbm{E}\Bigg[& \int_{t_K}^T \left\{ \gamma^H \mu(s)X(s) - \phi(Y(s))X(s)^2 - \eta v(s)^2 - \lambda^H v(s)\mu(s)  \right\} ds \\
&-\Gamma(Y(T))X(T)^2 \Bigg\vert X(t_K)=x,Y(t_K)=i\Bigg].
\end{aligned}
\end{equation*}
What's more, \eqref{HJB} tells that $h(t_{K-},x,i)=h(t_K,x,i)+\gamma\xi_K x$, then
\begin{equation*}
\begin{aligned}
  h(t_{K-},x,i)=
  \max_{v}\mathbbm{E}\Bigg[&\gamma\xi_K x(t_K)+ \int_{t_{K-}}^T \left\{ \gamma^H \mu(s)X(s) - \phi(Y(s))X(s)^2 - \eta v(s)^2 - \lambda^H v(s)\mu(s)  \right\} ds \\
  &-\Gamma(Y(T))X(T)^2 \Bigg\vert X(t_{K-})=x,Y(t_{K-})=i\Bigg].   
\end{aligned}
\end{equation*}
$X(t_{K-})=X(t_K)$ since $X$ is continuous; $Y(t_{K-})=Y(t_K)$ since the probability that $Y$ jumps at a certain time point $t_K$ is zero.

During $t\in[t_{K-1},t_{K}),$ by similar arguments used in $[t_K,T]$ (It\^{o}'s lemma and HJB equation), we can prove the desired result.
Remaining intervals $[t_{k-1},t_k),k=1,..,K-1,$ can be proved similarly. Therefore, $H$ is the value function.

\section{Proof of Lemma \ref{lemma-forward-eqn}}
\label{proof-lemma-forward-eqn}
For any smooth testing function $f$,
\begin{equation*}
\frac{d}{dt}\mathbbm{E}[f(X(t),Y(t))]=\lim_{\Delta\rightarrow0}\frac{1}{\Delta}\mathbbm{E}\left[f(X(t+\Delta),Y(t+\Delta))-f(X(t),Y(t))\right].    
\end{equation*}
\begin{equation*}
\begin{aligned}
\text{LHS}&=\frac{d}{dt}\left[\sum_ip_i(t)\int f(x,i)m_t^i(x)dx\right]\\
&=\sum_i\frac{d}{dt}p_i(t)\int f(x,i)m_t^i(x)dx+\sum_ip_i(t)\int f(x,i)\frac{\partial}{\partial t}m_t^i(x)dx.\\
\text{RHS}&=\lim_{\Delta\rightarrow0}\frac{1}{\Delta}\mathbbm{E}\{\mathbbm{E}[f(X(t+\Delta),Y(t+\Delta))-f(X(t),Y(t))|\mathcal{F}_t]\}\\
&=\mathbbm{E}\left \{\lim_{\Delta\rightarrow0}\frac{1}{\Delta}\mathbbm{E}[f(X(t+\Delta),Y(t+\Delta))-f(X(t),Y(t))|\mathcal{F}_t] \right\}\\
&=\mathbbm{E}\left \{\lim_{\Delta\rightarrow0}\frac{1}{\Delta}\mathbbm{E}[f(X(t+\Delta),Y(t+\Delta))-f(X(t),Y(t))|X(t),Y(t)] \right\}\\
&=\mathbbm{E}\left \{\lim_{\Delta\rightarrow0}\frac{1}{\Delta}\left [\frac{\partial f(X(t),Y(t))}{\partial x}v(t, X(t),Y(t))\Delta +\sum_j\left(p_{Y(t),j}(\Delta)-\delta_{Y(t),j}\right) f(X(t),j)\right]\right\}\\
&=\mathbbm{E}\left[\frac{\partial f(X(t),Y(t))}{\partial x}v(t, X(t),Y(t))+\sum_j Q^{Y(t),j}f(X(t),j) \right]\\
&=\sum_ip_i(t)\int \frac{\partial f(x,i)}{\partial x} v(t, x,i)m_t^i(x)dx+\sum_i p_i(t)\sum_j Q^{ij}\int  f(x,j) m_t^i(x)dx\\
&=-\sum_i p_i(t)\int  f(x,i)\partial_x[v(t,x,i)m_t^i(x)]dx+\sum_i\sum_j p_j(t) Q^{ji}\int  f(x,i) m_t^j(x)dx.\\
\end{aligned}
\end{equation*}
$\text{LHS}=\text{RHS}$ for any smooth $f$ implies the result.

\section{Proof of Lemma \ref{lemma-dE}}
\label{proof-lemma-dE}
It should be noted that the equations for densities tell us that
\begin{equation}
\label{Eiandmui}
\begin{aligned}
\frac{dE_i(t)}{dt}&=\int  x\partial_t m_t^i(x)dx\\
&=\int  x\left\{-\partial_x[v(t,x,i)m_t^i(x)]+\frac{1}{p_i(t)}\sum_j p_j(t) Q^{ji}[m_t^j(x)-m_t^i(x)]\right\} dx\\
&=\int  v(t, x,i)m_t^i(x)dx+\frac{1}{p_i(t)}\sum_j p_j(t) Q^{ji}\int  x[m_t^j(x)-m_t^i(x)]dx\\
&=\mu_i(t)+\frac{1}{p_i(t)}\sum_j p_j(t) Q^{ji}[E_j(t)-E_i(t)],
\end{aligned}
\end{equation}
which gives the result.

\section{Proof of Lemma \ref{lemma-Eboundary}}
\label{proof-lemma-Eboundary}
Integrate each side of \eqref{optcontrol2} w.r.t. $m_t^i(x)$,
\begin{equation}
 \label{mui}   
 \mu_i(t)=\frac{1}{2\eta}[h_i^1+2h_i^2 E_i(t)-\lambda^H\mu(t)].
\end{equation}
Then
\begin{equation*}
\begin{aligned}
\mu(t)=&\sum_i p_i(t)\mu_i(t)\\
=&\frac{1}{2\eta}\sum_ip_i(t)[h_i^1+2h_i^2E_i(t)]-\frac{\lambda^H}{2\eta}\mu(t).   
\end{aligned}
\end{equation*}
Hence,
\begin{equation}
\label{mu}
\mu(t)=\frac{1}{\lambda^H+2\eta}\sum_ip_i(t)[h_i^2+2h_i^2E_i(t)].
\end{equation}
Substitute \eqref{hs-boundary} into \eqref{mu}, \begin{equation*}
\begin{aligned}
\mu(t_{k-})&=\frac{1}{\lambda^H+2\eta}\sum_ip_i(t_{k})[h_i^1(t_k)+\gamma\xi_k+2h_i^2(t_{k})E_i(t_k)]\\
&=\mu(t_k)+\frac{\gamma\xi_k}{\lambda^H+2\eta}.
\end{aligned}
\end{equation*}
Then
\begin{equation*}
\begin{aligned}
\mu_i(t_{k-})&=\frac{1}{2\eta}[h_i^1(t_{k-})+2h_i^2(t_{k-})-\lambda^H\mu(t_{k-})]\\
&=\frac{1}{2\eta}[h_i^1(t_{k})+\gamma\xi_k+2h_i^2(t_{k})-\lambda^H(\mu(t_{k})+\frac{\gamma\xi_k}{\lambda^H+2\eta})]\\
&=\mu_i(t_k)+\frac{\gamma\xi_k}{\lambda^H+2\eta}.
\end{aligned}
\end{equation*}

\section{Proof of Theorem \ref{thm-E}}
\label{proof-thm-E}
As for the evolution of $\boldsymbol{E}$ during each interval $\{[t_k,t_{k+1})\}_{k=0}^{K-1}$, multiply each side of \eqref{mui} by $2\eta$ and differentiate by $t$,
\begin{equation*}
\begin{aligned}
2\eta\frac{d\mu_i(t)}{dt}=&\frac{dh_i^1(t)}{dt}+2E_i(t)\frac{dh_i^2(t)}{dt}+2h_i^2(t)\frac{dE_i(t)}{dt}-\lambda^H\frac{d\mu(t)}{dt}\\
=&-\gamma^H\mu(t)-\sum_j Q^{ij}h_j^1(t)-\frac{h_i^2(h_i^1-\lambda^H\mu(t))}{\eta}+2E_i(t)\frac{dh_i^2(t)}{dt}+2h_i^2(t)\frac{dE_i(t)}{dt}-\lambda^H\frac{d\mu(t)}{dt}\\
=&-\gamma^H\mu(t)-\sum_j Q^{ij}h_j^1(t)-\frac{h_i^2}{\eta}[2\eta \mu_i(t)-2h_i^2E_i(t)]+2E_i(t)\frac{dh_i^2(t)}{dt}+2h_i^2(t)\frac{dE_i(t)}{dt}-\lambda^H\frac{d\mu(t)}{dt}\\
=&-\gamma^H\mu(t)-\sum_j Q^{ij}h_j^1(t)-\frac{h_i^2}{\eta}[2\eta \mu_i(t)-2h_i^2E_i(t)]+2E_i(t)[\phi(i)-\sum_jQ^{ij}h_j^2(t)-\frac{(h_i^2)^2}{\eta}]\\
&+2h_i^2\frac{dE_i(t)}{dt}-\lambda^H\frac{d\mu(t)}{dt}\\
=&-\lambda^H\frac{d\mu(t)}{dt}-\gamma^H\mu(t)+2h_i^2[\frac{dE_i(t)}{dt}-\mu_i(t)]-\sum_j Q^{ij}[2\eta\mu_j-2h_j^2 E_j]+2E_i[\phi(i)-\sum_j Q^{ij}h_j^2]\\
=&-\lambda^H[\sum_jp_j(t)\frac{d\mu_j(t)}{dt}+\sum_j\frac{dp_j(t)}{dt}\mu_j(t)]-\gamma^H\sum_jp_j(t)\mu_j(t)+2h_i^2[\dot{E}_i(t)-\mu_i(t)]\\
&-\sum_j Q^{ij}(2\eta\mu_j-2h_j^2E_j)+2E_i[\phi(i)-\sum_j Q^{ij}h_j^2].\\
0=&2\eta\frac{d\mu_i(t)}{dt}+\sum_j[\gamma^Hp_j(t)+\lambda^H\sum_kp_k(t)Q^{kj}+2\eta Q^{ij}]\mu_j(t)+\lambda^H\sum_jp_j(t)\frac{d\mu_j(t)}{dt}\\
&-2h_i^2[\frac{dE_i(t)}{dt}-\mu_i(t)]-2\sum_jQ^{ij}h_j^2E_j-2E_i[\phi_i-\sum_jQ^{ij}h_j^2].\\
\end{aligned}
\end{equation*}
The boundary condition at $T$ is:
\begin{equation*}
\begin{aligned}
&0=h_i^1(T)=2\eta\mu_i(T)-2h_i^2 E_i(T)+\lambda^H\mu(T)\\
\Rightarrow&0=2\eta \mu_i(T)+2\Gamma(i)E_i(T)+\lambda^H\sum_jp_j(t)\mu_j(T).
\end{aligned}
\end{equation*}
Then
\begin{equation*}
\begin{cases}
\boldsymbol{0}=&[2\eta\boldsymbol{I}+\lambda^H\boldsymbol{e}\boldsymbol{p}^T]\boldsymbol{\dot{\mu}(t)}+[\gamma^H\boldsymbol{e}\boldsymbol{p}^T+2\eta\boldsymbol{Q}+\lambda^H\boldsymbol{e}\boldsymbol{p}^T\boldsymbol{Q}]\boldsymbol{\mu(t)}\\
&-2\boldsymbol{H(t)}[\boldsymbol{\dot{E}(t)}-\boldsymbol{\mu(t)}]-2[\boldsymbol{\Phi(t)}+\boldsymbol{Q}\boldsymbol{H(t)}]\boldsymbol{E(t)},\ t\in[t_k,t_{k+1}),\ k=0,...,K,\\
\boldsymbol{0}=&[2\eta\boldsymbol{I}+\lambda^H\boldsymbol{ep}^T]\boldsymbol{\mu(T)}+2\boldsymbol{\Gamma}\boldsymbol{E(T)}.
\end{cases}
\end{equation*}

\section{Proof of Theorem \ref{thm-epsilonNash}}
\label{proof-thm-epsilonNash}
For the ease of notation, we denote $\mathbbm{E}\left[\cdot\vert \mathcal{F}_t\right]$ by $\mathbbm{E}_t[\cdot]$, and we assume there are $(M+1)$ HFTs in the market. We define a functional $\Phi$ on $v, \delta, \overline{v}$ as:
\begin{equation*}
\begin{aligned}
&\Phi(v,\delta,\overline{v}) \\
=& X(0)P(0) + \mathbbm{E}_0\left\{ \gamma^H \int_0^T X(t) \left[\delta v(t) + (1-\delta) \overline{v}(t)  \right] dt + \gamma \sum_{k=1}^K \xi_k X(t_k)\right.\\
&\left.- \lambda^H \int_0^T v(t)\left[\delta v(t) + (1-\delta) \overline{v}(t)  \right] dt - \eta \int_0^T v(t)^2 dt -\Gamma(Y(T))X(T)^2 -  \int_0^T \phi(Y(t))X(t)^2 dt \right\},
\end{aligned}
\end{equation*}
where $v$ and $\overline{v}$ are $\mathcal{F}_t$-progressively measurable and square integrable processes, $\delta\geq 0$, and $dx(t)=v(t)dt$. 

W.L.O.G. we choose the 1st HFT as the representative. Note that for the 1st HFT, 
\begin{equation*}
J^{M+1}_{HFT}(v_1) = \Phi\left(v_1, \frac{1}{M+1},\frac{1}{M}\sum_{m=2}^{M+1} v_m \right), \, J_{HFT}(v_1) = \Phi(v_1, 0, \mu). 
\end{equation*}

Given $\delta$ and $\overline{v}$, denote the maximizer of $\Phi$ by $\Tilde{v}(\delta, \overline{v})$ and the maximum is denoted by $\Tilde{\Phi}$. We give the definition of $\Phi$ and $\Tilde{\Phi}$'s continuity in $v$ and $\overline{v}$.
Take $\Phi$ and $v$ as an example: given $\delta$ and $\overline{v}$, if
$\forall\varepsilon>0,$ there exists a $\delta>0,$ when $\mathbbm{E}_0\int_0^T |v(t)-v^{\delta}(t)|^2 dt<\delta,\ \text{a.s.},$ we have
$$|\Phi(v,\delta,\overline{v})-\Phi(v^{\delta},\delta,\overline{v})|<\varepsilon,\ \text{a.s.},$$
then we say that $\Phi$ is continuous in $v.$

It can be verified that $\Phi$ is jointly continuous in $(v,\delta,\overline{v})$ and strictly concave in $v$ when $\delta$ is sufficiently small. $\Tilde{v}$ is jointly continuous in $(\delta,\overline{v})$ and $\Tilde{\Phi}$ is jointly continuous in $(\delta,\overline{v})$. 

Let $v_1^{M+1,*}$ denote the optimal strategy in the $(M+1)$-HFT game where the other HFTs adopt the MFG strategy, then
\begin{equation*}
\begin{aligned}
& \vert J^{M+1}_{HFT}(v_1^*) -J^{M+1}_{HFT}\left( v_1^{M+1,*} \right)  \vert \\
= & \vert \Phi\left(v_1^*,\frac{1}{M+1},\frac{1}{M}\sum_{m=2}^{M+1} v_m^* \right) - \Phi\left(v_1^{M+1,*}, \frac{1}{M+1},\frac{1}{M}\sum_{m=2}^{M+1} v_m^* \right)   \vert \\
\leq & \vert \Phi\left(v_1^*,\frac{1}{M+1},\frac{1}{M}\sum_{m=2}^{M+1} v_m^* \right) - \Phi(v_1^*,0,\mu) \vert + \vert \Tilde{\Phi}(0,\mu) - \Tilde{\Phi}\left( \frac{1}{M+1},\frac{1}{M}\sum_{m=2}^{M+1} v_m^* \right) \vert,
\end{aligned}
\end{equation*}
and it's sufficient to show that $\mathbbm{E}_0 \int_0^T \vert \frac{1}{M}\sum_{m=2}^{M+1} v_m^*(t) - \mu(t) \vert^2 dt  \stackrel{\text{a.s.}}{\rightarrow} 0$ as $M\rightarrow \infty$.



%
\textbf{Step 1}: For $M$ players with $N$ types, let $m_i(t)$ denote the number of players in type $i$ at time $t$, and $\theta_i(t) = \frac{m_i(t)}{M}$, we show that
$\mathbbm{E}_0 \lVert \theta(t) - p(t) \rVert^2  \stackrel{\text{a.s.}}{\rightarrow} 0$ uniformly on $[0,T]$ as $M \rightarrow \infty$.

Apply It\^{o}'s lemma on $(\theta_i(t) - p_i(t))^2$ and take the conditional expectation,
\begin{equation*}
\begin{aligned}
&\mathbbm{E}_0\lvert \theta_i(t) - p_i(t) \rvert^2 \\
=& \lvert \theta_i(0) - p_i(0) \rvert^2 - 2\mathbbm{E}_0\left\{ \int_0^t \left[ \theta_i(s) - p_i(s) \right] \sum_j p_j(s) Q^{ji} ds \right\} \\
&-\mathbbm{E}_0\left\{ \int_0^t m_i(s) Q^{ii} \left[\left( \frac{m_i(s)-1}{M} -p_i(s) \right)^2 - \left(\frac{m_i(s)}{M} -p_i(s) \right)^2 \right] ds \right\}\\
&+\mathbbm{E}_0\left\{ \int_0^t \sum_{j\neq i} m_j(s) Q^{ji}\left[ \left(\frac{m_i(s)+1}{M} - p_i(s)  \right)^2 - \left(\frac{m_i(s)}{M} - p_i(s)  \right)^2 \right] ds \right\} \\
=& \lvert \theta_i(0) - p_i(0) \rvert^2 - 2\mathbbm{E}_0\left\{ \int_0^t \left[ \theta_i(s) - p_i(s) \right] \sum_j p_j(s) Q^{ji} ds \right\} \\
&+ \mathbbm{E}_0\left\{ \int_0^t \theta_i(s) Q^{ii} \left[2\left[ \theta_i(s) - p_i(s) \right] -\frac{1}{M} \right] ds  \right\} + \mathbbm{E}_0\left\{ \int_0^t \sum_{j\neq i} \theta_j(s) Q^{ji} \left[2\left[\theta_i(s) - p_i(s)  \right] +\frac{1}{M} \right] ds \right\} \\
=& \lvert \theta_i(0) - p_i(0) \rvert^2 + 2\mathbbm{E}_0\left\{ \int_0^t \left[\theta_i(s) - p_i(s) \right] \sum_j Q^{ji} \left[\theta_j(s) - p_j(s)  \right]  ds \right\} +\frac{1}{M} \mathbbm{E}_0\left\{ \int_0^t \sum_{j\neq i} Q^{ji}\left[ \theta_j(s) + \theta_i(s) \right] ds \right\}\\
\leq & \lvert \theta_i(0) - p_i(0) \rvert^2 + C_1 \mathbbm{E}_0\left\{ \int_0^t \lvert \theta_i(s) - p_i(s) \rvert \sum_j  \lvert \theta_i(s) - p_i(s)  \rvert  ds \right\} +\frac{C_1}{M},
\end{aligned}
\end{equation*}
where $C_1>0$ only depends on $Q$.

Sum from $i=1$ to $N$, apply Fubini's theorem and Cauchy-Schwarz inequality,
\begin{equation*}
\begin{aligned}
\mathbbm{E}_0 \lVert \theta(t) - p(t) \rVert^2 & \leq  \lVert \theta(0) - p(0)\rVert^2 + C_1 \int_0^t \mathbbm{E}_0 \left[\sum_j \lvert \theta_j(s) - p_j(s) \rvert  \right]^2 ds + \frac{C_1 N}{M}\\
& \leq \lVert \theta(0) - p(0)\rVert^2 + C_1 N \int_0^t \mathbbm{E}_0 \lVert \theta(s) - p(s) \rVert^2 ds + \frac{C_1 N}{M}.
\end{aligned}
\end{equation*}

By Gronwall's inequality,
\begin{equation*}
\mathbbm{E}_0 \lVert \theta(t) - p(t) \rVert^2  \leq \left( \lVert \theta(0) - p(0)\rVert^2 +\frac{C_1 N}{M} \right) e^{C_1 NT}.
\end{equation*}

By the strong law of large numbers, $ \lVert \theta(0) - p(0)\rVert^2\stackrel{\text{a.s.}}{\rightarrow}0$ as $M \rightarrow \infty$, and we reach the conclusion.
\smallskip

\textbf{Step 2}: We show that if the $j$-th HFT adopts the MFG strategy, $\{X_j(t)\}_{j=1}^{M+1}$ are a.s. uniformly bounded on $[0,T]$.
Omitting the subscript $j,$
\begin{equation*}
\begin{aligned}
\lvert X(t) \rvert &= \lvert X(0) + \int_0^t v^*(s,X(s),Y(s)) ds \rvert\\
&=\lvert X(0) + \int_0^t \frac{1}{2\eta}\left[h_{Y(s)}^1(s) + 2h_{Y(s)}^2(s) X(s) -\lambda^H \mu(s)   \right]ds \rvert\\
&\leq \lvert X(0) \rvert + C_2 \int_0^t \lvert X(s) \rvert ds + C_2,
\end{aligned}
\end{equation*}
where $C_2$ does not depend on $X(0)$ or $Y(0)$.

By Gronwall's inequality,
\begin{equation*}
\lvert X(t) \rvert \leq \left[ \lvert X(0) \rvert + C_2\right] e^{C_2 T}.
\end{equation*}
The conclusion follows that $\{X_j(0)\}_{j=1}^{M+1}$ are a.s. uniformly bounded.





\textbf{Step 3}: If the 2nd to the $(M+1)$-th HFTs adopt the MFG strategy, define $Z_i(t) = \frac{1}{M}\sum_{m=2}^{M+1} 1_{[Y_m(t) = i]} X_m(t)$, and $\nu_i(t) = p_i(t) E_i(t)$, we show that $\mathbbm{E}_0 \lVert  Z(t) - \nu(t)  \Vert ^2 \stackrel{\text{a.s.}}{\rightarrow}0$ uniformly on $[0,T]$ as $M\rightarrow \infty$.

Apply It\^{o}'s lemma on $\left[ Z_i(t) - \nu_i(t)\right]^2$ and take the conditional expectation,
\begin{equation*}
\begin{aligned}
&\mathbbm{E}_0\left[Z_i(t) - \nu_i(t)  \right]^2 \\
=& \left[ Z_i(0) - \nu_i(0) \right]^2 + 2\mathbbm{E}_0\left\{ \int_0^t \left[Z_i(s) - \nu_i(s)  \right]\left[ \frac{1}{M} \sum_{m=2}^{M+1} 1_{[Y_m(s)=i]}v_m^*(s,X_m(s),Y_m(s)) -p_i(s)\mu_i(s) - \sum_{j}Q^{ji} \nu_j(s) \right] ds \right\} \\
& - \mathbbm{E}_0\left\{ \int_0^t \sum_{m=2}^{M+1} 1_{[Y_m(s)=i]} Q^{ii}\left[\left( Z_i(s)-\nu_i(s)-\frac{1}{M}X_m(s)  \right)^2 - \left( Z_i(s) - \nu_i(s) \right)^2  \right]ds \right\} \\
&+ \mathbbm{E}_0\left\{\int_0^t \sum_{j\neq i} \sum_{m=2}^{M+1} 1_{[Y_m(s)=j]} Q^{ji}  \left[\left( Z_i(s)-\nu_i(s)+\frac{1}{M}X_m(s)  \right)^2 - \left( Z_i(s) - \nu_i(s) \right)^2  \right]ds \right\} \\
=& \left[ Z_i(0) - \nu_i(0) \right]^2 + 2\mathbbm{E}_0\left\{ \int_0^t \left[Z_i(s) - \nu_i(s)  \right]\left[ \frac{1}{M} \sum_{m=2}^{M+1} 1_{[Y_m(s)=i]}v_m^*(s,X_m(s),Y_m(s)) -p_i(s)\mu_i(s) - \sum_{j}Q^{ji}\nu_j(s) \right] ds \right\} \\
&+ \mathbbm{E}_0\left\{\int_0^t Q^{ii}\left( 2 Z_i(s) \left [ Z_i(s) - \nu_i(s) \right] - \frac{1}{M^2}\sum_{m=2}^{M+1} 1_{[Y_m(s)=i]}X_m(s)^2  \right) ds \right\} \\
&+ \mathbbm{E}_0\left\{\int_0^t \sum_{j\neq i} Q^{ji} \left(2Z_j(s)\left[Z_i(s) - \nu_i(s)  \right]   + \frac{1}{M^2}\sum_{m=2}^{M+1} 1_{[Y_m(s)=j]} X_m(s)^2\right) ds  \right\} \\
\leq & \left[ Z_i(0) - \nu_i(0) \right]^2 + C_3 \mathbbm{E}_0\left\{\int_0^t \vert Z_i(s)-\nu_i(s)\vert \left(\sum_j \vert Z_j(s)-\nu_j(s)\vert\right) ds \right\} \\
& + C_3 \mathbbm{E}_0\left\{ \int_0^t \lvert Z_i(s) - \nu_i(s)  \rvert \left(  \lvert \theta_i(s) - p_i(s) \rvert + \lvert Z_i(s) - \nu_i(s) \rvert \right) ds \right\} + \frac{C_3}{M^2}\int_0^t \sum_{m=2}^{M+1}\mathbbm{E}_0 X_m(s)^2 ds.
\end{aligned}
\end{equation*}

Sum from $i=1$ to $N$ and apply Cauchy-Schwarz inequality,
\begin{equation*}
\begin{aligned}
& \mathbbm{E}_0 \lVert Z(t) - \nu(t) \rVert ^2 \\
\leq & \lVert Z(0) - \nu(0) \rVert^2 + C_3 (N+1) \int_0^t  \mathbbm{E}_0 \lVert Z(s)-\nu(s) \rVert^2 ds  \\
& + C_3 \int_0^t \sqrt{\mathbbm{E}_0 \lVert Z(s) - \nu(s)\rVert^2 \cdot \mathbbm{E}_0 \lVert \theta(s) - p(s)\rVert^2} ds + \frac{C_3 NT}{M^2} \sum_{m=2}^{M+1} \left[X_m(0)^2 + 2C_2 \vert X_m(0) \vert + C_2^2  \right] e^{2C_2 T}.
\end{aligned}
\end{equation*}
In step 2, we have proved that $\{X_j(t)\}_{j=1}^{M+1}$ is uniformly bounded. As a result, $\mathbbm{E}_0 \lVert Z(t) - \nu(t)\rVert^2$ is also uniformly bounded. What's more, $\mathbbm{E}_0 \lVert \theta(t) - p(t) \rVert^2\stackrel{\text{a.s.}}{\rightarrow}0$ uniformly, so for $M$ that is large enough,
\begin{equation*}
\begin{aligned}
 \mathbbm{E}_0 \lVert Z(t) - \nu(t) \rVert ^2 
< & \lVert Z(0) - \nu(0) \rVert^2 + C_3 (N+1) \int_0^t  \mathbbm{E}_0 \lVert Z(s)-\nu(s) \rVert^2 ds \\
&+ \frac{C_3 NT}{M^2} \sum_{m=2}^{M+1} \left[X_m(0)^2 + 2C_2 \vert X_m(0) \vert + C_2^2  \right] e^{2C_2 T}, \text{ a.s.}
\end{aligned}
\end{equation*}

By Gronwall's inequality,
\begin{equation*}
\mathbbm{E}_0 \lVert Z(t) - \nu(t) \rVert ^2 < \alpha e^{C_3(N+1)T}, \text{ a.s.}
\end{equation*}
where
\begin{equation*}
\begin{aligned}
\alpha &= \lVert Z(0) - \nu(0) \rVert^2 + \frac{C_3 NT}{M^2} \sum_{m=2}^{M+1} \left[X_m(0)^2 + 2C_2 \vert X_m(0) \vert + C_2^2  \right] e^{2C_2 T}.
\end{aligned}
\end{equation*}

By the strong law of large numbers, $\alpha\stackrel{\text{a.s.}}{\rightarrow} 0$ as $M\rightarrow \infty$, we reach the conclusion.

\textbf{Step 4}: If the 2nd to the $(M+1)$-th HFTs adopt the MFG strategy, denote $\hat{v}^M(t) = \frac{1}{M}\sum_{m=2}^{M+1} v_m^*(t)$, we show that $\mathbbm{E}\lvert \hat{v}^M(t) - \mu(t) \rvert^2 \stackrel{\text{a.s.}}{\rightarrow} 0$ uniformly on $[0,T]$.

\begin{equation*}
\begin{aligned}
&\mathbbm{E}_0\lvert \hat{v}^M(t) - \mu(t) \rvert^2 \\
\leq& \mathbbm{E}_0\left[ \sum_j \frac{\lvert h_j^1(t) - \lambda^H \mu(t)\rvert}{2\eta} \lvert \theta_j(t) - p_j(t)\rvert + \frac{\vert h_j^2(t)\rvert}{\eta}\lvert Z_j(t) - \nu_j(t) \rvert \right]^2 \\
\leq &C_5 \mathbbm{E}_0\lVert \theta(t) - P(t) \rVert^2 + C_5 \mathbbm{E}_0 \lVert Z(t) - \nu(t)\rVert^2.
\end{aligned}
\end{equation*}
Then the result follows Step 1 and Step 3.

Steps 1-4 imply that $\mathbbm{E}_0 \int_0^T \vert \frac{1}{M}\sum_{m=2}^{M+1} v_m^*(t) - \mu(t) \vert^2 dt \stackrel{\text{a.s.}}{\rightarrow} 0$ as $M\rightarrow \infty$, thus we have proved the theorem.






\section{Proof of Theorem \ref{thm-epsilonNash2}}
\label{proof-thm-epsilonNash2}
Define $\Psi$ as a function of $\xi=(\xi_1,...,\xi_K)$, $\overline{X}=(\overline{X}_0,\overline{X}_1,...,\overline{X}_K)$ and $\overline{v}=(\overline{v}_1,...,\overline{v}_K)$ as:
\begin{equation*}
\Psi(\xi,\overline{X},\overline{v})=\mathbbm{E}\left\{\sum_{k=1}^K(-\xi_k)\left[ \gamma\sum_{j=1}^k\xi_j+\gamma^H (\overline{X}_k-\overline{X}_0) dt +\lambda^H\overline{v}_k+(\lambda+\eta_0)\xi_k \right] \right\}
\end{equation*}
where $\xi\in\mathbbm{R}^K$ is deterministic, $\overline{X}_k$ and $\overline{v}_k$ are $\mathcal{F}_{t_k}$-measurable. Note that (by omitting constant terms)
\begin{equation*}
J_{LT}^M(\xi)=\Psi(\xi,\overline{X}^M, \overline{v}^M),\,J_{LT}(\xi)=\Psi(\xi,E,\mu).
\end{equation*}
where $\overline{X}^M=(\overline{X}^M(0),\overline{X}^M(t_1),...,\overline{X}^M(t_K))$, $\overline{v}^M=(\overline{v}^M(t_1),...,\overline{v}^M(t_K))$, $E=(E(0),E(t_1),...,E(t_K))$ and $\mu=(\mu(t_1),...,\mu(t_K))$.

It can be verified that $\Psi$ is jointly continuous in $(\xi,\overline{X},\overline{v})$ and strictly concave in $\xi$. For given $(\overline{X},\overline{v})$, denote $\Tilde{\xi}$ as the maximizer, since it is just solving a linear system in order to find $\Tilde{\xi}$, $\Tilde{\xi}(\overline{X},\overline{v})$ is jointly continuous in $(\overline{X},\overline{v})$. Furthermore, if we denote $\Tilde{\Psi}$ as the maximum, then $\Tilde{\Psi}$ is jointly continuous in $(\overline{X},\overline{v})$.

Denote the optimal strategy of LT by $\xi^{M,*}$ when there are $M$ HFTs and each adopting the MFG strategy. The goal is to show that $\vert J^M_{LT}(\xi^*;v_1^*,...,v_M^*)-J^M_{LT}(\xi^{M,*};v_1^*,...,v_M^*)\vert \rightarrow 0$ as $M\rightarrow \infty$.

Observing that
\begin{equation*}
\vert J^M_{LT}(\xi^*;v_1^*,...,v_M^*)-J^M_{LT}(\xi^{M,*};v_1^*,...,v_M^*)\vert \leq \vert \Psi(\xi^*,\overline{X}^M,\overline{v}^M) - \Psi(\xi^*,E,\mu)\vert + \vert \Tilde{\Psi}(E,\mu) - \Tilde{\Psi}(\overline{X}^M,\overline{v}^M) \vert,
\end{equation*}
by the weak law of large numbers, $\mathbbm{E}\vert \overline{X}^M(t_k)-E(t_k)\vert^2 \rightarrow 0, k=0,1,...,K$ and $\mathbbm{E}\vert \overline{v}^M(t_k)-\mu(t_k)\vert^2 \rightarrow 0, k=1,...,K$, and we reach the conclusion.

\newpage
\bibliographystyle{unsrt}
\bibliography{bibfile}
\addcontentsline{toc}{section}{Reference} 
\end{document}